\documentclass[twoside, 11pt]{article}
\usepackage{jmlr2e}
\usepackage{times}
\usepackage{graphicx}
\usepackage{hyperref}
\usepackage{amsfonts}
\usepackage{amsmath}
\usepackage{subfigure}
\usepackage{natbib}

\usepackage{algorithm}
\usepackage{algorithmic}

\usepackage{times}

\newtheorem{claim}[theorem]{Claim}

\newcommand{\dist}{\mathrm{dist}}
\newcommand{\eat}[1]{}

\begin{document}

\title{Local algorithms for interactive clustering}

\author{\name Pranjal Awasthi \email pawasthi@cs.cmu.edu\\
        \addr Department of Computer Science\\
        Princeton University
        \AND
        \name Maria Florina Balcan \email ninamf@cs.cmu.edu\\
        \addr School of Computer Science\\
        Carnegie Mellon University
        \AND
        Konstantin Voevodski \email kvodski@google.com\\
        \addr Google, NY, USA
}
        
\editor{}
\maketitle        





\begin{abstract}
We study the design of interactive clustering algorithms
for data sets satisfying natural stability
assumptions. Our algorithms start with any initial clustering
and only make local changes in each step; both are desirable features in many applications.
We show that in this constrained setting one can still design provably efficient algorithms that produce
accurate clusterings.
We also show that our algorithms perform well on real-world data.
\end{abstract}

\section{Introduction}

Clustering is usually studied in an unsupervised learning scenario where the goal is to partition the data given pairwise similarity information. Designing provably-good clustering algorithms is challenging because given a similarity function there may be multiple plausible clusterings of the data. Traditional approaches resolve this ambiguity by making assumptions on the data-generation process. For example, there is a large body of work that focuses on clustering data that is generated by a mixture of  Gaussians~\cite{AM05, KSV05, Sanjoy99, AK01, BV08, KalaiMV10, MoitraV10, BelkinS10}. Although this helps define the ``right'' clustering one should be looking for, real-world data rarely comes from such well-behaved probabilistic models.  An alternative approach is to use limited user supervision to help the algorithm reach the desired answer. This approach has been facilitated by the availability of cheap crowd-sourcing tools in recent years. In certain applications such as search and document classification, where users are willing to help a clustering algorithm arrive at their own desired
answer with a small amount of additional prodding, interactive algorithms are very useful.  Hence, the study of interactive clustering algorithms has become an exciting new area of research.

In many practical settings we already start with a fairly good
clustering computed with semi-automated techniques. For example, consider
an online news portal that maintains a large collection of news articles.
The news articles are clustered on the ``back-end,'' and are used to serve several
``front-end'' applications such as recommendations and article profiles. For such a system, we do not have the freedom to compute arbitrary clusterings and present them to the user, which has been proposed in prior work.  But it is still feasible to get limited feedback and \emph{locally} edit the clustering. In particular, we may only want to change the ``bad'' portion revealed by the feedback without changing the rest of the clustering.  Motivated by these observations, in this paper we study the problem of designing local algorithms for interactive clustering.

We propose a theoretical interactive model and provide strong experimental evidence supporting the practical applicability our algorithms. In our model we start with an initial clustering of the data. The algorithm then interacts with the user in stages.  In each stage the user provides limited feedback on the current clustering in the form of {\em split} and {\em merge} requests.  The algorithm then makes a {\em local} edit to the clustering that is consistent with user feedback.  Such edits are aimed at improving the problematic part of the clustering pointed out by the user.  The goal of the algorithm is to quickly converge (using as few requests as possible) to a clustering that the user is happy with - we call this clustering the target clustering.

In our model the user may request a certain cluster to be {\em split} if it is overclustered (intersects two or more clusters in the target clustering). The user may also request to {\em merge} two given clusters if they are underclustered (both intersect the same target cluster).  Note that the user may not tell the algorithm how to perform the split or the merge; such input is infeasible because it requires a manual analysis of all the objects in the corresponding clusters.  We also restrict the algorithm to only make {\em local} changes at each step, i.e., in response we may change only the cluster assignments of the points in the corresponding clusters. If the user requests to split a cluster $C_i$, we may change only the cluster assignments of points in $C_i$, and if the user requests to merge $C_i$ and $C_j$ , we may only reassign the points in $C_i$ and $C_j$.

The split and merge requests described above are a natural form of feedback.  It is easy for users to spot over/underclustering issues and request the corresponding splits/merges (without having to provide any additional information about how to perform the edit). For our model to be practically applicable, we also need to account for noise in the user requests. In particular, if the user requests a merge, only a fraction or a constant number of the points in the two clusters may belong to the same target cluster. Our model~(See Section~\ref{sec:notation}) allows for such noisy user responses.

We study the complexity of algorithms in the above model (the number of edits requests needed to find the target clustering) as a function of the error of the initial clustering. The initial error may be evaluated in terms of {\em underclustering} error $\delta_u$  and {\em overclustering} error $\delta_o$~(See Section~\ref{sec:notation}).  Because the initial error may be fairly small,\footnote{Given 2 different $k$ clusterings, $\delta_u$ and $\delta_o$ is atmost $k^2$.} we would like to develop algorithms whose complexity depends polynomially on $\delta_u$, $\delta_o$ and only logarithmically on $n$, the number of data points.  We show that this is indeed possible given that the target clustering satisfies a natural \emph{stability} property~(see Section~\ref{sec:notation}).  We also develop algorithms for the well-known correlation-clustering objective function~\cite{Bansal04}, which considers pairs of points that are clustered inconsistently with respect to the target clustering~(See Section~\ref{sec:notation}).

As a pre-processing step, our algorithms compute the average-linkage tree of all the points in the data set.  Note that if the target clustering $C^{\ast}$ satisfies our \emph{stability} assumption, then the average-linkage tree must be consistent with $C^{\ast}$ (see Section~\ref{sec:eta-merge}).  However, in practice this average-linkage tree is much too large to be directly interpreted by the users.  Still, given that the edit requests are somewhat consistent with $C^{\ast}$, we can use this tree to efficiently compute local edits that are consistent with the target clustering.  Our analysis then shows that after a limited number of edit requests we must converge to the target clustering.

\noindent \textbf{Our Results}\\
In Section~\ref{sec:eta-merge} we study the $\eta$-merge model. Here we assume that the user may request to split a cluster $C_{i}$ only if $C_{i}$ contains points from several ground-truth clusters.  The user may request to
merge $C_{i}$ and $C_{j}$ only if an $\eta$-fraction of points in each $C_{i}$ and $C_{j}$ are from the same ground-truth cluster.

For this model for $\eta > 0.5$, given an initial clustering with overclustering error $\delta_o$ and underclustering error $\delta_u$, we present an algorithm that requires $\delta_o$ split requests and $2(\delta_u + k) \log_{\frac 1 {1-\eta}} n$ merge requests to find the target clustering, where $n$ is the number of points in the dataset.
For $\eta > 2/3$, given an initial clustering with correlation-clustering error $\delta_{cc}$, we present an algorithm that requires at most $\delta_{cc}$ edit requests to find the target clustering.

In Section~\ref{sec:unrestricted-merge} we relax the condition on the merges and allow the user to request a merge even if $C_i$ and $C_j$ only have a single point from the same target cluster.
We call this the {\em unrestricted-merge} model. Here the requirement on the accuracy of the user response is much weaker and we need to make further assumptions on the nature of the requests. More specifically, we assume that each merge request is chosen uniformly at random from the set of feasible merges.
Under this assumption we present an algorithm that with probability at least $1-\epsilon$ requires $\delta_o$ split requests and $O(\log \frac k {\epsilon} {{\delta}^2_u})$ merge requests to find the target clustering.

We develop several algorithms for performing the split and merge requests under different assumptions.  Each algorithm uses the global average-linkage tree $T_{glob}$ to compute a local clustering edit.  Our splitting procedure finds the node in $T_{glob}$ where the corresponding points are first split in two.  It is more challenging to develop a correct merge procedure, given that we allow ``impure'' merges, where one or both clusters have points from another target cluster (other than the one that they both intersect).  To perform such merges, in the $\eta$-merge model we develop a procedure to extract the ``pure'' subsets of the two clusters, which must only contain points from the same target cluster.  Our procedure searches for the deepest node in $T_{glob}$ that has enough points from both clusters.  In the unrestricted-merge model, we develop another merge procedure that either merges the two clusters or merges them and splits them.  This algorithm always makes progress if the proposed merge is ``impure,'' and makes progress on average if it is ``pure'' (both clusters are subset of the same target cluster).

When the data satisfies stronger assumptions, we present more-scalable split and merge algorithms that do not require any global information.  These procedures compute the edit by only considering the points in the user request and the similarities between them.

In Section~\ref{sec:experiments} we demonstrate the effectiveness of our algorithms on real data.  We show that for the purposes of splitting known over-clusters, the splitting procedure proposed here computes the best splits, when compared to other well-known techniques.
We also test the entire proposed framework on newsgroup documents data, which is quite challenging for traditional unsupervised clustering methods~\cite{Telgarsky12, HellerG05, Dasgupta08, Dai10, Boulis04, Zhong05}.  Still, we find that our algorithms perform fairly well; for larger settings of $\eta$ we are able find the target clustering after a limited number of edit requests.

\noindent \textbf{Related work}\\
Interactive models for clustering studied in previous works~\cite{BalcanB08, AwasthiZ10} were inspired by an analogous model for learning under feedback~\cite{angluin}. In this model, the algorithm can propose a hypothesis to the user~(in this case, a clustering of the data) and get some feedback regarding the correctness of the current hypothesis. As in our model, the feedback considered is split and merge queries. The goal is to design efficient algorithms which use very few queries to the user. 
A critical limitation in prior work is that the algorithm has the freedom to choose any arbitrary clustering as the starting point and can make arbitrary changes at each step. Hence these algorithms may propose a series of ``bad'' clusterings to the user to quickly prune the search space and reach the target clustering.
Our interactive clustering model is in the context of an initial
clustering; we are restricted to only making local changes
to this clustering to correct the errors pointed out by
the user. This model is well-motivated by several applications,
including the Google application described in the experimental section.

Basu et al.~\cite{Basu04} study the problem of minimizing the $k$-means objective in the presence of limited supervision. This supervision is in the form of pairwise {\em must-link} and {\em cannot-link} constraints. They propose a variation of the Lloyd's method for this problem and show promising experimental results. The split/merge requests that we study are a more natural form of interaction because they capture macroscopic properties of a cluster. Getting pairwise constraints among data points involves much more effort on the part of the user and is unrealistic in many scenarios.

The stability property that we consider is a natural generalization of the ``stable marriage'' property~(see Definition~\ref{def:strong-stability}) that has been studied in a variety of previous works~\cite{BalcanBV08, Bryant01}. It is the weakest among the stability properties that have been studied recently such as strict separation and strict threshold separation~\cite{BalcanBV08,Krishnamurthy11}.  This property is known to hold for real-world data.  In particular, ~\cite{Voevodski11}
observed that this property holds for protein sequence data, where similarities are computed with sequence alignment and ground truth clusters correspond to evolutionary-related proteins.

\section{Notation and Preliminaries}
\label{sec:notation}
Given a data set $X$ of $n$ points we define $\mathcal{C} = \lbrace C_{1},C_{2}, \ldots C_{k} \rbrace$ to be a $k$-clustering of $X$ where the $C_i$'s represent the individual clusters. Given two clusterings $\mathcal{C}$ and $\mathcal{C}'$, we define the distance between a cluster $C_i \in \mathcal{C}$ and the clustering $\mathcal{C}'$ as:
\begin{displaymath}
\dist(C_i,\mathcal{C}') = \vert \{ C'_j \in \mathcal{C}' : C'_j \cap C_i \ne \emptyset \} \vert - 1.
\end{displaymath}

This distance is the number of \emph{additional} clusters in $\mathcal{C}'$ that contain points from $C_i$; it evaluates to 0 when all points in $C_i$ are contained in a single cluster in $\mathcal{C}'$.  Naturally, we can then define the distance between $\mathcal{C}$ and $\mathcal{C}'$ as:
$
\dist(\mathcal{C},\mathcal{C}') = \sum_{C_i \in \mathcal{C}} \dist(C_i,\mathcal{C}').
$
Notice that this notion of clustering distance is asymmetric: $\dist(\mathcal{C},\mathcal{C}') \ne \dist(\mathcal{C}',\mathcal{C})$.
Also note that $\dist(\mathcal{C},\mathcal{C}') = 0$ if and only if $\mathcal{C}$ refines $\mathcal{C}'$.
Observe that if $\mathcal{C}$ is the ground-truth clustering, and $\mathcal{C}'$ is a proposed clustering, then $\dist(\mathcal{C},\mathcal{C}')$ can be considered an \emph{underclustering error}, and $\dist(\mathcal{C}',\mathcal{C})$ an \emph{overclustering error}.

An underclustering error is an instance of several clusters in a proposed clustering containing points from the same ground-truth cluster; this ground-truth cluster is said to be \emph{underclustered}.  Conversely, an overclustering error is an instance of points from several ground-truth clusters contained in the same cluster in a proposed clustering; this proposed cluster is said to be \emph{overclustered}.
In the following sections we use $\mathcal{C}^{\ast} = \lbrace C^{\ast}_{1},C^{\ast}_{2}, \ldots C^{\ast}_{k} \rbrace$ to refer to the ground-truth clustering, and use $\mathcal{C}$ to refer to a proposed clustering.  We use $\delta_{u}$ to refer to the underclustering error of a proposed clustering, and $\delta_{o}$ to refer to the overclustering error.  In other words, we have $\delta_{u} = \dist(\mathcal{C}^{\ast},\mathcal{C})$ and $\delta_{o} = \dist(\mathcal{C},\mathcal{C}^{\ast})$.  We also use $\delta$ to denote the sum of the two errors: $\delta = \delta_{u} + \delta_{o}$.  We call $\delta$ the \emph{under/overclustering error}, and use the $\delta(\mathcal{C}, \mathcal{C}^{\ast})$ to refer to the error of $\mathcal{C}$ with respect to $\mathcal{C}^{\ast}$.

We also observe that we can define the distance between two clusterings using the \emph{correlation-clustering} objective function. Given a proposed clustering $\mathcal{C}$, and a ground-truth clustering $\mathcal{C}^{\ast}$, we define the correlation-clustering error $\delta_{cc}$ as the number of (ordered) pairs of points that are clustered \emph{inconsistently} with $\mathcal{C}^{\ast}$:

\begin{displaymath}
\delta_{cc} = \vert \lbrace (u,v) \in X \times X : c(u,v) \ne c^{\ast}(u,v)  \rbrace \vert,
\end{displaymath}
where $c(u,v) = 1$ if $u$ and $v$ are in the same cluster in $\mathcal{C}$, and 0 otherwise; $c^{\ast}(u,v) = 1$ if $u$ and $v$ are in the same cluster in $\mathcal{C}^{\ast}$, and 0 otherwise.

Note that we may divide the correlation-clustering error $\delta_{cc}$ into overclustering component $\delta_{cco}$ and underclustering component $\delta_{ccu}$:

\begin{displaymath}
\delta_{cco} = \vert \lbrace (u,v) \in X \times X : c(u,v) = 1 \textrm{ and } c^{\ast}(u,v) = 0  \rbrace \vert
\end{displaymath}
\begin{displaymath}
\delta_{ccu} = \vert \lbrace (u,v) \in X \times X : c(u,v) = 0 \textrm{ and } c^{\ast}(u,v) = 1  \rbrace \vert
\end{displaymath}

In our formal analysis we model the user as an oracle that provides edit requests.

\begin{definition}[Local algorithm]
 We say that an interactive clustering algorithm is {\em local} if in each iteration only the cluster assignments of points involved in the oracle request may be changed.  If the oracle requests to split $C_i$, the algorithm may only reassign the points in $C_i$.  If the oracle requests to merge $C_i$ and $C_j$, the algorithm may only reassign the points in $C_i \cup C_j$.
\end{definition}

We next formally define the properties of a clustering that we study in this work.

\begin{definition}[Stability]
\label{def:strong-stability}
Given a clustering $\mathcal{C} = \{C_1, C_2, \cdots C_k\}$ over a domain $X$ and a similarly function $S: X \times X \mapsto \Re$, we say that $\mathcal{C}$ satisfies stability with respect to $S$ if for all $i \ne j$, and for all $A \subset C_i$ and $A' \subseteq C_j$, $S(A, C_i \setminus A) > S(A,A')$, where for any two sets $A,A'$, $S(A,A') = E_{x \in A, y \in A'} S(x,y)$.
\end{definition}

 In our analysis, we assume that the ground-truth clustering satisfies stability, and we have access to the corresponding similarity function.  In addition, we also study the following stronger properties of a clustering, which were first introduced in ~\cite{BalcanBV08}.

\begin{definition}[Strict separation]
Given a clustering $C = \{C_1, C_2, \cdots C_k\}$ over a domain $X$ and a similarly function $S: X \times X \mapsto \Re$, we say that $C$ satisfies strict separation with respect to $S$ if for all $i \ne j$, $x,y \in C_i$ and $z \in C_j$, $S(x,y) > S(x,z)$.
\end{definition}

\begin{definition}[Strict threshold separation]
Given a clustering $C = \{C_1, C_2, \cdots C_k\}$ over a domain $X$ and a similarly function $S: X \times X \mapsto \Re$, we say that $C$ satisfies strict threshold separation with respect to $S$ if there exists a threshold $t$ such that, for all $i$, $x,y \in C_i$, $S(x,y) > t$, and, for all $i \ne j$, $x \in C_i, y \in C_j$, $S(x,y) \le t$.
\end{definition}

Clearly, \emph{strict separation} and \emph{strict threshold separation} imply \emph{stability}.

In order for our algorithms to make progress, the oracle requests must be somewhat consistent with the target clustering.

\begin{definition}[$\eta$-merge model]
In the $\eta$-merge model the oracle requests have the following properties

\noindent $split(C_i)$: $C_i$ contains points from two or more target clusters.

\noindent $merge(C_i , C_j)$: At least an $\eta$-fraction of the points in each $C_i$ and $C_j$ belong to the same target cluster.

\end{definition}

\begin{definition}[Unrestricted-merge model]
In the unrestricted-merge model the oracle requests have the following properties

\noindent $split(C_i)$: $C_i$ contains points from two or more target clusters.

\noindent $merge(C_i , C_j)$: At least $1$ point in each $C_i$ and $C_j$ belongs to the same target cluster.
\end{definition}

Note that the assumptions about the nature of the split requests are the same in both models.  In the $\eta$-merge model, the oracle may request to merge two clusters if both have a \emph{constant fraction} of points from the same target cluster.  In the unrestricted-merge model, the oracle may request to merge two clusters if both have \emph{some} points from the same target cluster.

\subsection{Generalized clustering error}
\label{sec:generalized-clustering-error}

We observe that the clustering errors defined in the previous section may be generalized by abstracting their common properties.  We define the following properties of a \emph{natural} clustering error, which is any integer-valued error that decreases when we locally improve the proposed clustering.

\begin{definition}
\label{def:natural-clustering-error}
We say that a clustering error is \emph{natural} if it satisfies the following properties:
\begin{itemize}
\item  If there exists a cluster $C_{i}$ that contains points from $C^{\ast}_{j}$ and some other ground-truth cluster(s), then splitting this cluster into two clusters $C_{i,1} = C_{i} \cap C^{\ast}_{j}$ (which contains only points from $C^{\ast}_{j}$), and $C_{i,2} = C_{i} - C_{i,1}$ (which contains the other points) must decrease the error.
\item  If there exists two clusters that contain only points from the same target cluster, then merging them into one cluster must decrease the error.
\item The error is integer-valued.
\end{itemize}
\end{definition}

We expect a lot of definitions of clustering error to satisfy the above criteria (especially the first two properties), in addition to other domain-specific criteria. Clearly, the under/overclustering error $\delta = \delta_{u} + \delta_{o}$ and the correlation-clustering error $\delta_{cc}$ are also \emph{natural} clustering errors~(Claim~\ref{claim:natural-errors}).  As before, for a \emph{natural} clustering error $\gamma$, a proposed clustering $\mathcal{C}$ and the target clustering $\mathcal{C}^{\ast}$, we will use $\gamma(\mathcal{C}, \mathcal{C}^{\ast})$ to denote the magnitude of the error of $\mathcal{C}$ with respect to $\mathcal{C}^{\ast}$.

Moreover, it is easy to see that the under/overclustering error defined in the previous section is the lower-bound on any \emph{natural} clustering error~(Theorem~\ref{thm:generalized-clustering-error}).
\begin{claim}
\label{claim:natural-errors}
The under/overclustering error and the correlation clustering error satisfy Definition~\ref{def:natural-clustering-error} and hence are natural clustering errors.
\end{claim}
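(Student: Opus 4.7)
The plan is to verify each of the three bullets in Definition~\ref{def:natural-clustering-error} directly for both error measures; all arguments amount to bookkeeping of how pairs or cluster-intersection counts change under a single local edit, and the integrality property is immediate in both cases since each error is defined as a cardinality.

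For the under/overclustering error $\delta = \delta_u + \delta_o$, I would analyze the split operation first. If we split $C_i$ into $C_{i,1}=C_i\cap C^{\ast}_j$ and $C_{i,2}=C_i\setminus C_{i,1}$, then $\dist(C_{i,1},\mathcal{C}^{\ast})=0$ and $\dist(C_{i,2},\mathcal{C}^{\ast})$ is exactly one less than $\dist(C_i,\mathcal{C}^{\ast})$ (we lose the intersection with $C^{\ast}_j$), so $\delta_o$ decreases by $1$. On the underclustering side, every $C^{\ast}_k$ that met $C_i$ now meets exactly one of $C_{i,1}, C_{i,2}$ (namely $C_{i,1}$ if $k=j$, $C_{i,2}$ otherwise), so the count $|\{C_\ell:C_\ell\cap C^{\ast}_k\ne\emptyset\}|$ is unchanged for every $k$, and $\delta_u$ stays the same. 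Hence $\delta$ strictly decreases by exactly $1$, giving bullet one.

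For the merge of two clusters $C_i,C_j\subseteq C^{\ast}_l$ (bullet two), both clusters already contribute $0$ to $\delta_o$, and so does their union, leaving $\delta_o$ unchanged. For underclustering, the target cluster $C^{\ast}_l$ previously intersected both $C_i$ and $C_j$ and now intersects only $C_i\cup C_j$ (no other $C^{\ast}_k$ is affected, since the merged points all lie in $C^{\ast}_l$), so its contribution to $\delta_u$ drops by $1$ and $\delta$ strictly decreases. This verifies the under/overclustering portion of the claim.

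For the correlation-clustering error I would proceed pair by pair. Under the split, only pairs with both endpoints in $C_i$ can change, and among these only the cross pairs $(u,v)$ with $u\in C_{i,1}, v\in C_{i,2}$ move from $c(u,v)=1$ to $c(u,v)=0$; since $u\in C^{\ast}_j$ and $v\notin C^{\ast}_j$, we have $c^{\ast}(u,v)=0$, so each such (previously inconsistent) pair becomes consistent. Because both $C_{i,1}$ and $C_{i,2}$ are nonempty by hypothesis, at least $2|C_{i,1}||C_{i,2}|\ge 2$ ordered inconsistencies are removed, and no new ones are introduced. Similarly, under the merge of two subsets of $C^{\ast}_l$, the only affected pairs are the cross pairs $(u,v)$ with $u\in C_i$, $v\in C_j$, which switch from $c=0,c^{\ast}=1$ (inconsistent) to $c=c^{\ast}=1$ (consistent), yielding a strict decrease by $2|C_i||C_j|\ge 2$. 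There is no subtle step here; the only thing to be careful about is the ordered-pair convention in the definition of $\delta_{cc}$, which affects the size of the decrease but not its sign. Integrality of $\delta_{cc}$ is again immediate.
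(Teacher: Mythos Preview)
Your proof is correct. The paper does not provide a proof for this claim; it is asserted without argument, so there is no paper proof to compare against. Your direct verification of each bullet of Definition~\ref{def:natural-clustering-error} via bookkeeping on $\delta_o$, $\delta_u$, and the affected ordered pairs in $\delta_{cc}$ is precisely the expected route, and every step checks out: in particular, your observation that after the split each $C^{\ast}_k$ meets exactly one of $C_{i,1},C_{i,2}$ is the key point ensuring $\delta_u$ is unchanged, and your handling of the ordered-pair convention for $\delta_{cc}$ is correct.
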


\begin{theorem}
\label{thm:generalized-clustering-error}
For any \emph{natural} clustering error $\gamma$, any proposed clustering $\mathcal{C}$, and any target clustering $\mathcal{C}^{\ast}$, $\gamma(\mathcal{C}, \mathcal{C}^{\ast}) \ge \delta(\mathcal{C}, \mathcal{C}^{\ast})$.
\end{theorem}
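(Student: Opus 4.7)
The plan is to construct an explicit sequence of local edits that transforms $\mathcal{C}$ into $\mathcal{C}^*$, in such a way that each edit is licensed by one of the two structural bullets in Definition~\ref{def:natural-clustering-error} and therefore drops $\gamma$ by at least one (using integrality). The natural waypoint is the common refinement $\mathcal{C}^\# = \{C_i \cap C_j^* : C_i \cap C_j^* \neq \emptyset\}$. Writing $N$ for the number of non-empty intersections $|\{(i,j) : C_i \cap C_j^* \neq \emptyset\}|$, the definition of $\dist$ gives $\delta_o = N - |\mathcal{C}|$ and $\delta_u = N - |\mathcal{C}^*|$, so $\delta(\mathcal{C}, \mathcal{C}^*) = 2N - |\mathcal{C}| - |\mathcal{C}^*|$.

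For the split phase (from $\mathcal{C}$ to $\mathcal{C}^\#$): for each $C_i$ that intersects at least two target clusters, pick any $C_j^*$ with $C_i \cap C_j^* \neq \emptyset$ and apply the first bullet of Definition~\ref{def:natural-clustering-error} to split $C_i$ into $C_i \cap C_j^*$ and $C_i \setminus C_j^*$; then iterate on the residual piece as long as it still meets points from more than one target cluster. This decomposes $C_i$ into its $|\{j : C_i \cap C_j^* \neq \emptyset\}|$ intersections using one fewer split, summing to $N - |\mathcal{C}| = \delta_o$ splits overall, each strictly reducing $\gamma$. For the merge phase (from $\mathcal{C}^\#$ to $\mathcal{C}^*$): for each target cluster $C_j^*$, repeatedly merge any two current clusters that lie inside $C_j^*$. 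Both arguments of every such merge contain only points from $C_j^*$ (and so does their union), so the second bullet of Definition~\ref{def:natural-clustering-error} applies at every step, yielding $N - |\mathcal{C}^*| = \delta_u$ merges in total, each again dropping $\gamma$ by at least one.

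Chaining these $\delta_o + \delta_u$ strict integer drops along the path gives $\gamma(\mathcal{C}, \mathcal{C}^*) \geq \gamma(\mathcal{C}^*, \mathcal{C}^*) + \delta_o + \delta_u$, and using the mild convention that an error is non-negative on the target (equivalently $\gamma(\mathcal{C}^*, \mathcal{C}^*) \geq 0$, which both instantiations in Claim~\ref{claim:natural-errors} satisfy) we conclude $\gamma(\mathcal{C}, \mathcal{C}^*) \geq \delta(\mathcal{C}, \mathcal{C}^*)$. I expect the main subtlety to be not any single inequality but the bookkeeping: checking that the hypotheses of the two bullets really are met at every intermediate clustering produced by the construction, and that the split and merge counts hit $\delta_o$ and $\delta_u$ exactly rather than merely up to a constant factor. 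Both follow from the inductive structure of the decomposition via $\mathcal{C}^\#$ described above.
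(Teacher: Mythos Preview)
Your proposal is correct and follows essentially the same route as the paper: split each $C_i$ down to its intersections with the $C_j^*$ (exactly $\delta_o$ splits, each licensed by the first bullet), then merge the resulting pure pieces within each target cluster (exactly $\delta_u$ merges, each licensed by the second bullet), and conclude via integrality. Your explicit use of the common refinement $\mathcal{C}^\#$ and the count $N$ makes the bookkeeping slightly cleaner, and you are right to flag that the final step tacitly uses $\gamma(\mathcal{C}^*,\mathcal{C}^*)\ge 0$, a point the paper leaves implicit.
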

\begin{proof}
Given any proposed clustering $\mathcal{C}$, and any target clustering $\mathcal{C}^{\ast}$, we may transform  $\mathcal{C}$ into $\mathcal{C}^{\ast}$ via the following sequence of edits.  First, we split all over-clustering instances using the following iterative procedure: while there exists a cluster $C_{i}$ that contains points from $C^{\ast}_{j}$ and some other ground-truth cluster(s), we split it into two clusters $C_{i,1} = C_{i} \cap C^{\ast}_{j}$ and $C_{i,2} = C_{i} - C_{i,1}$.  Note that this iterative split procedure will require exactly $\delta_{o}$ split edits, where $\delta_{o}$ is the initial overclustering error.  Then, when we are left with only ``pure'' clusters (each intersects exactly one target cluster), we merge all under-clustering instances using the following iterative procedure: while there exist two clusters $C_{i}$ and $C_{j}$ that contain only points from the same target cluster, merge $C_{i}$ and $C_{j}$.  Note that this iterative merge procedure will require exactly $\delta_{u}$ merge edits, where $\delta_{u}$ is the initial underclustering error.  Let us use $\gamma$ to refer to any \emph{natural} clustering error of $\mathcal{C}$ with respect to $\mathcal{C}^{\ast}$.  By the first property of \emph{natural} clustering error, each split must have decreased $\gamma$ by at least one.  By the second property, each merge must have decreased $\gamma$ by at least one as well.  Given that we performed exactly $\delta = \delta_{o} + \delta_{u}$ edits, it follows that initially $\gamma(\mathcal{C}, \mathcal{C}^{\ast})$ must have been at least $\delta$.
\end{proof}
For additional discussion about comparing clusterings see \cite{meila07}.  Note that several criteria discussed in~\cite{meila07} satisfy our first two properties (for a similarity measure we may replace "must decrease the error" with "must increase the similarity").  In addition, the Rand and Mirkin criteria discussed in \cite{meila07} are closely related to the correlation clustering error defined here (all three measures are a function of the number of pairs of points that are clustered incorrectly).

\section{The $\eta$-merge model}
\label{sec:eta-merge}
In this section we describe and analyze the algorithms in the $\eta$-merge model.  As a pre-processing step for all our algorithms, we first run the hierarchical average-linkage algorithm on all the points in the data set to compute the global average-linkage tree, which we denote by $T_{glob}$.  The leaf nodes in this tree contain the individual points, and the root node contains all the points.  The tree is computed in a bottom-up fashion: starting with the leafs in each iteration the two most similar nodes are merged, where the similarity between two nodes $N_{1}$ and $N_{2}$ is the average similarity between points in $N_{1}$ and points in $N_{2}$.

We assign a label ``impure'' to each cluster in the initial clustering; these labels are used by the merge procedure.  Given a split or merge request, a local clustering edit is computed from the global tree $T_{glob}$ as described in Figure~\ref{fig:split-average-linkage} and Figure~\ref{fig:merge-average-linkage-relaxed}.

To implement Step 1 in Figure~\ref{fig:split-average-linkage}, we start at the root of $T_{glob}$ and ``follow'' the points in $C_i$ down one of the branches until we find a node that splits them.  In order to implement Step 2 in Figure~\ref{fig:merge-average-linkage-relaxed}, it suffices to start at the root of $T_{glob}$ and perform a post-order traversal, only considering nodes that have ``enough'' points from both clusters, and return the first output node.
\begin{figure}[h]
\caption{Split procedure}
\label{fig:split-average-linkage}
\begin{center}
\fbox{
\begin{minipage}{0.95 \columnwidth}
{\bf Algorithm}: \textsc{Split Procedure} \smallskip \\
~~{\bf Input}: Cluster $C_i$, global average-linkage tree $T_{glob}$.
\begin{enumerate}
\item Search $T_{glob}$ to find the node $N$ at which the set of points in $C_i$ are first split in two.
\item Let $N_1$ and $N_2$ be the children of $N$.  Set $C_{i,1} = N_1 \cap C_{i}$, $C_{i,2} = N_2 \cap C_{i}$.
\item Delete $C_i$ and replace it with $C_{i,1}$ and $C_{i,2}$. Mark the two new clusters as ``impure''.
\end{enumerate}
\end{minipage}
}
\end{center}
\end{figure}

The split procedure is fairly intuitive: if the average-linkage tree is consistent with the target clustering, it suffices to find the node in the tree where the corresponding points are first split in two.  It is more challenging to develop a correct merge procedure: note that Step 2 in Figure~\ref{fig:merge-average-linkage-relaxed} is only correct if $\eta > 0.5$, which ensures that if two nodes in the tree have more than an $\eta$-fraction of the points from $C_{i}$ and $C_{j}$, one must be an ancestor of the other.  If the average-linkage tree is consistent with the ground-truth, then clearly the node equivalent to the corresponding target cluster (that $C_{i}$ and $C_{j}$ both intersect) will have enough points from $C_{i}$ and $C_{j}$; therefore the node that we find in Step 2 must be this node or one of its descendants.  In addition, because our merge procedure replaces two clusters with three, we require pure/impure labels for the merge requests to terminate: ``pure'' clusters may only have other points added to them, and retain this label throughout the execution of the algorithm.
\begin{figure}
\caption{Merge procedure}
\label{fig:merge-average-linkage-relaxed}
\begin{center}
\fbox{
\begin{minipage}{0.95 \columnwidth}
{\bf Algorithm}: \textsc{Merge Procedure} \smallskip \\
~~{\bf Input}: Clusters $C_i$ and $C_j$, global average-linkage tree $T_{glob}$.

\begin{enumerate}
\item If $C_i$ is marked as ``pure'' set $\eta_1 = 1$ else set $\eta_1 = \eta$. Similarly set $\eta_2$ for $C_j$.
\item Search $T_{glob}$ for a node of maximal depth $N$ that contains \emph{enough} points from $C_i$ and $C_j$: $|N \cap C_i| \ge \eta_1 |C_i|$ and $|N \cap C_j| \ge \eta_2 |C_j|$.
\item Replace $C_i$ by $C_i \setminus N$, replace $C_j$ by $C_j \setminus N$.
\item Add a new cluster containing $N \cap (C_i \cup C_j)$, mark it as ``pure''.
\end{enumerate}
\end{minipage}
}
\end{center}
\end{figure}

We now state the performance guarantee for these split and merge algorithms.

\begin{theorem}
\label{thm:strong-stability-relaxed}
Suppose the target clustering satisfies stability, and the initial clustering has overclustering error $\delta_o$ and underclustering error $\delta_u$.  In the $\eta$-merge model, for any $\eta > 0.5$, the algorithms in Figure~\ref{fig:split-average-linkage} and Figure~\ref{fig:merge-average-linkage-relaxed} require at most $\delta_o$ split requests and $2(\delta_u + k) \log_{\frac 1 {1-\eta}} n$ merge requests to find the target clustering.
\end{theorem}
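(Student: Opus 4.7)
My plan rests on the fact that, under the stability assumption, the precomputed average-linkage tree $T_{glob}$ is laminar with respect to $\mathcal{C}^{\ast}$: every $C^{\ast}_l$ appears as a node $v_l$ of $T_{glob}$, so any node $N$ of $T_{glob}$ either lies inside some $v_l$ or strictly contains it. (This is the standard consequence of the stability inequality and is already invoked in Section~\ref{sec:eta-merge}.) Given this structure, both correctness and complexity reduce to tracking how the intersection sizes $s_{i,l}:=|C_i\cap C^{\ast}_l|$ evolve across requests.

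The split count is essentially immediate. When splitting $C_i$ the algorithm finds the node $N$ of $T_{glob}$ where $C_i$ first breaks apart, and because each $v_l$ intersecting $C_i$ lies entirely inside one of $N$'s children, each target cluster touching $C_i$ ends up in exactly one of $C_{i,1},C_{i,2}$. Hence each split is ``pure'' at the target-cluster level, $\delta_o$ drops by at least one, and at most $\delta_o$ splits are ever requested.

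For merge correctness, the condition $\eta>0.5$ forces the set of $T_{glob}$-nodes holding at least $\eta_1|C_i|$ of $C_i$ to form a chain (two disjoint subtrees cannot each hold more than half), and likewise for $C_j$, so the deepest common candidate $N$ is well-defined. Since $|C_i\cap C^{\ast}_l|\ge\eta_1|C_i|$ for the target cluster $C^{\ast}_l$ implied by the request (and similarly for $C_j$), the node $v_l$ is itself a candidate, and the maximum-depth choice therefore forces $N\subseteq v_l$. Consequently $N\cap(C_i\cup C_j)\subseteq C^{\ast}_l$ and the newly added cluster is correctly labelled pure; if $C_i$ is already pure then $\eta_1=1$ forces $C_i\subseteq N$, so $C_i$ is absorbed entirely.

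To bound the merge count I plan a token argument on the set $\mathcal{P}=\{(C_i,C^{\ast}_l):C_i\text{ impure},\ s_{i,l}>0\}$. The double count $|\mathcal{P}|=\sum_l|\{i:C_i\cap C^{\ast}_l\ne\emptyset\}|$ gives $|\mathcal{P}|_0=k+\delta_u$, and splits preserve both $|\mathcal{P}|$ and each individual $s_{i,l}$ because each $C^{\ast}_l$ goes entirely into one child. The containment $N\subseteq v_l$ yields $s_{i,l}'=s_{i,l}-|N\cap C_i|\le s_{i,l}-\eta|C_i|\le(1-\eta)s_{i,l}$, so each pair can participate in at most $\log_{1/(1-\eta)}n$ shrinking events before $s_{i,l}$ drops below one and the pair disappears. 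Awarding each pair that many tokens, every impure--impure merge spends at least $2$ tokens (one per cluster) and every pure--impure merge spends at least $1$, so the two types together number at most $(k+\delta_u)\log_{1/(1-\eta)}n$. Since every pure cluster consumed in a pure--pure merge was created by an earlier impure--impure or pure--impure merge, pure--pure merges are no more than the others, yielding the total bound $2(k+\delta_u)\log_{1/(1-\eta)}n$. The main obstacle I anticipate is keeping this accounting tight enough to recover exactly the stated factor $2$: one must verify that pair counts never grow, that each token-consuming merge geometrically shrinks $s_{i,l}$ even when pure labels interact, and that pure--pure events can be charged cleanly to the earlier merges that created the pure clusters they consume.
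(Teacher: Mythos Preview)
Your proposal is correct and follows essentially the same route as the paper's own proof: the split bound via clean splits decreasing $\delta_o$, the merge-correctness via $\eta>0.5$ forcing the candidate nodes to lie on a chain containing $v_l$, and the merge count via the potential $\mathcal{P}=\{(C_i,C^*_l):C_i\text{ impure},\,C_i\cap C^*_l\ne\emptyset\}$ with $|\mathcal{P}|_0=\delta_u+k$ and geometric shrinkage of each $s_{i,l}$, then bounding pure merges by the number of impure merges that create pure clusters. The paper phrases the last part as ``each impure merge depletes some $P_i$'' rather than as a token scheme, but the accounting is identical; one small point you leave implicit (and which the paper states in one line) is that merges cannot increase $\delta_o$, needed so that the $\delta_o$ split budget is never replenished.
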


In order to prove the theorem, we must do some preliminary analysis.  First, we observe that if the target clustering satisfies stability, then every node of the average-linkage tree must be \emph{laminar} (consistent) with respect to the ground-truth clustering.

Informally, each node in a hierarchical clustering tree $T$ is \emph{laminar} (consistent) with respect to the clustering $\mathcal{C}$ if for each cluster $C_{i} \in \mathcal{C}$, the points in $C_{i}$ are first grouped together in $T$ before they are grouped with points from any other cluster $C_{j \ne i}$.  We formally state and prove these observations next.

\begin{definition}[Laminar]
A node $N$ is laminar with respect to a clustering $\mathcal{C}$ if for each cluster $C_{i} \in \mathcal{C}$ we have either $N \cap C_{i} = \emptyset$, $N \subseteq C_{i}$, or $C_{i} \subseteq N$.
\end{definition}

\begin{lemma}
\label{lem:laminar-average-linkage}
Suppose the ground-truth clustering $\mathcal{C}^{\ast}$ over a domain $X$ satisfies stability with respect to a similarity function $S$.  Let $T$ be the average-linkage tree for $X$ constructed with $S$.  Then every node in $T$ is laminar w.r.t. $\mathcal{C}^{\ast}$.
\end{lemma}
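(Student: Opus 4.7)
The plan is to prove Lemma~\ref{lem:laminar-average-linkage} by induction on the steps of the agglomerative average-linkage algorithm. Since the algorithm is bottom-up, it suffices to maintain the invariant that at every stage of the construction, every currently active node (every root of a subtree produced so far) is laminar with respect to $\mathcal{C}^{\ast}$. The base case is trivial because the leaves are singletons, and each singleton is clearly laminar. For the inductive step, I would assume the invariant holds and consider the pair $(N_1, N_2)$ of active nodes that the algorithm merges next, i.e., the pair with the maximum average similarity, and argue that $N := N_1 \cup N_2$ is laminar.

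Suppose for contradiction that $N$ is not laminar, so there exists some $C^{\ast} \in \mathcal{C}^{\ast}$ with $N \cap C^{\ast} \ne \emptyset$, $N \not\subseteq C^{\ast}$, and $C^{\ast} \not\subseteq N$. A short case analysis using the laminarity of $N_1$ and $N_2$ individually reduces this, up to swapping the roles of $N_1$ and $N_2$, to the following situation: $N_1 \subsetneq C^{\ast}$ (a proper subset), $N_2 \cap C^{\ast} = \emptyset$, and $C^{\ast} \setminus N \ne \emptyset$. Because the active nodes form a partition of $X$ and each is laminar, the set $C^{\ast} \setminus N_1$ must decompose as a disjoint union of active nodes $N_1^{(1)}, \ldots, N_1^{(r)}$, each of which is a proper subset of $C^{\ast}$ and none of which equals $N_1$ or $N_2$.

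Now I would invoke the stability of $\mathcal{C}^{\ast}$ (Definition~\ref{def:strong-stability}) with $A = N_1 \subset C^{\ast}$. If $N_2$ is contained in a single ground-truth cluster $C_j^{\ast}$ with $j \ne i$, stability applied to $A' = N_2$ yields $S(N_1, C^{\ast} \setminus N_1) > S(N_1, N_2)$ directly. More generally, $N_2$ need only be laminar, so it may be a union of several complete ground-truth clusters $\{C_j^{\ast}\}_{j \in J}$ together with at most one that it is properly contained in; in that case I would write $S(N_1, N_2)$ as the $|\cdot|$-weighted average of the $S(N_1, C_j^{\ast})$ (and of $S(N_1, A')$ for the subset piece), apply stability to each term, and conclude the same strict inequality $S(N_1, C^{\ast} \setminus N_1) > S(N_1, N_2)$. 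Expanding $S(N_1, C^{\ast} \setminus N_1)$ as the weighted average of $S(N_1, N_1^{(k)})$ with weights $|N_1^{(k)}|$, some index $k$ must satisfy $S(N_1, N_1^{(k)}) \ge S(N_1, C^{\ast} \setminus N_1) > S(N_1, N_2)$. This contradicts the choice of $(N_1, N_2)$ as the most-similar active pair.

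The main obstacle I anticipate is a careful treatment of the case where $N_2$ spans several target clusters, since the definition of stability only bounds $S(N_1, A')$ for $A'$ contained in a single $C_j^{\ast}$. Handling this via a weighted-average decomposition of $S(N_1, N_2)$ is routine but must be stated precisely so that the strict inequality survives the averaging. The other minor care-point is verifying in the initial case analysis that laminarity of $N_1$ and $N_2$ really forces the scenario $N_1 \subsetneq C^{\ast}$, $N_2 \cap C^{\ast} = \emptyset$; every other sub-case either makes $N$ laminar immediately or collapses to this one after relabeling.
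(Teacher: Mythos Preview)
Your argument is correct and is precisely the standard proof the paper defers to \cite{BalcanBV08} for, matching the one-line intuition given there (first non-laminar merge combines a strict subset of one ground-truth cluster with a set disjoint from it, contradicting stability). One small sharpening of the care-point you flagged: laminarity of $N_2$ together with $N_2 \cap C^{\ast} = \emptyset$ forces $N_2$ to be \emph{either} contained in a single $C_j^{\ast}$ \emph{or} a disjoint union of complete ground-truth clusters---it cannot mix complete clusters with a proper-subset piece---so the weighted-average decomposition of $S(N_1,N_2)$ is even simpler than you anticipated, and the strict inequality survives immediately.
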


\begin{proof}
The proof of this statement can be found in~\cite{BalcanBV08}. The intuition is that if there is a node in $T$ that is not laminar w.r.t. $C^{\ast}$, then the average-linkage algorithm, at some step, must have merged $A \subset C^{\ast}_i$, with $B \subset C^{\ast}_j$ for some $i \neq j$. However, this will contradict the stability property for the sets $A$ and $B$.
\end{proof}

It follows that the split computed by the algorithm in Figure~\ref{fig:split-average-linkage} must also be consistent with the target clustering; we call such splits \emph{clean}.

\begin{definition}[Clean split]
\label{def:clean-split}
A partition (split) of a cluster $C_{i}$ into clusters $C_{i,1}$ and $C_{i,2}$ is said to be \emph{clean} if $C_{i,1}$ and $C_{i,2}$ are non-empty, and for each ground-truth cluster $C^{\ast}_{j}$ such that $C^{\ast}_{j} \cap C_{i} \ne \emptyset$, either $C^{\ast}_{j} \cap C_{i} = C^{\ast}_{j} \cap C_{i,1}$ or $C^{\ast}_{j} \cap C_{i} = C^{\ast}_{j} \cap C_{i,2}$.
\end{definition}

We now prove the correctness of the split/merge procedures.

\begin{lemma}
\label{lem:relaxed-main-lemma}
If the ground-truth clustering satisfies stability and $\eta > 0.5$ then,
\begin{itemize}
\item[\textbf{a.}] The split procedure in Figure~\ref{fig:split-average-linkage} always produces a clean split.
\item[\textbf{b.}] The new cluster added in Step 4 in Figure~\ref{fig:merge-average-linkage-relaxed} must be ``pure'', i.e., it must contain points from a single ground-truth cluster.
 \end{itemize}
\end{lemma}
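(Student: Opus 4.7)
The plan is to prove both parts by exploiting Lemma~\ref{lem:laminar-average-linkage}: every node $N$ of $T_{glob}$ is laminar with respect to $\mathcal{C}^\ast$, meaning for every target cluster $C^\ast_l$ one of $N\cap C^\ast_l=\emptyset$, $N\subseteq C^\ast_l$, or $C^\ast_l\subseteq N$ holds.

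For (a), let $N$ be the node returned by the split procedure and $N_1,N_2$ its (disjoint) children. Since the split is requested, $C_i$ meets at least two target clusters. Fix any $C^\ast_j$ with $C^\ast_j\cap C_i\ne\emptyset$, and suppose for contradiction that both $N_1$ and $N_2$ contain points of $C^\ast_j$. Laminarity then forces each of $N_1,N_2$ to either be contained in $C^\ast_j$ or to contain $C^\ast_j$; the latter is ruled out for either child because the other (disjoint) child also meets $C^\ast_j$. Hence $N_1,N_2\subseteq C^\ast_j$, so $C_i\subseteq N=N_1\cup N_2\subseteq C^\ast_j$, contradicting that $C_i$ meets multiple target clusters. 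Therefore $C^\ast_j\cap C_i$ lies entirely in $N_1$ or entirely in $N_2$, which is the definition of a clean split.

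For (b), let $C^\ast_l$ be the target cluster that witnesses the legality of the merge on $C_i,C_j$. A short consequence of laminarity is that $T_{glob}$ contains a node $N^\ast$ equal to $C^\ast_l$: take the smallest node containing $C^\ast_l$ as a subset, and argue that strict containment would force one of its children to also contain $C^\ast_l$, contradicting minimality. I then verify that $N^\ast$ qualifies in Step~2. In the impure case $\eta_\bullet=\eta$ and $|C^\ast_l\cap C_\bullet|\ge \eta|C_\bullet|$ by the request guarantee; in the pure case the invariant (maintained inductively via the very claim being proved) that every cluster labeled ``pure'' is contained in a single target cluster gives $C_\bullet\subseteq C^\ast_l$, so $|N^\ast\cap C_\bullet|=|C_\bullet|$ meets the $\eta_\bullet=1$ threshold. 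Now let $N$ be the deepest qualifying node selected in Step~2. Since $\eta>1/2$ we have $\eta_1,\eta_2>1/2$, so the subsets $N\cap C_i$ and $N^\ast\cap C_i$ each contain strictly more than half of $C_i$ and therefore overlap; $N$ and $N^\ast$ thus share a point and must be comparable in the tree. If $N$ were a strict ancestor of $N^\ast$, then $N^\ast$ would itself be a strictly deeper qualifying node and Step~2 could not have selected $N$. Hence $N\subseteq N^\ast=C^\ast_l$, and the new cluster $N\cap(C_i\cup C_j)$ created in Step~4 is contained in $C^\ast_l$, i.e., pure.

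The delicate bit I expect is the bookkeeping for the pure/impure labels in the merge argument: the clean statement requires maintaining, by induction on the number of edits, the invariant that every cluster labeled ``pure'' is contained in a single target cluster. This is established by (b) itself for clusters created by Step~4, is preserved by Step~3 (shrinking a pure cluster keeps it inside its target), and holds vacuously at the start when all clusters are marked impure. Once that invariant is in place, the rest of (b) is just laminarity combined with the majority-overlap argument enabled by $\eta>1/2$.
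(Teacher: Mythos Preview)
Your proof is correct and follows essentially the same strategy as the paper: both parts reduce to the laminarity of $T_{glob}$ established in Lemma~\ref{lem:laminar-average-linkage}. Your argument for (a) reaches the contradiction via $N_1,N_2\subseteq C^\ast_j\Rightarrow C_i\subseteq C^\ast_j$, whereas the paper instead observes that one child would contain points from two distinct target clusters and hence fail laminarity; both are equally valid. For (b) you are more thorough than the paper: you explicitly justify the existence of a tree node equal to $C^\ast_l$, and you set up and maintain the inductive invariant that ``pure''-labeled clusters are genuinely pure, which is needed to handle the $\eta_\bullet=1$ case in Step~2 but is glossed over in the paper's proof.
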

\begin{proof}
\textbf{a.}  For purposes of contradiction, suppose the returned split is not clean: $C_{i,1}$ and $C_{i,2}$ contain points from the same ground-truth cluster $C^{\ast}_{j}$. It must be the case that $C_{i}$ contains points from several ground-truth clusters, which implies that w.l.o.g. $C_{i,1}$ contains points from some other ground-truth cluster ${C}^{\ast}_{l \ne j}$.  This implies that $N_1$ is not laminar w.r.t. $\mathcal{C}^{\ast}$, which contradicts Lemma~\ref{lem:laminar-average-linkage}.
\textbf{b.}  By our assumption, at least $\frac 1 2 |C_{i}|$ points from $C_{i}$ and $\frac 1 2 |C_{j}|$ points from $C_{j}$ are from the same ground-truth cluster $C^{\ast}_{l}$.  Clearly, the node $N'$ in $T_{glob}$ that is equivalent to $C^{\ast}_{l}$ (which contains all the points in $C^{\ast}_{l}$ and no other points) must contain \emph{enough} points from $C_{i}$ and $C_{j}$, and only ascendants and descendants of $N'$ may contain more than an $\eta > 1/2$ fraction of points from both clusters.  Therefore, the node $N$ that we find with a depth-first search must be $N'$ or one of its descendants, and will only contain points from $C^{\ast}_{l}$.
\end{proof}

Using the above lemma, we can prove the bounds on the split and merge requests stated in Theorem~\ref{thm:strong-stability-relaxed}. 

\begin{proof}[Proof of Theorem~\ref{thm:strong-stability-relaxed}]

We first give a bound on the number of splits.  Observe that each split reduces the overclustering error by exactly 1.  To see this, suppose we execute Split($C_{1}$), and call the resulting clusters $C_{2}$ and $C_{3}$.   Call $\delta_{1}$ the overclustering error before the split, and $\delta_{2}$ the overclustering error after the split.  Let's use $k_{1}$ to refer to the number of ground-truth clusters that intersect $C_{1}$, and define $k_{2}$ and $k_{3}$ similarly.  Due to the \emph{clean split} property, no ground-truth cluster can intersect both $C_{2}$ and $C_{3}$, therefore it must be the case that $k_{2} + k_{3} = k_{1}$.  Also, clearly $k_{2}, k_{3} > 0$.  Therefore we have:
\begin{eqnarray*}
\delta_{2} & = & \delta_{1} - (k_{1} - 1) + (k_{2} - 1) + (k_{3} - 1)\\
& = & \delta_{1} - k_{1} + (k_{2} + k_{3}) - 1\\
& = & \delta_{1} - 1.
\end{eqnarray*}

Merges cannot increase overclustering error.  Therefore the total number of splits may be at most $\delta_{o}$.  We next give the arguments about the number of impure and pure merges.

We first argue that we cannot have too many ``impure'' merges before each cluster in $C$ is marked ``pure.''  Consider the clustering $P = \{C_{i} \cap C^{\ast}_{j} \ \vert \ C_{i} \textrm{ is marked ``impure'' and } C_{i} \cap C^{\ast}_{j} \ne \emptyset\}$.  Clearly, at the start $\vert P \vert = \delta_{u} + k$.  A merge does not increase the number of clusters in $P$, and the splits do not change $P$ at all (because of the \emph{clean split} property).  Moreover, each impure merge (a merge of two impure clusters or a merge of a pure and an impure cluster) \emph{depletes} some $P_{i} \in P$ by moving $\eta \vert P_{i} \vert$ of its points to a pure cluster.  Clearly, we can then have at most $\log_{1/(1-\eta)} n$ merges depleting each $P_{i}$.  Since each impure merge must deplete some $P_{i}$, it must be the case that we can have at most  $(\delta_{u} + k) \log_{1/(1-\eta)} n$ impure merges in total.

Notice that a pure cluster can only be created by an impure merge, and there can be at most one pure cluster created by each impure merge.  Clearly, a pure merge removes exactly one pure cluster.  Therefore the number of pure merges may be at most the total number of pure clusters that are created, which is at most the total number of impure merges.  Therefore the total number of merges must be less than $2(\delta_{u} + k) \log_{1/(1-\eta)} n$.
\end{proof}

We can also restate the run-time bound in Theorem~\ref{thm:strong-stability-relaxed} in terms of any \emph{natural} clustering error $\gamma$.  The following collorary follows from Theorem~\ref{thm:strong-stability-relaxed} and Theorem~\ref{thm:generalized-clustering-error}.

\begin{corollary}
\label{corr:strong-stability-relaxed}
Suppose the target clustering satisfies stability, and the initial clustering has clustering error $\gamma$, where $\gamma$ is any \emph{natural} clustering error as defined in Definition~\ref{def:natural-clustering-error}.  In the $\eta$-merge model, for any $\eta > 0.5$, the algorithms in Figure~\ref{fig:split-average-linkage} and Figure~\ref{fig:merge-average-linkage-relaxed} require at most $O(\gamma + k) \log_{\frac 1 {1-\eta}} n$ edit requests to find the target clustering.
\end{corollary}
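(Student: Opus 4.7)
The plan is to combine the two previously established results directly. From Theorem~\ref{thm:strong-stability-relaxed} we already know that the algorithms in Figures~\ref{fig:split-average-linkage} and~\ref{fig:merge-average-linkage-relaxed}, when applied under stability with $\eta > 0.5$, make at most $\delta_o$ split requests and at most $2(\delta_u + k)\log_{1/(1-\eta)} n$ merge requests. Summing these two quantities and absorbing constants gives an overall bound of $O((\delta_o + \delta_u + k)\log_{1/(1-\eta)} n) = O((\delta + k)\log_{1/(1-\eta)} n)$, where $\delta = \delta_u + \delta_o$ is the under/overclustering error of the initial clustering with respect to $\mathcal{C}^{\ast}$.

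Next, I would invoke Theorem~\ref{thm:generalized-clustering-error}, which states that for any natural clustering error $\gamma$ and any pair of clusterings $\mathcal{C}, \mathcal{C}^{\ast}$, we have $\gamma(\mathcal{C}, \mathcal{C}^{\ast}) \ge \delta(\mathcal{C}, \mathcal{C}^{\ast})$. Applying this to the initial proposed clustering, we conclude $\delta \le \gamma$, so the bound from the previous paragraph is at most $O((\gamma + k)\log_{1/(1-\eta)} n)$, which is exactly the claim.

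There is essentially no obstacle here: the corollary is a mechanical substitution of the lower bound from Theorem~\ref{thm:generalized-clustering-error} into the guarantee of Theorem~\ref{thm:strong-stability-relaxed}. The only minor subtlety worth flagging is that the $\delta_o$ term from the split bound is not multiplied by the $\log_{1/(1-\eta)} n$ factor in the original statement, but since we are stating an $O(\cdot)$ bound and $\log_{1/(1-\eta)} n \ge 1$, we can safely absorb $\delta_o$ into the combined logarithmic expression without loss.
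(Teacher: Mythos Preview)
Your proposal is correct and matches the paper's approach exactly: the paper simply states that the corollary follows from Theorem~\ref{thm:strong-stability-relaxed} and Theorem~\ref{thm:generalized-clustering-error}, which is precisely the combination you carry out. Your remark about absorbing the $\delta_o$ term into the logarithmic factor under the $O(\cdot)$ is the only detail worth noting, and you handle it correctly.
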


\subsection{Algorithms for correlation-clustering error}

To bound the number of edit requests with respect to the correlation clustering objective, we must use a different merge procedure, which is described in Figure~\ref{fig:merge-average-linkage-cc}.

\begin{figure}[htbp]
\caption{Merge procedure for the \emph{correlation-clustering} objective}
\label{fig:merge-average-linkage-cc}
\begin{center}
\fbox{
\begin{minipage}{0.95 \columnwidth}
{\bf Algorithm}: \textsc{Merge Procedure} \smallskip \\
~~{\bf Input}: Clusters $C_i$ and $C_j$, global average-linkage tree $T_{glob}$
\begin{algorithmic}
\STATE Search $T_{glob}$ for a node of maximal depth $N$ that contains \emph{enough} points from $C_i$ and $C_j$: $|N \cap C_i| \ge \eta |C_i|$ and $|N \cap C_j| \ge \eta |C_j|$
\IF{$|C_i| \ge |C_j|$}
\STATE Replace $C_i$ by $C_i \cup (N \cap C_j)$
\STATE Replace $C_j$ by $C_j \setminus N$
\ELSE
\STATE Replace $C_i$ by $C_i \setminus N$
\STATE replace $C_j$ by $C_j \cup (N \cap C_i)$
\ENDIF
\end{algorithmic}
\end{minipage}
}
\end{center}
\end{figure}

Here instead of creating a new ``pure'' cluster, we add these points to the larger of the two clusters in the merge. Notice that the new algorithm is much simpler than the merge algorithm for the under/overclustering error. Using this merge procedure and the split procedure presented earlier gives the following performance guarantee.

\begin{theorem}
\label{thm:strong-stability-cc-error}
Suppose the target clustering satisfies stability, and the initial clustering has correlation-clustering error of $\delta_{cc}$.  In the $\eta$-merge model, for any $\eta > 2/3$, using the split and merge procedures in Figures~\ref{fig:split-average-linkage} and ~\ref{fig:merge-average-linkage-cc} requires at most $\delta_{cc}$ edit requests to find the target clustering.
\end{theorem}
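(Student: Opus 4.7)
My plan is to show that every edit request — split or merge — strictly decreases $\delta_{cc}$. Since $\delta_{cc}$ is a non-negative integer, bounded strict descent immediately caps the number of requests by the initial $\delta_{cc}$.

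The split case is quick. By Lemma~\ref{lem:laminar-average-linkage}, $T_{glob}$ is laminar with respect to $\mathcal{C}^{\ast}$, so Lemma~\ref{lem:relaxed-main-lemma}(a) guarantees that the procedure in Figure~\ref{fig:split-average-linkage} returns a clean split of $C_i$ into $(C_{i,1},C_{i,2})$. Then every pair with one endpoint in each piece lies in distinct ground-truth clusters, so exactly these pairs flip from ``inconsistent'' ($c=1, c^{\ast}=0$) to ``consistent'' ($c=0,c^{\ast}=0$). All other pairs keep their indicator, giving $\Delta\delta_{cc}=-2|C_{i,1}||C_{i,2}|\leq-2$.

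The merge case is the interesting one. Let $L=C^{\ast}_{l}$ be the common target cluster promised by the $\eta$-merge assumption. By laminarity $L$ itself is a node $N_L$ of $T_{glob}$. The first step is to pin down the chosen node $N$: by laminarity $N$ is comparable with $L$; the case $N\cap L=\emptyset$ is impossible because it would force $|N\cap C_i|\leq|C_i\setminus L|\leq(1-\eta)|C_i|<\eta|C_i|$; and the case $L\subsetneq N$ is impossible because $N_L$ would then be a strict descendant of $N$ satisfying the same $\eta$-threshold on both clusters, contradicting the maximality of $N$'s depth. Hence $N\subseteq L$, and in particular $A:=N\cap C_j\subseteq L$.

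Assuming WLOG $|C_i|\geq|C_j|$, the procedure moves $A$ from $C_j$ into $C_i$. The only ordered pairs whose cluster-indicator changes are those in $A\times C_i$ (going $0\to1$) and those in $A\times(C_j\setminus A)$ (going $1\to0$), together with their reverses; because $A\subseteq L$ the net change collapses to
\begin{displaymath}
\Delta\delta_{cc}\;=\;2|A|\bigl(-|C_i\cap L|+|C_i\setminus L|+|(C_j\setminus A)\cap L|-|(C_j\setminus A)\setminus L|\bigr).
\end{displaymath}
Applying $|C_i\cap L|\geq\eta|C_i|$, $|C_i\setminus L|\leq(1-\eta)|C_i|$, $|(C_j\setminus A)\cap L|\leq|C_j\setminus A|\leq(1-\eta)|C_j|\leq(1-\eta)|C_i|$, and $|(C_j\setminus A)\setminus L|\geq0$, the parenthesized quantity is bounded above by $(2-3\eta)|C_i|$, which is strictly negative precisely because $\eta>2/3$. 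Being an integer-valued quantity, the parenthesized expression is then at most $-1$, so $\Delta\delta_{cc}\leq-2|A|\leq-2$.

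The main obstacle is the merge analysis, and in particular the two-part argument above: first the structural fact $N\subseteq L$, and then the inequality showing that the single $\eta$-fraction ``gain'' term dominates the two $(1-\eta)$-fraction ``loss'' terms. This is the analogue of the $\eta>1/2$ ancestor/descendant argument used for Theorem~\ref{thm:strong-stability-relaxed}, but tightened to $\eta>2/3$ because the correlation-clustering objective charges every pair rather than merely tracking cluster counts.
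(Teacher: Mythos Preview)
Your proof is correct and follows essentially the same line as the paper's: both show each request strictly decreases $\delta_{cc}$, handling the merge by bounding the errors removed (at least $2|A|\cdot\eta|C_i|$) against the errors added (at most $2|A|\cdot(1-\eta)|C_i|+2|A|\cdot(1-\eta)|C_j|$) and using $|C_j|\le|C_i|$ to conclude for $\eta>2/3$. Your treatment is in fact more careful than the paper's in one respect: you explicitly establish $N\subseteq L$ from laminarity and depth-maximality, whereas the paper simply calls the moved set ``pure'' without re-deriving this for the Figure~\ref{fig:merge-average-linkage-cc} procedure.
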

\begin{proof}
Consider the contributions of individual points to $\delta_{cco}$ and $\delta_{ccu}$, which are defined as:

\begin{displaymath}
\delta_{cco}(u) = \vert \lbrace v \in X : c(u,v) = 1 \textrm{ and } c^{\ast}(u,v) = 0  \rbrace \vert
\end{displaymath}
\begin{displaymath}
\delta_{ccu}(u) = \vert \lbrace v \in X : c(u,v) = 0 \textrm{ and } c^{\ast}(u,v) = 1  \rbrace \vert
\end{displaymath}

We first argue that a split of a cluster $C_{i}$ must reduce $\delta_{cc}$ by at least 1.  Given that the split is \emph{clean}, it is easy to verify that the outcome may not increase $\delta_{ccu}(u)$ for any $u \in C_{i}$.  We can also verify that for each $u \in C_{i}$, $\delta_{cco}(u)$ must decrease by at least 1.  This completes the argument, given that the correlation-clustering error with respect to all other pairs of points must remain the same.

We now argue that if $\eta > 2/3$, each merge of $C_{i}$ and $C_{j}$ must reduce $\delta_{cc}$ by at least 1.  Without loss of generality, suppose that $\vert C_{i} \vert \ge \vert C_{j} \vert$, and let us use $P$ to refer to the ``pure'' subset of $C_{j}$ that is moved to $C_{i}$.  We observe that the outcome must remove at least $\delta_{1}$ pairwise correlation-clustering errors, where $\delta_{1}$ satisfies $\delta_{1} \ge 2 \vert P \vert (\eta \vert C_{i} \vert)$.  Similarly, we observe that the outcome may add at most $\delta_{2}$ pairwise correlation-clustering errors, where $\delta_{2}$ satisfies:
\begin{displaymath}
\delta_{2} \le 2 \vert P \vert ((1-\eta) \vert C_{i} \vert) + 2 \vert P \vert ((1-\eta) \vert C_{j} \vert) \le 4 \vert P \vert ((1-\eta) \vert C_{i} \vert).
\end{displaymath}
It follows that for $\eta > 2/3$, $\delta_{1}$ must exceed $\delta_{2}$; therefore the sum of the pairwise correlation-clustering errors must decrease, giving a lower correlation-clustering error total.
\end{proof}

Observe that the runtime bound in Theorem~\ref{thm:strong-stability-cc-error} is tight: in some instances any \emph{local} algorithm requires at least $\delta_{cc}$ edits to find the target clustering.  To verify this, suppose the target clustering is composed of $n$ singleton clusters, and the initial clustering contains $n/2$ clusters of size 2.  In this instance, the initial correlation clustering error $\delta_{cc} = n/2$, and the oracle must issue at least $n/2$ split requests before we reach the target clustering (no matter how the algorithm reassigns the corresponding points).

\subsection{Algorithms under stronger assumptions}

When the data satisfies stronger stability properties we may simplify the presented algorithms and/or obtain better performance guarantees.  In particular, if the data satisfies the \emph{strict separation} property from ~\cite{BalcanBV08}, we may change the split and merge algorithms to use the local average-linkage tree, which is constructed from only the points in the edit request.  In addition, if the data satisfies \emph{strict threshold separation}, we may remove the restriction on $\eta$ and use a different merge procedure that is correct for any $\eta > 0$.

\begin{theorem}
\label{thm:strict-separation}
Suppose the target clustering satisfies strict separation, and the initial clustering has overclustering error $\delta_o$ and underclustering error $\delta_u$.  In the $\eta$-merge model, for any $\eta > 0.5$,  the algorithms in Figure~\ref{fig:local-split-average-linkage} and Figure~\ref{fig:local-merge-average-linkage-relaxed} require at most $\delta_o$ split requests and $2(\delta_u + k) \log_{\frac 1 {1-\eta}} n$ merge requests to find the target clustering.
\end{theorem}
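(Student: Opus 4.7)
The plan is to reduce this result to the argument used for Theorem~\ref{thm:strong-stability-relaxed}, by showing that under strict separation the \emph{local} average-linkage tree built only from the points involved in a given edit request (call it $T_{loc}$) retains the key laminar structure needed for correctness of the split and merge subroutines. Since both local algorithms only ever inspect $T_{loc}$, a localized analog of Lemma~\ref{lem:laminar-average-linkage} is all that is required.

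First I would observe the hereditary nature of strict separation: if $\mathcal{C}^{\ast}$ satisfies strict separation on $X$ with respect to similarity $S$, then for every subset $X' \subseteq X$ the induced clustering $\{C^{\ast}_i \cap X'\}$ satisfies strict separation on $X'$ with respect to $S$. This is immediate from the pointwise inequality in the definition. Because strict separation implies stability (as remarked immediately after the strict-threshold-separation definition), one may apply Lemma~\ref{lem:laminar-average-linkage} to $X' = C_i$ for a split request and to $X' = C_i \cup C_j$ for a merge request, concluding that every node of $T_{loc}$ is laminar with respect to the induced ground-truth clustering on the request points.

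Given this local laminar property, the correctness statements in Lemma~\ref{lem:relaxed-main-lemma} transfer essentially verbatim. For a split of $C_i$, if the split returned at the first branching node of $T_{loc}$ were not clean, some child of that node would contain points from two different ground-truth clusters together with a subset of a third, contradicting laminarity of that child. For a merge with $\eta > 1/2$, the node of $T_{loc}$ equivalent to $(C_i \cup C_j) \cap C^{\ast}_\ell$ (where $C^{\ast}_\ell$ is the common target cluster) contains at least an $\eta$-fraction of each of $C_i$ and $C_j$; since $\eta > 1/2$ any other qualifying node must be one of its ancestors or descendants, so the deepest qualifying node lies inside this pure block and Step~4 of Figure~\ref{fig:merge-average-linkage-relaxed} adds only points from $C^{\ast}_\ell$.

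With local clean splits and pure merges established, the counting argument in the proof of Theorem~\ref{thm:strong-stability-relaxed} carries over without change: splits reduce overclustering error by exactly one and merges cannot increase it, giving the $\delta_o$ bound; the same potential $P = \{C_i \cap C^{\ast}_j : C_i \text{ impure}, C_i \cap C^{\ast}_j \neq \emptyset\}$ shrinks by a factor of $(1-\eta)$ on every impure merge, producing at most $(\delta_u + k)\log_{1/(1-\eta)} n$ impure merges; and each pure merge consumes a pure cluster produced by a previous impure merge, doubling the bound. The main obstacle is really just articulating the hereditary claim for strict separation precisely and checking that the \textbf{pure}/\textbf{impure} label invariant used in the merge-counting argument is maintained when clusters are built from the local tree; both are bookkeeping once local laminarity is in hand.
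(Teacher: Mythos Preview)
Your proposal is correct and follows essentially the same approach as the paper: establish that the local average-linkage tree is laminar with respect to the induced ground-truth clustering on the request points (you do this cleanly via the hereditary nature of strict separation plus Lemma~\ref{lem:laminar-average-linkage}), derive the analogue of Lemma~\ref{lem:relaxed-main-lemma}, and then reuse the counting argument of Theorem~\ref{thm:strong-stability-relaxed} unchanged. The paper's own proof is terser but structurally identical.
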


\begin{proof}
Let us use $\mathcal{L}^{\ast}$ to refer to the ground-truth clustering of the points in the split/merge request.  If the target clustering satisfies strict separation, it is easy to verify that every node in the local average-linkage tree $T_{loc}$ must be laminar (consistent) w.r.t. $\mathcal{L}^{\ast}$.  We can then use this observation to prove the equivalent of Lemma~\ref{lem:relaxed-main-lemma} for the split procedure in Figure~\ref{fig:local-split-average-linkage} and the merge procedure in Figure~\ref{fig:local-merge-average-linkage-relaxed}.  The analysis in Theorem~\ref{thm:strong-stability-relaxed} remains unchanged.
\end{proof}

\begin{theorem}
\label{thm:strict-threshold-separation}
Suppose the target clustering satisfies strict threshold separation, and the initial clustering has overclustering error $\delta_o$ and underclustering error $\delta_u$.  In the $\eta$-merge model, for any $\eta > 0$,  the algorithms in Figure~\ref{fig:local-split-average-linkage} and Figure~\ref{fig:merge-strict-threshold-separation} require at most $\delta_o$ split requests and $2(\delta_u + k) \log_{\frac 1 {1-\eta}} n$ merge requests to find the target clustering.
\end{theorem}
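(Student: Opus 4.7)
The plan is to follow the exact template of Theorem~\ref{thm:strong-stability-relaxed} and Theorem~\ref{thm:strict-separation}, with the key new ingredient being the correctness of the merge procedure in Figure~\ref{fig:merge-strict-threshold-separation} under strict threshold separation (for arbitrary $\eta > 0$). First, I would observe that strict threshold separation implies strict separation, which in turn implies stability. As a consequence, on any subset $Y \subseteq X$ of points (in particular on $Y = C_i$ for a split request, or $Y = C_i \cup C_j$ for a merge request), the local average-linkage tree $T_{loc}$ is laminar with respect to the restriction of $\mathcal{C}^{\ast}$ to $Y$ (the argument is the one given in Lemma~\ref{lem:laminar-average-linkage}, applied to the induced similarity on $Y$).

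Next I would handle splits. Exactly as in part \textbf{a} of Lemma~\ref{lem:relaxed-main-lemma}, laminarity of $T_{loc}$ implies that the node at which the points of $C_i$ first split produces a partition in which no ground-truth cluster crosses both sides. Hence the split procedure in Figure~\ref{fig:local-split-average-linkage} always yields a \emph{clean split}, and each split reduces the overclustering error by exactly $1$ (same computation as in the proof of Theorem~\ref{thm:strong-stability-relaxed}), so at most $\delta_o$ split requests suffice.

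The main obstacle is merges, since the earlier merge arguments needed $\eta > 1/2$ to force a ``heavy'' node to be comparable to its competitors in the tree. Under strict threshold separation, however, the threshold $t$ separates every intra-cluster pair from every inter-cluster pair, so pure ground-truth pieces can be identified even from a single shared point. The merge procedure in Figure~\ref{fig:merge-strict-threshold-separation} exploits this: I expect it to locate, within $T_{loc}(C_i \cup C_j)$, the maximal node $N$ whose internal structure is uniformly ``above threshold'' and that still contains points from both $C_i$ and $C_j$ (the analogue of $\eta_1|C_i|,\eta_2|C_j|$ points, but now usable for any $\eta>0$). By laminarity together with the threshold gap, this $N$ equals $C^{\ast}_{\ell} \cap (C_i \cup C_j)$ for the shared target cluster $C^{\ast}_{\ell}$ (or a pure descendant of it). Thus the newly formed cluster is guaranteed to be pure, giving the analogue of part \textbf{b} of Lemma~\ref{lem:relaxed-main-lemma} for any $\eta > 0$.

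With purity of the merge output established, the counting argument from the proof of Theorem~\ref{thm:strong-stability-relaxed} carries over verbatim. Consider the potential $P = \{C_i \cap C^{\ast}_j : C_i \textrm{ is impure}, C_i \cap C^{\ast}_j \neq \emptyset\}$, which starts at size $\delta_u + k$ and never grows (splits are clean, merges do not create new intersections). Each impure merge removes at least an $\eta$-fraction of some $P_m \in P$ to a pure cluster, so at most $\log_{1/(1-\eta)} n$ impure merges can touch any given $P_m$, giving a total of at most $(\delta_u + k)\log_{1/(1-\eta)} n$ impure merges. Pure merges are charged against pure clusters, each of which is born in a unique impure merge and consumed in at most one pure merge, contributing at most the same bound. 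Summing, the total number of merge requests is at most $2(\delta_u + k)\log_{1/(1-\eta)} n$, completing the proof.
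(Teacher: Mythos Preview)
Your overall architecture matches the paper exactly: verify that the split of Figure~\ref{fig:local-split-average-linkage} is clean, verify that the merge of Figure~\ref{fig:merge-strict-threshold-separation} outputs a pure cluster, and then invoke the counting argument from Theorem~\ref{thm:strong-stability-relaxed} unchanged. Your treatment of splits and of the final counting is fine.

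The gap is in the merge step. You describe the procedure of Figure~\ref{fig:merge-strict-threshold-separation} as a search in the local average-linkage tree $T_{loc}(C_i\cup C_j)$ for a maximal node ``above threshold,'' and you justify purity via laminarity of $T_{loc}$. That is not what the algorithm does: it builds a graph $G$ on $C_i\cup C_j$, adds edges in order of decreasing similarity, and outputs the first connected component $\hat C_\ell$ with $|\hat C_\ell\cap C_i|\ge\eta|C_i|$ and $|\hat C_\ell\cap C_j|\ge\eta|C_j|$. Your laminarity argument therefore does not apply to the object the algorithm actually returns. The correct (and simpler) purity argument, which is the one the paper gives, uses the single global threshold $t$ directly: every intra-cluster pair has similarity $>t$ and every inter-cluster pair has similarity $\le t$, so in $G$ all same-cluster pairs are connected before any cross-cluster pair is. Hence at the moment the stopping condition is first met, every connected component is still contained in a single target cluster, and in particular the output $N$ is pure, for any $\eta>0$. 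This is also why strict threshold separation (one global $t$) is genuinely needed here and mere strict separation is not enough for $\eta\le 1/2$; your laminarity-based reasoning obscures that distinction. Replace your merge paragraph with this connected-components argument and the proof goes through.
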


\begin{proof}
If the target clustering satisfies strict threshold separation, we can verify that the split procedure in Figure~\ref{fig:local-split-average-linkage} and the merge procedure in Figure~\ref{fig:merge-strict-threshold-separation} are correct for any $\eta >0$.  The analysis in Theorem~\ref{thm:strong-stability-relaxed} remains unchanged.

To verify that the split procedure always produces a clean split, again let us use $\mathcal{L}^{\ast}$ to refer to the ground-truth clustering of the points in the split request.  We can again verify that each node in the local average-linkage tree $T_{loc}$ must be laminar (consistent) w.r.t. $\mathcal{L}^{\ast}$.  It follows that the split procedure always produces a clean split.  Note that clearly this argument does not depend on the setting of $\eta$.

We now verify that the new cluster added by the merge procedure Figure~\ref{fig:merge-strict-threshold-separation} must be ``pure'' (must contain points from a single target cluster).  To see this, observe that in the graph $G$ in Figure~\ref{fig:merge-strict-threshold-separation}, all pairs of points from the same target cluster are connected before any pairs of points from different target clusters.  It follows that the first component that contains at least an $\eta$-fraction of points from $C_{i}$ and $C_{j}$ must be ``pure''.  Note that this argument applies for any $\eta > 0$.
\end{proof}

Note that the merge procedure in Figure~\ref{fig:merge-strict-threshold-separation} is correct for $\eta \le 0.5$ only if the target clustering satisfies \emph{strict threshold separation}: there is a single threshold $t$ such that for all $i$, $x,y \in C^{\ast}_i$, $S(x,y) > t$, and, for all $i \ne j$, $x \in C^{\ast}_i, y \in C^{\ast}_j$, $S(x,y) \le t$.  When only \emph{strict separation} holds (the threshold for each target cluster may be different), this procedure may first connect points from different target clusters, and for $\eta \le 0.5$ this component may then be large enough to be output.

As in Corollary~\ref{corr:strong-stability-relaxed}, we may also restate the run-time bounds in Theorem~\ref{thm:strict-separation} and Theorem~\ref{thm:strict-threshold-separation} in terms of any natural clustering error $\gamma$. The following corollaries follow from Theorem~\ref{thm:strict-separation}, Theorem~\ref{thm:strict-threshold-separation} and Theorem~\ref{thm:generalized-clustering-error}.

\begin{corollary}
\label{corr:strict-separation}
Suppose the target clustering satisfies strict separation, and the initial clustering has clustering error $\gamma$, where $\gamma$ is any \emph{natural} clustering error as defined in Definition~\ref{def:natural-clustering-error}.  In the $\eta$-merge model, for any $\eta > 0.5$,  the algorithms in Figure~\ref{fig:local-split-average-linkage} and Figure~\ref{fig:local-merge-average-linkage-relaxed} require at most $O(\gamma + k) \log_{\frac 1 {1-\eta}} n$ edit requests to find the target clustering.
\end{corollary}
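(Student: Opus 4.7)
The plan is to simply chain Theorem~\ref{thm:strict-separation} with Theorem~\ref{thm:generalized-clustering-error}, so this corollary is essentially immediate. First, I would invoke Theorem~\ref{thm:strict-separation} to conclude that under strict separation and $\eta > 0.5$, the number of split requests is at most $\delta_o$ and the number of merge requests is at most $2(\delta_u + k) \log_{\frac{1}{1-\eta}} n$, where $\delta_o$ and $\delta_u$ are the overclustering and underclustering errors of the initial clustering with respect to the target. Summing these two bounds and absorbing $\delta_o \le (\delta_u + \delta_o) \log_{\frac{1}{1-\eta}} n$ into the leading term (valid since $\log_{\frac{1}{1-\eta}} n \ge 1$ for $n \ge 2$ and $\eta \in (0,1)$), the total number of edit requests is at most
\[
 2(\delta_u + \delta_o + k) \log_{\frac{1}{1-\eta}} n \;=\; 2(\delta + k) \log_{\frac{1}{1-\eta}} n.
\]

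Next, I would apply Theorem~\ref{thm:generalized-clustering-error}, which says that for any natural clustering error $\gamma$ we have $\gamma(\mathcal{C},\mathcal{C}^{\ast}) \ge \delta(\mathcal{C},\mathcal{C}^{\ast})$. Substituting $\delta \le \gamma$ into the above bound yields the desired $O((\gamma + k) \log_{\frac{1}{1-\eta}} n)$ bound on the total number of edit requests.

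There is essentially no obstacle here beyond verifying the inequality $\delta_o \le (\delta_o + \delta_u + k) \log_{\frac{1}{1-\eta}} n$ so that the split and merge counts can be collected into a single $O(\cdot)$ expression, and recognizing that $\delta_u + \delta_o = \delta$ by definition. The only subtlety worth mentioning in the writeup is that Theorem~\ref{thm:generalized-clustering-error} is used only as a bound on the \emph{initial} error; the algorithm's guarantees depend structurally on $\delta_u$ and $\delta_o$, but the hypothesis on $\gamma$ suffices to upper bound both simultaneously via $\delta_u + \delta_o = \delta \le \gamma$. Because the corollary is stated in $O(\cdot)$ notation, we do not need to track the constant carefully; a one-line derivation suffices.
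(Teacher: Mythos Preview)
Your proposal is correct and matches the paper's own argument, which simply states that the corollary follows from Theorem~\ref{thm:strict-separation} and Theorem~\ref{thm:generalized-clustering-error}. You have merely made explicit the chaining of the two bounds and the absorption of the $\delta_o$ term into the $O(\cdot)$ expression, which is exactly what the paper leaves implicit.
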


\begin{corollary}
\label{corr:strict-threshold-separation}
Suppose the target clustering satisfies strict threshold separation, and the initial clustering has clustering error $\gamma$, where $\gamma$ is any \emph{natural} clustering error as defined in Definition~\ref{def:natural-clustering-error}.  In the $\eta$-merge model, for any $\eta > 0$,  the algorithms in Figure~\ref{fig:local-split-average-linkage} and Figure~\ref{fig:merge-strict-threshold-separation} require at most $O(\gamma + k) \log_{\frac 1 {1-\eta}} n$ edit requests to find the target clustering.
\end{corollary}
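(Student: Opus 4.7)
The plan is to derive this corollary as a direct consequence of the two theorems it cites, so the work is essentially bookkeeping rather than new analysis. First I would invoke Theorem~\ref{thm:strict-threshold-separation}, which already guarantees that under strict threshold separation and for any $\eta > 0$, the algorithms in Figures~\ref{fig:local-split-average-linkage} and~\ref{fig:merge-strict-threshold-separation} terminate within $\delta_o$ split requests and $2(\delta_u + k) \log_{1/(1-\eta)} n$ merge requests, where $\delta_o$ and $\delta_u$ are the overclustering and underclustering errors of the initial clustering. Summing these gives a bound of at most $\delta_o + 2(\delta_u + k) \log_{1/(1-\eta)} n$ total edit requests, which (since $\log_{1/(1-\eta)} n \ge 1$ for reasonable $n$, and in any case can be absorbed into the $O(\cdot)$) is $O((\delta_u + \delta_o + k)\log_{1/(1-\eta)} n) = O((\delta + k)\log_{1/(1-\eta)} n)$.

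Next I would apply Theorem~\ref{thm:generalized-clustering-error}, which states that for any natural clustering error $\gamma$ and any proposed clustering $\mathcal{C}$, $\gamma(\mathcal{C},\mathcal{C}^{\ast}) \ge \delta(\mathcal{C},\mathcal{C}^{\ast}) = \delta_u + \delta_o$. Substituting this inequality into the bound above replaces $\delta$ by the (possibly larger) quantity $\gamma$, yielding the stated $O((\gamma + k) \log_{1/(1-\eta)} n)$ bound on the total number of edit requests.

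The only mild subtlety to mention is that Theorem~\ref{thm:strict-threshold-separation} separates the counts for splits and merges in terms of $\delta_o$ and $\delta_u$ individually, while $\gamma$ only bounds the sum $\delta_u + \delta_o$. This is not an obstacle, because once we collapse both counts into a single $O(\cdot)$ bound on the total number of edit requests, only $\delta_u + \delta_o$ appears, and that is precisely what Theorem~\ref{thm:generalized-clustering-error} controls. Since the overall derivation is purely a chaining of two previously established results, I do not anticipate a real difficulty; the proof can be written in essentially one short paragraph noting that the corollary ``follows directly from Theorems~\ref{thm:strict-threshold-separation} and~\ref{thm:generalized-clustering-error}.''
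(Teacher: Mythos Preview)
Your proposal is correct and matches the paper's approach exactly: the paper simply states that the corollary follows from Theorems~\ref{thm:strict-threshold-separation} and~\ref{thm:generalized-clustering-error}, and your chaining of those two results (bounding total edits by $O((\delta + k)\log_{1/(1-\eta)} n)$ and then using $\gamma \ge \delta$) is precisely the intended derivation.
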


\begin{figure}[htbp]
\caption{Split procedure under stronger assumptions}
\label{fig:local-split-average-linkage}
\begin{center}
\fbox{
\begin{minipage}{0.95 \columnwidth}
{\bf Algorithm}: \textsc{Split Procedure} \smallskip \\
~~{\bf Input}: Cluster $C_i$, local average-linkage tree $T_{loc}$.
\begin{enumerate}
\item Let $C_{i,1}$ and $C_{i,2}$ be the children of the root in $T_{loc}$.
\item Delete $C_i$ and replace it with $C_{i,1}$ and $C_{i,2}$. Mark the two new clusters as ``impure''.
\end{enumerate}
\end{minipage}
}
\end{center}
\end{figure}

\begin{figure}[htbp]
\caption{Merge procedure under strict separation}
\label{fig:local-merge-average-linkage-relaxed}
\begin{center}
\fbox{
\begin{minipage}{0.95 \columnwidth}
{\bf Algorithm}: \textsc{Merge Procedure} \smallskip \\
~~{\bf Input}: Clusters $C_i$ and $C_j$, local average-linkage tree $T_{loc}$.

\begin{enumerate}
\item If $C_i$ is marked as ``pure'' set $\eta_1 = 1$ else set $\eta_1 = \eta$. Similarly set $\eta_2$ for $C_j$.
\item Search $T_{loc}$ for a node of maximal depth $N$ that contains \emph{enough} points from $C_i$ and $C_j$: $|N \cap C_i| \ge \eta_1 |C_i|$ and $|N \cap C_j| \ge \eta_2 |C_j|$.
\item Replace $C_i$ by $C_i \setminus N$, replace $C_j$ by $C_j \setminus N$.
\item Add a new cluster containing $N \cap (C_i \cup C_j)$, mark it as ``pure''.
\end{enumerate}
\end{minipage}
}
\end{center}
\end{figure}

\begin{figure}
\caption{Merge procedure under strict threshold separation}
\label{fig:merge-strict-threshold-separation}
\begin{center}
\fbox{
\begin{minipage}{0.95 \columnwidth}
{\bf Algorithm}: \textsc{Merge Procedure} \smallskip \\
~~{\bf Input}: Clusters $C_i$ and $C_j$.

\begin{enumerate}
\item If $C_i$ is marked as ``pure'' set $\eta_1 = 1$ else set $\eta_1 = \eta$. Similarly set $\eta_2$ for $C_j$.
\item Let $G = (V,E)$ be a graph where $V = C_{i} \cup C_{j}$ and $E = \emptyset$.  Set $N = \emptyset$.
\item While true:
\begin{algorithmic}
\STATE Connect the next-closest pair of points in G;
\STATE Let $\hat{C_{1}}, \hat{C_{2}}, \ldots , \hat{C_{m}}$ be the connected components of $G$;
\IF{there exists $\hat{C_{l}}$ such that $\vert \hat{C_{l}} \cap C_{i} \vert \ge \eta \vert C_{i} \vert$ and $\vert \hat{C_{l}} \cap C_{j} \vert \ge \eta \vert C_{j} \vert$}
\STATE $N = \hat{C_{l}}$;
\STATE break;
\ENDIF
\end{algorithmic}
\item Replace $C_i$ by $C_i \setminus N$, replace $C_j$ by $C_j \setminus N$.
\item Add a new cluster containing $N$, mark it as ``pure''.
\end{enumerate}
\end{minipage}
}

\end{center}
\end{figure}

\eat{
We would however like to stress that even under these stronger properties it is non-trivial to design a local algorithm in our model. First of all we do not assume the knowledge of $k$ apriori. Secondly, previous work~\cite{BBV08} would only imply that under such stronger conditions it is possible to construct a hierarchical tree containing clustering. Hence if we can ignore the original clustering altogether, then with (k-1) split requests we can compute the target clustering by starting from the root node of this hierarchical tree. However, here we consider the more realistic scenario where we do not have control over the original clustering, and already start with a fairly good clustering that may only be edited locally.}

\section{The unrestricted-merge model}
\label{sec:unrestricted-merge}
In this section we further relax the assumptions about the nature of the oracle requests.  As before, the oracle may request to split a cluster if it contains points from two or more target clusters.  For merges, now the oracle may request to merge $C_i$ and $C_j$ if both clusters contain only a single point from the same ground-truth cluster.  We note that this is a minimal set of assumptions for a local algorithm to make progress, otherwise the oracle may always propose irrelevant splits or merges that cannot reduce clustering error.  For this model we propose the merge algorithm described in Figure~\ref{fig:merge-average-linkage-unrestricted}. The split algorithm remains the same as in Figure~\ref{fig:split-average-linkage}.

\begin{figure}[htbp]
\caption{Merge procedure for the unrestricted-merge model}
\label{fig:merge-average-linkage-unrestricted}
\begin{center}
\fbox{
\begin{minipage}{0.95 \columnwidth}
{\bf Algorithm}: \textsc{Merge Procedure} \smallskip \\
~~{\bf Input}: Clusters $C_i$ and $C_j$, global average-linkage tree $T_{avg}.$

\begin{enumerate}
\item Let $C'_i$, $C'_j$ = Split($C_i \cup C_j$), where the split is performed as in the previous section.
\item Delete $C_i$ and $C_j$.
\item If the sets $C'_i$ and $C'_j$ are the same as $C_i$ and $C_j$, then add $C_i \cup C_j$, otherwise add $C'_i$ and $C'_j$.
\end{enumerate}
\end{minipage}
}

\end{center}
\end{figure}

To provably find the ground-truth clustering in this setting we require that each merge request must be chosen uniformly at random from the set of feasible merges.  This assumption is consistent with the observation in~\cite{AwasthiZ10} that in the unrestricted-merge model with arbitrary request sequences, even very simple cases~(ex. union of intervals on a line) require a prohibitively large number of requests.  We do not make additional assumptions about the nature of the split requests; in each iteration any feasible split may be proposed by the oracle.
In this setting our algorithms have the following performance guarantee.

\begin{theorem}
\label{thm:strict-separation-random}
Suppose the target clustering satisfies stability, and the initial clustering has overclustering error $\delta_o$ and underclustering error $\delta_u$.  In the unrestricted-merge model, with probability at least $1-\epsilon$, the algorithms in Figure~\ref{fig:split-average-linkage} and Figure~\ref{fig:merge-average-linkage-unrestricted} require $\delta_o$ split requests and $O(\log \frac k {\epsilon} {{\delta}^2_u})$ merge requests to find the target clustering.
\end{theorem}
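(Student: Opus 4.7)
The plan is to follow the template of Theorem~\ref{thm:strong-stability-relaxed}, separating a deterministic split-count bound from a probabilistic merge-count bound. The split bound is immediate: by stability and Lemma~\ref{lem:laminar-average-linkage} every node of $T_{glob}$ is laminar with respect to $\mathcal{C}^{*}$, so Lemma~\ref{lem:relaxed-main-lemma}(a) applies verbatim to the split procedure in Figure~\ref{fig:split-average-linkage}, every split is clean, and $\delta_o$ decreases by exactly one per split; therefore at most $\delta_o$ splits are ever issued.

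To analyze one call to the merge procedure in Figure~\ref{fig:merge-average-linkage-unrestricted}, I would let $N$ be the deepest node of $T_{glob}$ containing $C_i \cup C_j$, with children $N_1, N_2$, and set $C'_i = N_1 \cap (C_i \cup C_j)$, $C'_j = N_2 \cap (C_i \cup C_j)$. Laminarity gives a trichotomy. \textbf{(i)} $\{C'_i, C'_j\} = \{C_i, C_j\}$: the procedure adds $C_i \cup C_j$ and $\delta_u$ drops by at least one. \textbf{(ii)} $C_i \cup C_j$ meets more than one target cluster: $N_1, N_2$ cleanly separate distinct targets, and a direct accounting of $m_l$ (the number of current clusters intersecting $C^{*}_l$) shows $\delta_u$ again drops by at least one. \textbf{(iii)} $C_i \cup C_j \subseteq C^{*}_l$ but $\{C'_i, C'_j\} \ne \{C_i, C_j\}$: at least one of $C_i, C_j$ must straddle $N_1, N_2$ and its replacement lies strictly inside one of them, so the smallest containing tree node of that cluster shrinks. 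I call (i), (ii) \emph{productive}: exactly $\delta_u$ productive merges are needed.

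To bound the total merges I would decompose the analysis by target. For each $C^{*}_l$ with $m_l > 1$, conditional on the uniformly random feasible pair drawing both clusters from $C^{*}_l$ (probability $\binom{m_l}{2}/|F|$, with $|F|=\sum_l \binom{m_l}{2}$), the pair is uniform among in-target options. The resulting move is either productive (so $m_l$ drops by one) or a Case~(iii) realignment -- which leaves $m_l$ unchanged but strictly decreases the potential $\sum_{C \cap C^{*}_l \ne \emptyset} |\mathrm{mtn}(C)|$, as verified by sub-case analysis of (iii). Once every cluster intersecting $C^{*}_l$ equals a tree node, every in-target pair is productive, because the LCA of two tree nodes splits them back into themselves. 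Coupling the mixed process with uniform random coalescence on $m_l$ items gives expected $O(m_l^2)$ in-target moves to drive $m_l$ to $1$, hence expected $O(m_l^2 \cdot |F|/\binom{m_l}{2}) = O(|F|)$ total random merges devoted to target $l$ alone. Taking the maximum over targets and using $|F| \le \tfrac{1}{2}(\sum_{l: m_l > 1} m_l)^2 \le 2\delta_u^2$ (since $\sum_{l: m_l > 1} m_l \le 2\delta_u$), we get $E[\text{merges}] = O(\delta_u^2)$. A per-target Chernoff tail bound then gives a high-probability per-target guarantee, and a union bound over the at most $k$ non-trivial targets yields the claimed $O(\delta_u^2 \log(k/\epsilon))$ total with overall failure probability at most $\epsilon$.

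The main technical obstacle is the per-target coalescence bound. Its crux is a careful sub-case analysis of (iii) -- both clusters straddling $N_1, N_2$ versus one fully inside a child of $N$ -- that verifies strict decrease of $\sum_{C \cap C^{*}_l \ne \emptyset} |\mathrm{mtn}(C)|$, together with a coupling argument showing that interleaving a bounded number of realignment moves into uniform random coalescence inflates the expected completion time by only a constant factor. A secondary subtlety is that Case~(ii) merges cross multiple targets simultaneously, but since they are productive for each affected target they only accelerate the per-target processes and introduce no accounting issues.
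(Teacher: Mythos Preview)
Your split bound and the claim that cases (i) and (ii) each reduce $\delta_u$ by at least one are correct and align with the paper's Lemmas~\ref{lem:unrestricted-overclustering-error} and~\ref{lem:merge-unrestricted}. The genuine gap is in your bound on case~(iii) realignments. Your potential $\Phi=\sum_{C\cap C^*_l\ne\emptyset}|\mathrm{mtn}(C)|$ does drop by at least one on each realignment (since one of $C_i,C_j$ must have $\mathrm{mtn}$ equal to $N$, while $|\mathrm{mtn}(C'_i)|+|\mathrm{mtn}(C'_j)|\le|N_1|+|N_2|=|N|$), but its initial value can be as large as $m_l\cdot|C^*_l|$, and case~(i) productive merges can \emph{increase} $\Phi$ by as much as $|N|-2$. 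Hence $\Phi$ only certifies $O(m_l\cdot|C^*_l|)$ realignments, not $O(m_l^2)$, and your ``constant-factor coupling inflation'' over random coalescence is unjustified: the bound you obtain depends on $n$, not just on $\delta_u$. (A secondary issue: the decomposition $|F|=\sum_l\binom{m_l}{2}$ overcounts feasible merges once impure clusters intersecting several targets are present.)

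The paper closes this gap with a structural fact you do not exploit: the merge-and-split map is \emph{idempotent}. If a case~(iii) move sends $(C_i,C_j)\to(C'_i,C'_j)$, then $C'_i\cup C'_j=C_i\cup C_j$, so the deterministic split applied to $(C'_i,C'_j)$ returns $(C'_i,C'_j)$ and the very next request on this pair is case~(i). To leverage this, the paper assigns per-target cluster identifiers so that the total number of labels ever issued for target $C^*_i$ is $O(\delta_i)$ (each new label corresponds to a unit drop in $\dist(C^*_i,\mathcal{C})$). A pure merge request is ``repeated'' if its identifier pair has appeared before; there are only $O(\delta_i^2)$ ``original'' pairs. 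For a repeated request on $(C_1,C_2)$, the event that neither cluster was touched since the pair's previous appearance has probability at least $1/(2\delta_i+1)$ under uniformly random merges (the single pair $(C_1,C_2)$ competes against at most $2\delta_i$ other feasible pairs involving $C_1$ or $C_2$, since both are pure in $C^*_i$), and on that event idempotence forces a productive merge. A Chernoff bound then gives $O(\delta_i^2\log(k/\epsilon))$ repeated requests per target with failure probability $\epsilon/k$, and a union bound over the $k$ targets yields the theorem. No potential function is needed; idempotence plus the repeated-pair probability do all the work.
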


The above theorem is proved in a series of lemmas.  We first state a lemma regarding the correctness of the Algorithm in Figure~\ref{fig:merge-average-linkage-unrestricted}.  We argue that if the algorithm merges $C_{i}$ and $C_{j}$, it must be the case that both $C_{i}$ and $C_{j}$ only contain points from the same ground-truth cluster.

\begin{lemma}
\label{lem:unrestricted-pure-merge}
If the algorithm in Figure~\ref{fig:merge-average-linkage-unrestricted} merges $C_{i}$ and $C_{j}$ in Step 3, it must be the case that $C_{i} \subset C^{\ast}_{l}$ and $C_{j} \subset C^{\ast}_{l}$ for some ground-truth cluster $C^{\ast}_{l}$.
\end{lemma}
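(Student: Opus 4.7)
The plan is to prove the contrapositive via a clean-split argument. Recall that the merge procedure in Figure \ref{fig:merge-average-linkage-unrestricted} executes Step 3's merge exactly when the internal call $\text{Split}(C_i \cup C_j)$ returns a partition $\{C'_i, C'_j\}$ equal to $\{C_i, C_j\}$ (as an unordered pair of sets). I will show that whenever $C_i \cup C_j$ is not contained in a single ground-truth cluster, this equality must fail, so the merge cannot happen. The unrestricted-merge model guarantees there is some ground-truth cluster $C^{\ast}_l$ with $C^{\ast}_l \cap C_i \ne \emptyset$ and $C^{\ast}_l \cap C_j \ne \emptyset$; this shared cluster will be the handle for the contradiction.

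Assume, toward a contradiction, that Step 3 performs the merge but $C_i$ and $C_j$ are not both contained in any single $C^{\ast}_l$. Combined with the oracle guarantee, this means $C_i \cup C_j$ meets at least two distinct ground-truth clusters and is therefore impure. Because the split routine used inside Figure~\ref{fig:merge-average-linkage-unrestricted} is the same procedure from Figure \ref{fig:split-average-linkage}, and stability of $\mathcal{C}^{\ast}$ is assumed, Lemma \ref{lem:relaxed-main-lemma}(a) applies to $\text{Split}(C_i \cup C_j)$ and tells us that the resulting partition $\{C'_i, C'_j\}$ is clean with respect to $\mathcal{C}^{\ast}$. By Definition \ref{def:clean-split}, cleanliness forces all points of $C^{\ast}_l$ inside $C_i \cup C_j$ to lie entirely in $C'_i$ or entirely in $C'_j$.

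Now suppose, for the merge to fire, that $\{C'_i, C'_j\} = \{C_i, C_j\}$; without loss of generality, $C'_i = C_i$ and $C'_j = C_j$. Since $C_i$ and $C_j$ are disjoint and $C^{\ast}_l$ meets each nontrivially, the points of $C^{\ast}_l \cap (C_i \cup C_j)$ are split between $C'_i$ and $C'_j$ with both $C^{\ast}_l \cap C'_i \ne \emptyset$ and $C^{\ast}_l \cap C'_j \ne \emptyset$. This directly contradicts the cleanliness conclusion from the previous paragraph, so the equality $\{C'_i, C'_j\} = \{C_i, C_j\}$ cannot hold. Hence Step 3 must have taken the ``otherwise'' branch, contradicting the assumption that it merged $C_i$ and $C_j$. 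We conclude that $C_i \cup C_j \subseteq C^{\ast}_l$, i.e., $C_i \subset C^{\ast}_l$ and $C_j \subset C^{\ast}_l$, as claimed.

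The main delicate point is the applicability of Lemma \ref{lem:relaxed-main-lemma}(a) to the internal Split inside the merge procedure. That lemma was originally stated for oracle-initiated split requests, whose inputs are impure by the model assumption, but its proof uses only the laminarity of $T_{glob}$ (Lemma \ref{lem:laminar-average-linkage}) together with the input being impure. The contrapositive setup above supplies precisely this impurity for $C_i \cup C_j$, so the lemma transfers without modification. Everything else is a short disjointness argument using $C_i \cap C_j = \emptyset$ and the oracle's guarantee; no properties of $\eta$ or randomness over merge requests are invoked.
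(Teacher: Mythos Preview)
Your proof is correct and follows essentially the same approach as the paper: both prove the contrapositive by invoking the clean-split property (Lemma~\ref{lem:relaxed-main-lemma}(a)) on the internal $\text{Split}(C_i\cup C_j)$ call, then observing that the shared ground-truth cluster $C^{\ast}_l$ cannot straddle the resulting partition, forcing $\{C'_i,C'_j\}\neq\{C_i,C_j\}$. Your explicit justification for why Lemma~\ref{lem:relaxed-main-lemma}(a) applies here (impurity of $C_i\cup C_j$ plus laminarity of $T_{glob}$) is a nice addition that the paper leaves implicit.
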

\begin{proof}
We prove the contrapositive.  Suppose $C_{i}$ and $C_{j}$ both contain points from $C^{\ast}_{l}$, and in addition $C_{i} \cup C_{j}$ contains points from some other ground-truth cluster.  Let us define $S_{1} = C^{\ast}_{l} \cap C_{i}$ and $S_{2} = C^{\ast}_{l} \cap C_{j}$.  Because the clusters $C'_{i}$, $C'_{j}$ result from a \emph{clean} split, it follows that $S_{1}, S_{2} \subseteq C'_{i}$ or $S_{1}, S_{2} \subseteq C'_{j}$.  Without loss of generality, assume $S_{1}, S_{2} \subseteq C'_{i}$.  Then clearly $C'_{i} \ne C_{i}$ and $C'_{i} \ne C_{j}$, so $C_{i}$ and $C_{j}$ are not merged.
\end{proof}

The $\delta_{o}$ bound on the number of split requests follows from the observation that each split reduces the overclustering error by exactly 1 (as before), and the fact that the merge procedure does not increase overclustering error.

\begin{lemma}
\label{lem:unrestricted-overclustering-error}
The merge algorithm in Figure~\ref{fig:merge-average-linkage-unrestricted} does not increase overclustering error.
\end{lemma}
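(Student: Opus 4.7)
The plan is to case-split on the two outcomes of the merge procedure in Figure~\ref{fig:merge-average-linkage-unrestricted}. Either (a) we actually merge $C_i$ and $C_j$ into $C_i \cup C_j$ (when the Split call returns back $\{C_i, C_j\}$), or (b) we replace $\{C_i, C_j\}$ with $\{C'_i, C'_j\}$, the clean split of $C_i \cup C_j$ obtained from $T_{glob}$. I will show that neither branch raises overclustering error, keeping in mind that all other clusters are untouched, so only the overclustering contribution of the clusters involved in the merge request matters.

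For branch (a), I would invoke Lemma~\ref{lem:unrestricted-pure-merge} directly: if the algorithm reaches Step 3 and merges, then $C_i, C_j \subseteq C^{\ast}_l$ for some ground-truth cluster $C^{\ast}_l$. Hence both $C_i$ and $C_j$ contribute $0$ to the overclustering error before the merge, and the resulting cluster $C_i \cup C_j$, still contained in $C^{\ast}_l$, also contributes $0$. So the overclustering error is literally unchanged in this branch.

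For branch (b), I would argue using Lemma~\ref{lem:relaxed-main-lemma}(a): the call Split($C_i \cup C_j$) produces a clean split of $C_i \cup C_j$ with respect to $\mathcal{C}^{\ast}$. Let $k(A)$ denote the number of ground-truth clusters intersecting a set $A$, so the overclustering contribution of $A$ is $k(A) - 1$. Cleanness means every ground-truth cluster that meets $C_i \cup C_j$ is contained (restricted to $C_i \cup C_j$) in exactly one of $C'_i$, $C'_j$. Therefore
\[
k(C'_i) + k(C'_j) \;=\; k(C_i \cup C_j) \;=\; k(C_i) + k(C_j) - m,
\]
where $m \ge 0$ is the number of ground-truth clusters that intersect both $C_i$ and $C_j$. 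Subtracting $2$ from both sides gives
\[
(k(C'_i)-1) + (k(C'_j)-1) \;=\; (k(C_i)-1) + (k(C_j)-1) - m \;\le\; (k(C_i)-1) + (k(C_j)-1),
\]
so the overclustering contribution of $\{C'_i, C'_j\}$ is at most that of $\{C_i, C_j\}$.

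The main obstacle I anticipate is simply being careful about branch (b), since it is the only branch in which the cluster structure actually changes nontrivially; the key point I want to make explicit is that cleanness of the split of $C_i \cup C_j$ forces $k(C'_i) + k(C'_j) = k(C_i \cup C_j)$ (no ground-truth cluster is ``re-split'' by the procedure), and that $k(C_i \cup C_j) \le k(C_i) + k(C_j)$ is immediate from the fact that any ground-truth cluster meeting $C_i \cup C_j$ must meet at least one of $C_i, C_j$. Together with the observation that all other clusters are untouched by the merge procedure, these two bounds give the lemma.
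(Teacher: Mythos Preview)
Your approach is essentially the paper's: both reduce to the fact that the output clusters $C'_i, C'_j$ come from a clean split, so no ground-truth cluster meets both, giving $k(C'_i)+k(C'_j)\le k(C_i)+k(C_j)$. The paper case-splits on whether $C_i,C_j$ are both pure (subsets of the same $C^{\ast}_l$); you case-split on the algorithm's branch. These are close but not identical, and your choice introduces a small gap.

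In your branch~(b) you invoke Lemma~\ref{lem:relaxed-main-lemma}(a) to say that $\mathrm{Split}(C_i\cup C_j)$ is clean. But the proof of that lemma uses that the cluster being split contains points from at least two ground-truth clusters (this is how split requests arise). Branch~(b) can occur even when $C_i,C_j\subseteq C^{\ast}_l$ for a single $C^{\ast}_l$: nothing forces $T_{glob}$ to separate exactly $C_i$ from $C_j$, so the algorithm may output $\{C'_i,C'_j\}\neq\{C_i,C_j\}$ and land in branch~(b). In that sub-case the split is \emph{not} clean ($C^{\ast}_l$ meets both sides), and your displayed equality $k(C'_i)+k(C'_j)=k(C_i\cup C_j)$ reads $2=1$. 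The desired conclusion is still fine there (the overclustering contribution is $0$ before and after), but your argument as written does not cover it. The fix is either to handle this sub-case separately, or to case-split on purity as the paper does: if both are pure, overclustering is trivially unchanged regardless of which branch fires; otherwise $C_i\cup C_j$ meets at least two ground-truth clusters, Lemma~\ref{lem:relaxed-main-lemma}(a) genuinely applies, and your inequality chain goes through.
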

\begin{proof}
Suppose $C_{i}$ and $C_{j}$ are not both ``pure'' (one or both contain elements from several ground-truth clusters), and hence we obtain two new clusters $C'_{i}$, $C'_{j}$.  Let us call $\delta_{1}$ the overclustering error before the merge, and $\delta_{2}$ the overclustering error after the merge.  Let's use $k_{1}$ to refer to the number of ground-truth clusters that intersect $C_{i}$, $k_{2}$ to refer to the number of ground-truth clusters that intersect $C_{j}$, and define $k_{1}'$ and $k_{2}'$ similarly.  The new clusters $C'_{i}$ and $C'_{j}$ result from a ``clean'' split, therefore no ground-truth cluster may intersect both of them.  It follows that $k_{1}' + k_{2}' \le k_{1} + k_{2}$.  Therefore we now have:
\begin{eqnarray*}
\delta_{2} & = & \delta_{1} - (k_{1} - 1) - (k_{2} - 1) + (k_{1}' - 1) + (k_{2}' - 1)\\
& = & \delta_{1} - (k_{1} + k_{2}) + (k_{1}' + k_{2}') \le \delta_{1}.
\end{eqnarray*}
If $C_{i}$ and $C_{j}$ are both ``pure'' (both are subsets of the same ground-truth cluster), then clearly the merge operation has no effect on the overclustering error.
\end{proof}

The following lemmas bound the number of impure and pure merges.  Here we call a proposed merge \emph{pure} if both clusters are subsets of the same ground-truth cluster, and \emph{impure} otherwise.

\begin{lemma}
\label{lem:merge-unrestricted}
The merge algorithm in Figure~\ref{fig:merge-average-linkage-unrestricted} requires at most $\delta_{u}$ impure merge requests.
\end{lemma}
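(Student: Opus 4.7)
The plan is to show that the underclustering error $\delta_u$ is non-increasing throughout the algorithm and that each impure merge strictly decreases it by at least one; since $\delta_u$ starts at the initial underclustering value and stays non-negative, at most $\delta_u$ impure merges can occur.

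First I would invoke Lemma~\ref{lem:unrestricted-pure-merge} (contrapositive) to observe that on an impure merge, $C_i \cup C_j$ contains points from more than one target cluster, so by Lemma~\ref{lem:relaxed-main-lemma}(a) (applied to the split of $C_i \cup C_j$) the clean partition $C'_i, C'_j$ is not equal to $\{C_i, C_j\}$ and hence Step~3 installs $C'_i, C'_j$ in place of $C_i, C_j$. Now the key step is to track, for each target cluster $C^{\ast}_m$, the number of proposed clusters intersecting it. By the unrestricted-merge assumption, there exists some $C^{\ast}_l$ with $C^{\ast}_l \cap C_i \ne \emptyset$ and $C^{\ast}_l \cap C_j \ne \emptyset$, so both $C_i$ and $C_j$ count toward the intersection count of $C^{\ast}_l$. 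By the cleanness of the split, all of $C^{\ast}_l \cap (C_i \cup C_j)$ lands in exactly one of $C'_i, C'_j$; so among $\{C'_i, C'_j\}$ only one cluster intersects $C^{\ast}_l$, a drop of at least one. For every other target cluster $C^{\ast}_m$, the clean-split property forces $C^{\ast}_m \cap (C_i \cup C_j)$ into a single one of $C'_i, C'_j$, so the number of clusters among $\{C'_i, C'_j\}$ intersecting $C^{\ast}_m$ is at most the number of clusters among $\{C_i, C_j\}$ intersecting $C^{\ast}_m$. Summing over $m$, $\delta_u$ drops by at least one.

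Next I would verify monotonicity: splits (Figure~\ref{fig:split-average-linkage}) are clean, so they do not change which target clusters intersect how many proposed clusters (each $C^{\ast}_m \cap C_i$ is preserved intact inside one child), leaving $\delta_u$ unchanged; pure merges either leave the clustering unchanged (when the clean split of $C_i \cup C_j$ returns $\{C_i, C_j\}$ and the two are fused, reducing the count for the common target cluster by one) or replace $\{C_i, C_j\}$ by a different clean pair $\{C'_i, C'_j\}$ in which case the count for the enclosing target cluster is unchanged and no other target cluster's count rises (since $C_i \cup C_j \subseteq C^{\ast}_l$ means no other cluster intersects the involved points at all). In particular $\delta_u$ is never increased by any operation the algorithm performs.

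Combining the two observations: $\delta_u$ starts at its initial value, only decreases, and drops by at least one every time an impure merge occurs, so the total number of impure merges is at most $\delta_u$. I expect the main subtlety to be the bookkeeping in the second step above — in particular ruling out the possibility that some third cluster $C_h$ (not involved in the merge) that already intersects $C^{\ast}_l$ somehow offsets the gain — but since $C_h$ is untouched by a local merge on $C_i, C_j$, its intersection with $C^{\ast}_l$ is unchanged and the accounting goes through cleanly.
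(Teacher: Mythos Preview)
Your argument is correct and follows essentially the same route as the paper: both proofs show that an impure merge strictly decreases $\delta_u$ by tracking, for each target cluster $C^{\ast}_m$, how many proposed clusters intersect it, and using the clean-split property to conclude that the count drops for the shared $C^{\ast}_l$ and does not rise for any other $C^{\ast}_m$. You are in fact slightly more careful than the paper, which silently assumes (but does not verify) that splits and pure merges cannot increase $\delta_u$; your explicit check of monotonicity under those operations closes that small gap. One wording slip: in your pure-merge case you write ``leave the clustering unchanged'' for the sub-case where the split returns $\{C_i,C_j\}$ and the two are fused---the clustering does change there (the two clusters become one), though as you correctly note $\delta_u$ only decreases.
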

\begin{proof}
We argue that the result of each impure merge request must reduce the underclustering error by at least 1.  Suppose the oracle requests to merge $C_{i}$ and $C_{j}$, and $C'_{i}$ and $C'_{j}$ are the resulting clusters.  Clearly, the local edit has no effect on the underclustering error with respect to target clusters that do not intersect $C_{i}$ or $C_{j}$.  In addition, because the new clusters $C'_{i}$ and $C'_{j}$ result from a \emph{clean} split, for target clusters that intersect exactly one of $C_{i}$, $C_{j}$, the underclustering error must stay the same.  For target clusters that intersect both $C_{i}$ and $C_{j}$, the underclustering error must decrease by exactly one; the number of such target clusters is at least one.
\end{proof}

\begin{lemma}
\label{lem:merge-prob-unrestricted}
The probability that the algorithm in Figure~\ref{fig:merge-average-linkage-unrestricted} requires more than $O(\log \frac{k}{\epsilon} {\delta}^2_{u})$ pure merge requests is less than $\epsilon$.
\end{lemma}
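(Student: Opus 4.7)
The plan is to decompose pure merge requests into \emph{successful} ones (where the algorithm actually combines the two clusters) and \emph{unsuccessful} ones (where the algorithm replaces them with the result of splitting their union), and to bound the two counts separately. The successful count is at most $\delta_u$: each successful pure merge of $(C_i, C_j)$ with $C_i, C_j \subseteq C^*_l$ combines two pure subsets of $C^*_l$, decreasing the number of clusters that intersect $C^*_l$ by one, and hence $\delta_u$ by one. Since splits preserve $\delta_u$ (by the clean-split property established in Lemma~\ref{lem:relaxed-main-lemma}) and impure merges only decrease it (Lemma~\ref{lem:merge-unrestricted}), $\delta_u$ is nonincreasing throughout, so at most $\delta_u$ successful pure merges can occur.

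For unsuccessful merges, I would rely on the following observation about the procedure in Figure~\ref{fig:merge-average-linkage-unrestricted}: after an unsuccessful processing of $(C_i, C_j)$ yielding replacements $(C'_i, C'_j)$, we have $C'_i \cup C'_j = C_i \cup C_j$, and since $\mathrm{Split}$ is deterministic on its input set, a subsequent request to merge $C'_i$ with $C'_j$ would invoke $\mathrm{Split}(C'_i \cup C'_j) = \mathrm{Split}(C_i \cup C_j)$ and recover exactly $\{C'_i, C'_j\}$, triggering the successful branch. Thus every unsuccessful pure merge leaves behind a ``primed'' pair of pure subsets of the same target cluster $C^*_l$ that is guaranteed to succeed if requested next.

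Using this, I would fix a target cluster $C^*_l$ and analyze the pure merges restricted to $C^*_l$ as a random sub-process. Conditional on the oracle's next request being a pure merge in $C^*_l$, the pair is drawn uniformly among the $\binom{m_l}{2}$ such feasible pairs, so the primed pair is selected with probability at least $1/\binom{m_l}{2}$. The waiting times between consecutive successes in $C^*_l$ are therefore stochastically dominated by geometric random variables of mean $O(m_l^2)$, and a Chernoff-type bound applied to the sum of at most $m_l$ such waits gives, with probability $1-\epsilon/k$, at most $O(m_l^2 \log(k/\epsilon))$ pure merges involving $C^*_l$. A union bound over the $k$ target clusters, together with $\sum_l m_l^2 \le (\max_l m_l)(\sum_l m_l) = O(\delta_u^2)$ (from $m_l \le \delta_u+1$ and $\sum_l m_l \le \delta_u + k$), yields the claimed $O(\delta_u^2 \log(k/\epsilon))$ total.

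The main technical obstacle will be managing interleaving. Between two consecutive pure merge requests in $C^*_l$, the oracle may issue impure merges, splits of impure clusters, or pure merges in other target clusters, and some of these can destroy the primed pair in $C^*_l$ (for instance, an impure merge whose clusters include one of the primed pair's clusters, or a split of an impure cluster that creates new pure subsets of $C^*_l$, transiently increasing $m_l$). I would argue that each such destroying event is either charged against the separate $O(\delta_u)$ budget for impure merges or the $O(\delta_o)$ budget for splits, or itself produces a new primed pair in some $C^*_{l'}$ by the same deterministic-$\mathrm{Split}$ reasoning, so the geometric waiting-time analysis remains valid up to constant overhead.
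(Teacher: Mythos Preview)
Your decomposition into successful and unsuccessful pure merges, and the observation that an unsuccessful merge leaves behind a ``primed'' pair $(C'_i,C'_j)$ guaranteed to succeed if requested next, is exactly the mechanism the paper exploits. The gap is quantitative: the probability estimate you derive does not yield the bound you state.

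You condition on the next request being a pure merge inside $C^*_l$ and obtain success probability at least $1/\binom{m_l}{2}=\Omega(1/m_l^2)$. Then you claim that $m_l$ geometric waits of mean $O(m_l^2)$ sum to $O(m_l^2\log(k/\epsilon))$. They do not: the expected sum is $O(m_l^3)$, and the high-probability bound is $O(m_l^3\log(k/\epsilon))$. Pushing this through your union bound gives $O(\delta_u^3\log(k/\epsilon))$, a factor of $\delta_u$ off.

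The paper recovers that factor by running the race over \emph{all} merge requests rather than conditioning on $C^*_l$. Because $C'_i$ and $C'_j$ are pure subsets of $C^*_l$, any feasible merge involving either of them must be with one of the at most $\delta_i$ other clusters intersecting $C^*_l$. Hence, among all feasible merges, the ratio of ``request the primed pair'' to ``request something that disturbs the primed pair'' is at least $1:2\delta_i$, so a repeated request succeeds with probability $\Omega(1/\delta_i)$ rather than $\Omega(1/\delta_i^2)$. With $O(\delta_i)$ successes needed, this gives $O(\delta_i^2)$.

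The paper also replaces your interleaving discussion with an identifier scheme: clusters intersecting $C^*_l$ carry identifiers, a new identifier is issued only when the underclustering error with respect to $C^*_l$ drops, so at most $O(\delta_i)$ identifiers ever exist. ``Original'' identifier-pairs thus number $O(\delta_i^2)$, and each ``repeated'' pair has the $\Omega(1/\delta_i)$ success probability above. This absorbs the effect of impure merges automatically (they simply retire or reissue identifiers), so no separate charging argument is needed. Incidentally, your concern that a clean split could transiently increase $m_l$ is unfounded: by the clean-split property exactly one child inherits all the $C^*_l$-points, so $m_l$ is preserved by splits.
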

\begin{proof}
We first consider the pure merge requests involving points from some ground-truth cluster $C^{\ast}_{i}$, the total number of pure merge requests (involving any ground-truth cluster) can then be bounded with a union-bound.

To facilitate our argument, let us assign an identifier to each cluster containing points from $C^{\ast}_{i}$ in the following manner:
\begin{enumerate}
\item Maintain a CLUSTER-ID variable, which is initialized to 1.
\item To assign a ``new'' identifier to a cluster, set its identifier to CLUSTER-ID, and increment CLUSTER-ID.
\item In the initial clustering, assign a \emph{new} identifier to each cluster containing points from $C^{\ast}_{i}$.
\item When we split a cluster containing points from $C^{\ast}_{i}$, assign its identifier to the newly-formed cluster containing points from $C^{\ast}_{i}$.
\item When we merge two clusters and one or both of them are impure, if one of the clusters contains points from $C^{\ast}_{i}$, assign its identifier to the newly-formed cluster containing points from $C^{\ast}_{i}$.  If both clusters contain points from $C^{\ast}_{i}$, assign a \emph{new} identifier to the newly-formed cluster containing points from $C^{\ast}_{i}$.
\item When we merge two clusters $C_{1}$ and $C_{2}$, and both contain only points from $C^{\ast}_{i}$, if the outcome is one new cluster, assign it a \emph{new} identifier.  If the outcome is two new clusters, assign them the identifiers of $C_{1}$ and $C_{2}$.
\end{enumerate}

Clearly, when clusters containing points from $C^{\ast}_{i}$ are assigned identifiers in this manner, the maximum value of CLUSTER-ID is bounded by $O(\delta_{i})$, where $\delta_{i}$ denotes the underclustering error of the initial clustering with respect to $C^{\ast}_{i}$: $\delta_{i} = \dist(C^{\ast}_{i}, C)$.  To verify this, consider that we assign exactly $\delta_{i}+1$ new identifiers in Step-3, and each time we assign a new identifier in Steps 5 and 6, the underclustering error of the edited clustering with respect to $C^{\ast}_{i}$ decreases by one.

We say that a \emph{pure} merge request involving points from $C^{\ast}_{i}$ is \emph{original} if the user has never asked us to merge clusters with the given identifiers, otherwise we say that this merge request is \emph{repeated}.  Given that the maximum value of CLUSTER-ID is bounded by $O(\delta_{i})$, the total number of \emph{original} merge requests must be $O(\delta_{i}^{2})$.  We now argue that if a merge request is not original, we can lower bound the probability that it will result in the merging of the two clusters.

For repeated merge request $M_{i} = Merge(C_{1},C_{2})$, let $X_{i}$ be a random variable defined as follows:
\begin{displaymath}
X_{i} = \left\{ \begin{array}{ll}
1 & \textrm{if neither $C_{1}$ nor $C_{2}$ have been involved in}\\
& \textrm{a merge request since the last time a merge of}\\
& \textrm{clusters with these identifiers was proposed.}\\
0 & \textrm{otherwise.}\\
\end{array} \right.
\end{displaymath}

Clearly, when $X_{i} = 1$ it must be the case that $C_{1}$ and $C_{2}$ are merged.  We observe that $\textrm{Pr} \lbrack X_{i} = 1 \rbrack > \frac{1}{2 \delta_{i} + 1}$.  To verify this, observe that in each step the probability that the user requests to merge $C_{1}$ and $C_{2}$ is $\frac{1}{m}$, and the probability that the user requests to merge $C_{1}$ or $C_{2}$ with some other cluster is less than $\frac{2 \delta_{i}}{m}$, where $m$ is the total number of possible merge requests; we can then bound the probability that the former happens before the latter.

We can then use a Chernoff bound to argue that after $t = O(\log \frac{k}{\epsilon} \delta_{i}^{2})$ \emph{repeated} merge requests, the probability that $\sum_{i=1}^{t}X_{i} < \delta_{i}$ (which must be true if we need more \emph{repeated} merge requests) is less than $\epsilon/k$.  Therefore, the probability that we need more than $O(\log \frac{k}{\epsilon} \delta_{i}^{2})$ \emph{repeated} merge requests is less than $\epsilon/k$.

By the union-bound, the probability that we need more than $O(\log \frac{k}{\epsilon} \delta_{i}^{2})$ \emph{repeated} merge requests for \emph{any} ground-truth cluster $C^{\ast}_{i}$ is less than $k \cdot \epsilon/k = \epsilon$.  Therefore with probability at least $1 - \epsilon$ for all ground-truth clusters we need $\sum_{i} O(\log \frac{k}{\epsilon} \delta_{i}^{2}) = O (\log \frac{k}{\epsilon} \sum_{i} \delta_{i}^{2}) = O(\log \frac{k}{\epsilon} \delta_{u}^2)$ \emph{repeated} merge requests, where $\delta_{u}$ is the underclustering error of the original clustering.  Similarly, for all ground-truth clusters we need $\sum_{i} O(\delta_{i}^{2}) = O(\delta_{u}^{2})$ \emph{original} merge requests.  Adding the two terms together, it follows that with probability at least $1-\epsilon$ we need a total of $O(\log \frac{k}{\epsilon} \delta_{u}^{2})$ pure merge requests.
\end{proof}

As in the previous section, we also restate the run-time bound in Theorem~\ref{thm:strict-separation-random} in terms of any \emph{natural} clustering error $\gamma$.  The following collorary follows from Theorem~\ref{thm:strict-separation-random} and Theorem~\ref{thm:generalized-clustering-error}.

\begin{corollary}
\label{thm:strict-separation-random-generalized}
Suppose the target clustering satisfies stability, and the initial clustering has clustering error $\gamma$, where $\gamma$ is any \emph{natural} clustering error as defined in Definition~\ref{def:natural-clustering-error}.  In the unrestricted-merge model, with probability at least $1-\epsilon$, the algorithms in Figure~\ref{fig:split-average-linkage} and Figure~\ref{fig:merge-average-linkage-unrestricted} require $O(\log \frac k {\epsilon} {{\gamma}^2})$ edit requests to find the target clustering.
\end{corollary}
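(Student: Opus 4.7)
The plan is to derive this corollary as an immediate consequence of the two results it cites, by reducing any natural clustering error $\gamma$ to the under/overclustering error $\delta = \delta_u + \delta_o$ for which we already have a bound. The key observation is that Theorem~\ref{thm:generalized-clustering-error} tells us $\gamma(\mathcal{C},\mathcal{C}^{\ast}) \ge \delta(\mathcal{C},\mathcal{C}^{\ast})$, and since both components $\delta_u$ and $\delta_o$ are nonnegative, we in particular have $\delta_u \le \gamma$ and $\delta_o \le \gamma$ for the initial clustering.

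I would then invoke Theorem~\ref{thm:strict-separation-random} directly. Under the hypotheses of the corollary (target clustering satisfies stability, unrestricted-merge model, same two algorithms), that theorem guarantees with probability at least $1-\epsilon$ that the algorithm terminates after at most $\delta_o$ split requests and $O(\log \tfrac{k}{\epsilon}\,\delta_u^2)$ merge requests. Substituting the bounds $\delta_o \le \gamma$ and $\delta_u \le \gamma$, the total number of edit requests is bounded by
\begin{equation*}
\delta_o + O\!\left(\log \tfrac{k}{\epsilon}\, \delta_u^2\right) \;\le\; \gamma + O\!\left(\log \tfrac{k}{\epsilon}\, \gamma^2\right) \;=\; O\!\left(\log \tfrac{k}{\epsilon}\, \gamma^2\right),
\end{equation*}
which matches the claimed bound.

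There is essentially no obstacle here; the corollary is a routine translation between error measures. The only subtle point worth noting is that Theorem~\ref{thm:generalized-clustering-error} is what licenses the substitution for \emph{any} natural error $\gamma$, since a priori a different natural error could measure the initial clustering very differently from $\delta$. Because the theorem provides a uniform lower bound $\gamma \ge \delta$ on initial error, however, the substitution is valid, and the proof reduces to one line citing the two earlier results.
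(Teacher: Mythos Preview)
Your proposal is correct and matches the paper's own argument, which simply states that the corollary follows from Theorem~\ref{thm:strict-separation-random} and Theorem~\ref{thm:generalized-clustering-error}. Your write-up spells out the substitution $\delta_u,\delta_o \le \delta \le \gamma$ explicitly, which is exactly the intended one-line derivation.
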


As in the previous section, if the data satisfies \emph{strcit separation}, then instead of the split procedure in Figure~\ref{fig:split-average-linkage} we can use the procedure in Figure~\ref{fig:local-split-average-linkage}, which uses the local average-linkage tree (constructed from only the points in the user request).  We can then obtain the same performance guarantee as in Theorem~\ref{thm:strict-separation-random} for the algorithms in Figure~\ref{fig:local-split-average-linkage} and Figure~\ref{fig:merge-average-linkage-unrestricted}.

\section{Experimental Results}
\label{sec:experiments}

We perform two sets of experiments: we first test the proposed split procedure on the clustering of business listings maintained by Google, and also test the proposed framework in its entirety on the much smaller newsgroup documents data set.

\subsection{Clustering business listings}

Google maintains a large collection of data records representing businesses.  These records are clustered using a similarity function; each cluster should contain records about the same distinct business; each cluster is summarized and served to users online via various front-end applications.  Users report bugs such as ``you are displaying the name of one business, but the address of another" (caused by over-clustering), or ``a particular business is shown multiple times" (caused by under-clustering).  These bugs are routed to operators who examine the contents of the corresponding clusters, and request splits/merges accordingly.  The clusters involved in these requests may be quite large and usually contain records about several businesses.  Therefore automated tools that can perform the requested edits are very helpful.

In particular, here we evaluate the effectiveness of our proposed split procedure in computing correct cluster splits.  We consider a binary split correct if the two resulting sub-clusters are ``clean'' using Definition~\ref{def:clean-split}, and consider the split incorrect otherwise.  Note that a clean split is sufficient and necessary for reducing the under/overclustering error.  To compute the splits, we use the algorithm in Figure~\ref{fig:local-split-average-linkage}, which we refer to as \emph{Clean-Split}.  This algorithm is easier to implement and run than the algorithm in Figure~\ref{fig:split-average-linkage} because we do not need to compute the global average-linkage tree.  But it is still provably correct under stronger assumptions on the data~(see Theorem~\ref{thm:strict-separation} and Theorem~\ref{thm:strict-threshold-separation}).

For comparison purposes, we use two well-known techniques for computing binary splits: the optimal 2-median clustering (\emph{2-Median}), and a ``sweep'' of the second-smallest eigenvector of the corresponding Laplacian matrix.  Let $\{v_{1},\ldots,v_{n}\}$ be the order of the vertices when sorted by their eigenvector entries, we compute the partition $\{v_{1},\ldots,v_{i}\}$ and $\{v_{i+1},\ldots,v_{n}\}$ such that its conductance is smallest (\emph{Spectral-Balanced}), and a partition such that the similarity between $v_{i}$ and $v_{i+1}$ is smallest (\emph{Spectral-Gap}).

\begin{table}[ht]
\caption{Number of correct (clean) splits}
\centering
\begin{tabular}{c c c c} 
\hline\hline 
Clean-Split & 2-Median & Spectral-Gap & Spectral-Balanced \\ [0.5ex] 
\hline 
19 & 13 & 12 & 3 \\ [0.5ex] 
\hline 
\end{tabular}
\label{table:split-known-over-clusters} 
\end{table}

We compare the split procedures on 20 over-clusters that were discovered during a clustering-quality evaluation\footnote{the data set is available at \href{http://voevodski.org/data/businessListingsDatasets/description.html}{voevodski.org/data/businessListingsDatasets/description.html}.}. The results are presented in Table~\ref{table:split-known-over-clusters}.  We observe that the \emph{Clean-Split} algorithm works best, giving a correct split in 19 out of the 20 cases.  The well-known \emph{Spectral-Balanced} technique usually does not give correct splits for this application.  The balance constraint usually causes it to put records about the same business on both sides of the partition (especially when all the ``clean'' splits are not well-balanced), which increases clustering error.  As expected, the \emph{Spectral-Gap} technique improves on this limitation (because it does not have a balance constraint), but the result often still increases clustering error.  The \emph{2-Median} algorithm performs fairly well, but it may not be the right technique for this problem: the optimal centers may correspond to listings about the same business, and even if they represent distinct businesses, the resulting partition is still sometimes incorrect.

\begin{table}[ht]
\caption{Change in correlation-clustering error}
\centering
\begin{tabular}{c c c} 
\hline\hline 
Dataset & Clean-Split & 2-Median \\ [0.5ex] 
\hline 
1 & -14 & -14 \\ [0.5ex] 
2 & -5 & -5 \\ [0.5ex] 
3 & -11 & -11 \\ [0.5ex] 
4 & -117 & -117 \\ [0.5ex] 
5 & -42 & +90 \\ [0.5ex] 
6 & -4 & -4 \\ [0.5ex] 
7 & -12 & -30 \\ [0.5ex] 
8 & -27 & -27 \\ [0.5ex] 
9 & -6 & -6 \\ [0.5ex] 
10 & -6 & -6 \\ [0.5ex] 
11 & +6 & -8 \\ [0.5ex] 
12 & -10 & +14 \\ [0.5ex] 
13 & -6 & -6 \\ [0.5ex] 
14 & -12 & -22 \\ [0.5ex] 
15 & -6 & -6 \\ [0.5ex] 
16 & -10 & +14 \\ [0.5ex] 
17 & -11 & -27 \\ [0.5ex] 
18 & -10 & -10 \\ [0.5ex] 
19 & -11 & -5 \\ [0.5ex] 
20 & -10 & -10 \\ [0.5ex] 
\hline 
\end{tabular}
\label{table:split-known-over-clusters-cc-error} 
\end{table}

In addition to using the clean-split criterion, we also evaluate the computed splits using the correlation-clustering (cc) error.  We find that using this criterion \emph{Clean-Split} and \emph{2-Median} compute the best splits, while the other two algorithms perform significantly worse.  The results for \emph{Clean-Split} and \emph{2-Median} are presented in Table~\ref{table:split-known-over-clusters-cc-error}.  Note that a clean split is sufficient to reduce the correlation-clustering error, but it is not necessary.  Our experiments illustrate these observations: \emph{Clean-Split} makes progress in reducing the cc-error in 19 out of 20 cases (when the resulting split is clean), while \emph{2-Median} is able to  still reduce the cc-error even when the resulting split is not clean.  Overall, in 12 instances the two algorithms give a tie in performance; in 4 instances \emph{Clean-Split} makes more progress in reducing the correlation-clustering error; and in 4 instances \emph{2-Median} makes more progress.  Also note that \emph{Clean-Split} fails to reduce the cc-error only once; while \emph{2-Median} fails to reduce the cc-error 4 times.

\subsection{Clustering newsgroup documents}
In order to test our entire framework (the iterative application of our
algorithms), we perform computational experiments on newsgroup documents
data.\footnote{http://people.csail.mit.edu/jrennie/20Newsgroups/} The objects in these data sets are posts to twenty different online
forums (newsgroups).  We sample these data to compute 5 data sets of manageable
size (containing 276-301 elements), which are labeled A through E in the figures.  Each data set contains some documents from every newsgroup.  

Each post/document is represented by a term frequency - inverse
document frequency (tf-idf) vector~\cite{SB88}.  We use cosine similarity to
compare these vectors, which
gives a similarity measure between 0 and 1 (inclusive).  We compute an initial clustering by using the following procedure to perturb the ground-truth:
for each document we keep its ground-truth cluster assignment with probability $0.5$, and
otherwise reassign it to one of the other clusters, which is chosen uniformly at random.

In each iteration, we compute the set of all feasible splits and merges: a split of
a cluster is feasible if it contains points from 2 or more ground-truth clusters, and a merge is feasible if at least an $\eta$-
fraction of points in each cluster are from the same ground-truth cluster.
Then, we choose one of the
feasible edits uniformly at random, and ask the algorithm to compute the
corresponding edit. We continue this process until we find the ground-truth clustering or we reach 20000 iterations.  Note that for the $\eta$-merge model, our theoretical analysis is applicable to any edit-request sequence, but in our experiments for simplicity we still
select a feasible edit uniformly at random.

Our initial clusterings have over-clustering error of about 100, under-clustering error of about 100; and correlation-clustering error of about 5000.

We notice that for newsgroup documents it is difficult to compute average-linkage
trees that are very consistent with the ground-truth.  This observation was also made in other
clustering studies that report that the hierarchical trees constructed from these data have low
purity ~\cite{Telgarsky12, HellerG05}.  These observations suggest that these data are quite challenging for clustering algorithms.  To test how
well our algorithms can perform with better data, we prune the data sets
by repeatedly finding
the outlier in each target cluster and removing it, where the outlier is
the point with minimum sum-similarity to the other points in the target
cluster.
For each data set, we perform experiments with the original (unpruned) data
set, a pruned data set with 2 points removed per target cluster, and a
pruned
data set with 4 points removed per target cluster, which prunes 40 and 80
points, respectively (given
that we have 20 target clusters).

\begin{figure}
\centering
  \subfigure{\includegraphics[width=60mm,height=40mm]{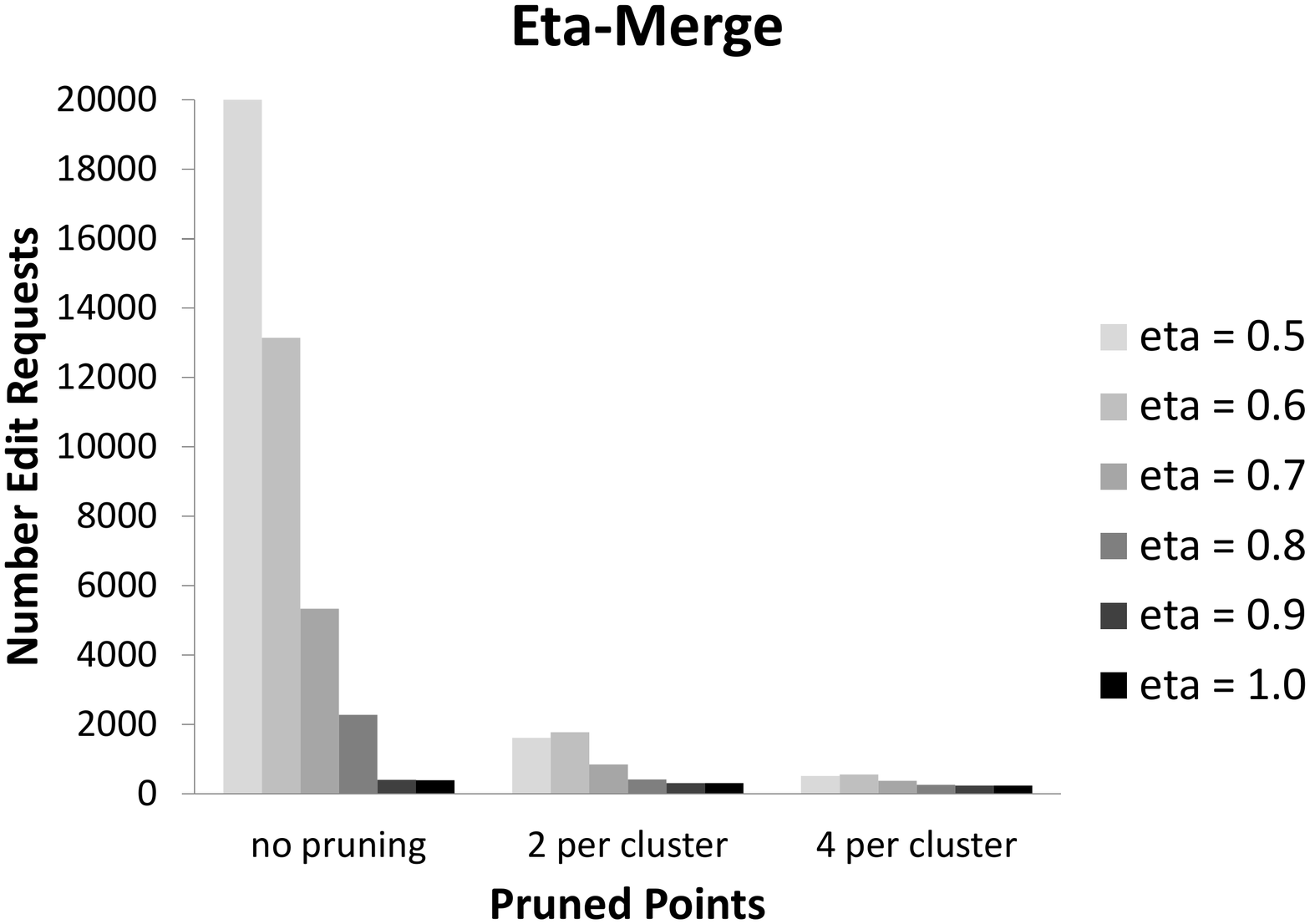}}
  \subfigure{\includegraphics[width=60mm,height=40mm]{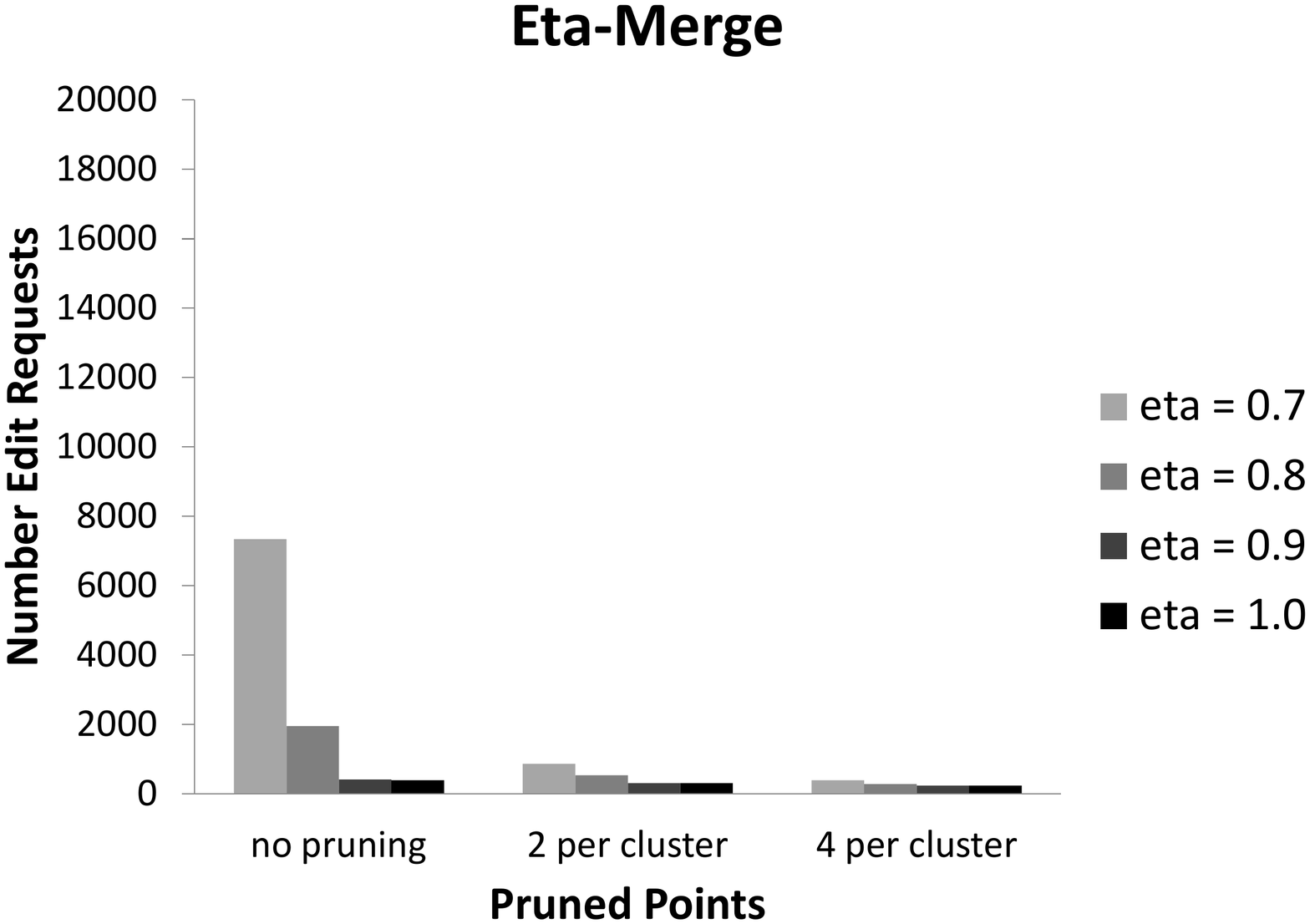}}
  \caption{Results in the $\eta$-merge model for data set A. The second chart corresponds to algorithms for correlation clustering error.}
    \label{fig:experiments-eta-merge}
	\end{figure}

\subsubsection{Experiments in the $\eta$-merge model}
We first experiment with local clustering algorithms in the $\eta$-restricted
merge setting. Here we use the algorithm in Figure~\ref{fig:split-average-linkage} to perform the splits, and the algorithm in Figure~\ref{fig:merge-average-linkage-relaxed} to perform the merges. We show the results of running our algorithm on data set A in Figure~\ref{fig:experiments-eta-merge}.  The complete experimental results are in the Apppendix.  We find that for larger settings of $\eta$, the number of edit requests (necessary to find the target clustering) is very favorable and is consistent with our theoretical analysis.  The results are better for pruned datasets, where we get very good performance regardless of the setting of $\eta$.  The results for algorithms in Figure~\ref{fig:split-average-linkage} and Figure~\ref{fig:merge-average-linkage-cc} (for the correlation-clustering objective) are very favorable as well.

\subsubsection{Experiments in the unrestricted-merge model}
We also experiment with algorithms in the unrestricted merge model. Here we
use the same algorithm to perform the splits, but use the algorithm in
Figure~\ref{fig:merge-average-linkage-unrestricted} to perform the merges.
We show the results on dataset A in Figure~\ref{fig:experiments-unrestricted-merge}.  The complete experimental results are in the Apppendix.  We find that for larger settings of $\eta$ our results are better than our theoretic analysis (we only show results for $\eta \ge 0.5$), and performance improves further for pruned datasets.  Our investigations show that for unpruned datasets and smaller settings of $\eta$, we are still able to quickly get close to the target clustering, but the algorithms are not able to converge to the target due to inconsistencies in the average-linkage tree.  We can address some of these inconsistencies by constructing the tree in a more robust way, which indeed gives improved performance for unpruned data sets.

\begin{figure}
\centering
\subfigure{\includegraphics[width=60mm,height=40mm]{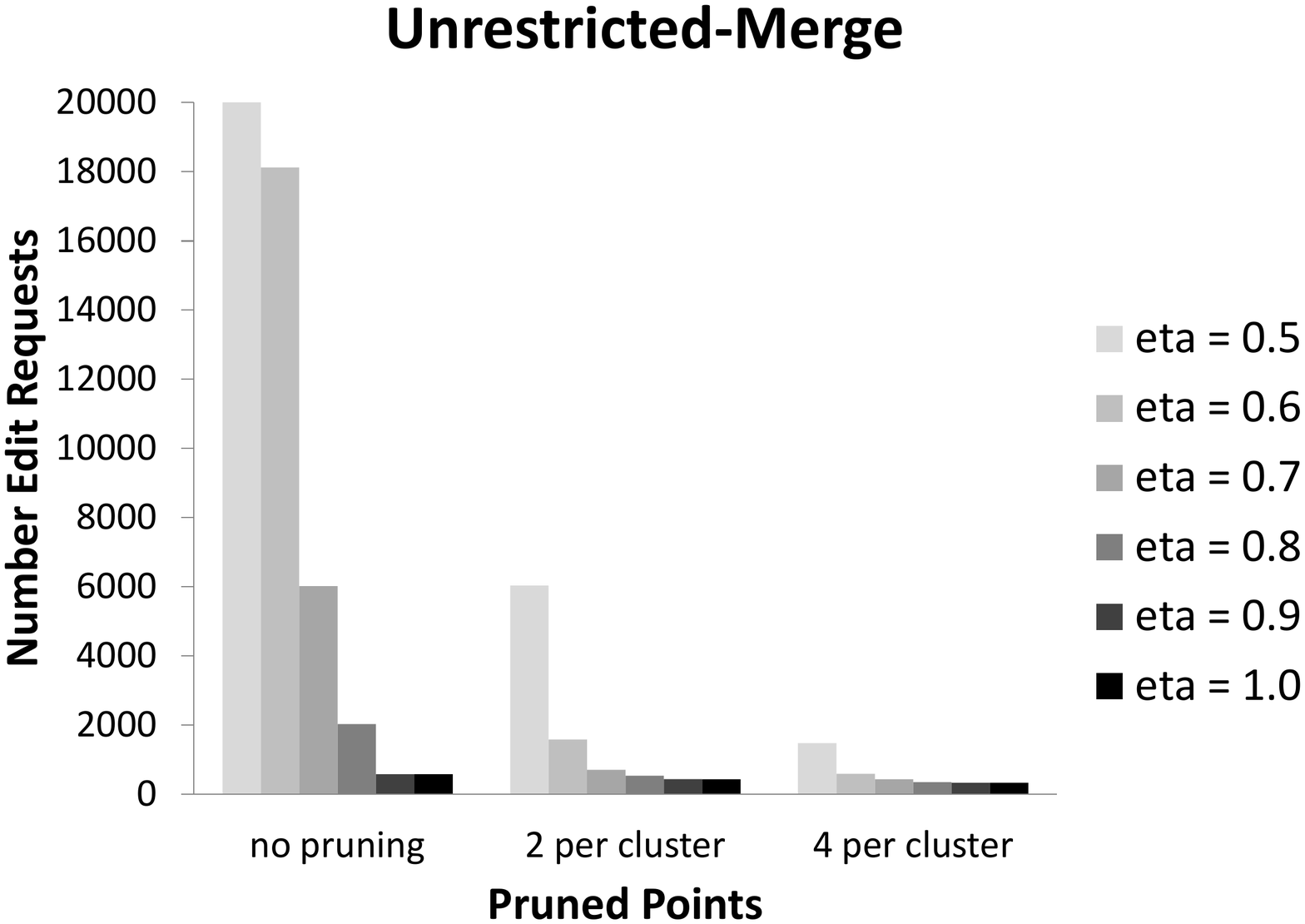}}
  \caption{Results in the unrestricted merge model for data set A.}
  \label{fig:experiments-unrestricted-merge}
  \end{figure}

\subsubsection{Experiments with small initial error}
{
We also consider a setting where the initial clustering is already very accurate.  In order to simulate this scenario, when we compute the initial clustering, for each document we keep its ground-truth cluster assignment with probability $0.95$, and otherwise reassign it to one of the other clusters, which is chosen uniformly at random.  This procedure usually gives us initial clusterings with over-clustering and under-clustering error between 5 and 20, and correlation-clustering error between 500 and 1000.  As expected, in this setting our interactive algorithms perform much better, especially on pruned data sets.  Figure~\ref{fig:small-error-experiments} displays the results; we can see that in these cases it often takes less than one hundred edit requests to find the target clustering in both models.}
\begin{figure}[!t]
\centering
\subfigure{\label{fig:small-starting-error-first-model}\includegraphics[width=60mm,height=50mm]{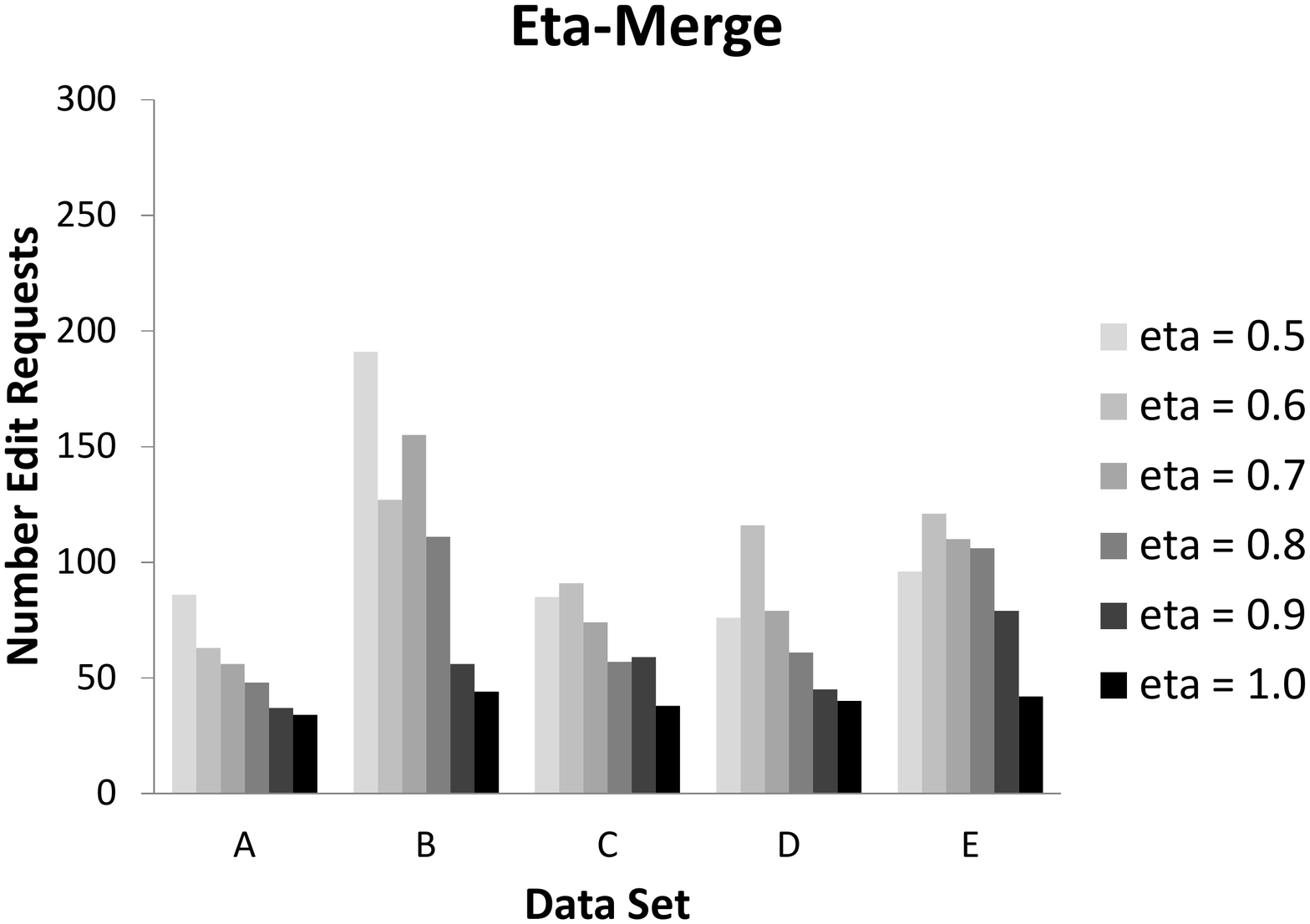}}
\subfigure{\label{fig:small-starting-error-cc-model}\includegraphics[width=60mm,height=50mm]{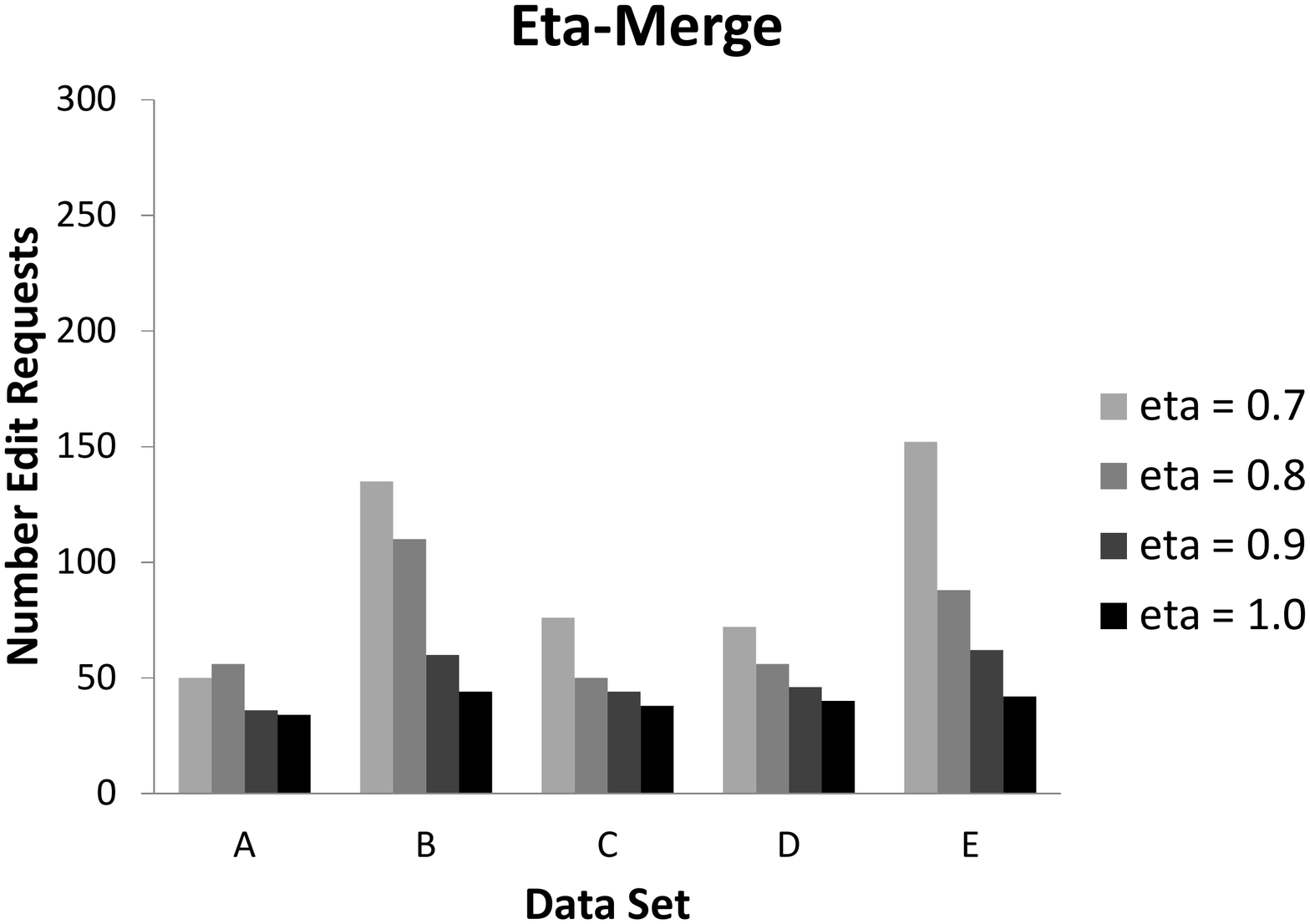}}
\subfigure{\label{fig:small-starting-error-second-model}\includegraphics[width=60mm,height=50mm]{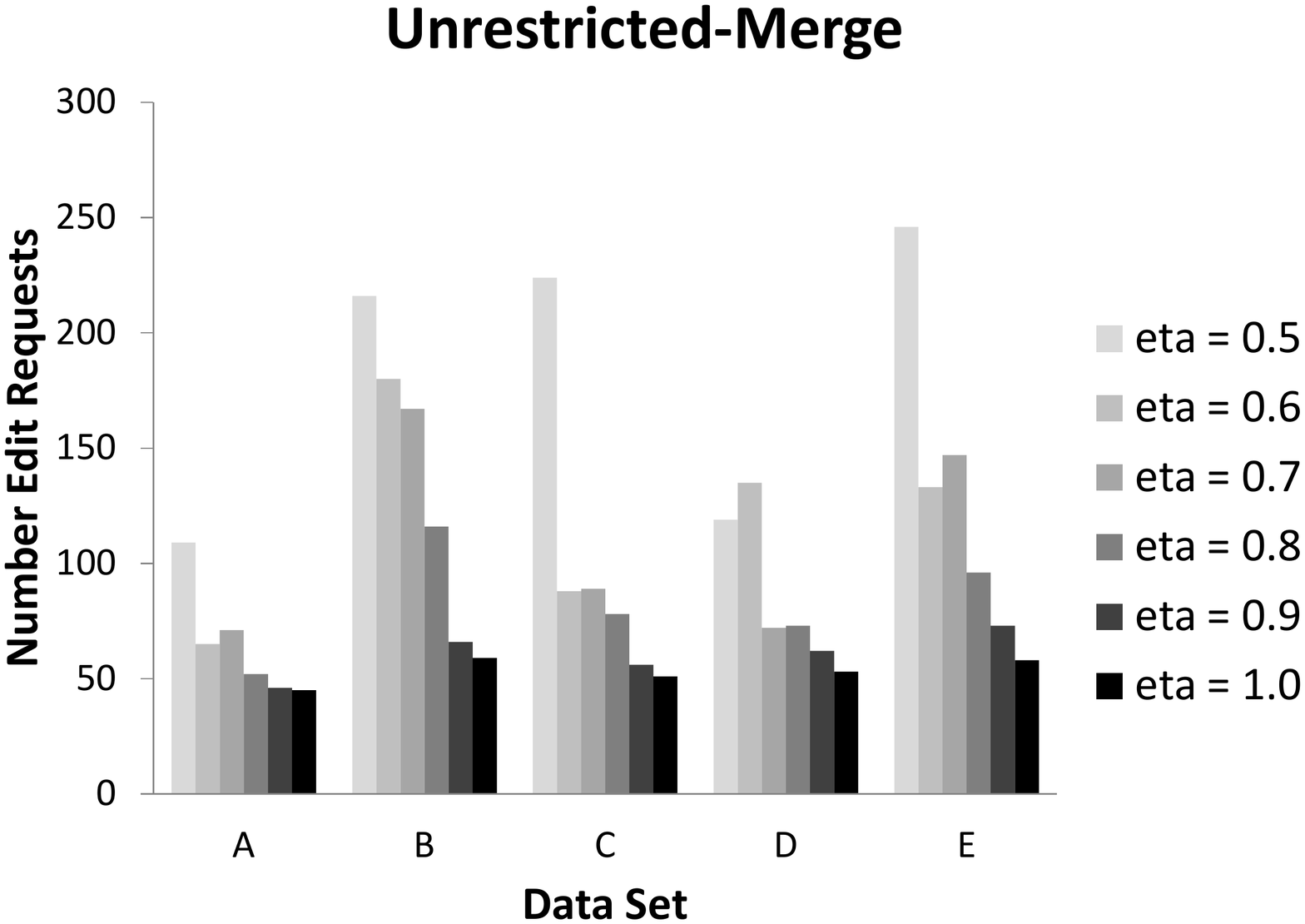}}
  \caption{Results for initial clusterings with small error. Results presented for pruned data sets (4 points per cluster). The second chart corresponds to algorithms for correlation clustering error.}
  \label{fig:small-error-experiments}
\end{figure}

\subsubsection{Improved performance using a robust average-linkage tree}

When we investigate the inconsistencies in the average linkage trees, we observe that there are ``outlier'' points that are attached near the root of the tree, which are incorrectly split off and re-merged by the algorithm without making any progress towards finding the target clustering.

We can address these outliers by constructing the average-linkage tree in a more robust way: first find groups (``blobs'') of similar points of some minimum size, compute an average-linkage tree for each group, and then merge these trees using average-linkage.  The tree constructed in such fashion may then be used by our algorithms.

We tried this approach, using Algorithm 2 from \cite{BG10} to compute the ``blobs''.  We find that using the robust average-linkage tree gives better performance for the unpruned data sets, but gives no gains for the pruned data sets.  Figure~\ref{fig:tree-construction-experiments} displays the comparison for the five unpruned data sets.  For the pruned data sets, it's likely that the robust tree and the standard tree are very similar, which explains why there is little difference in performance (results not shown).
\begin{figure}[!t]
  \centering
\subfigure{\includegraphics[width=60mm,height=50mm]{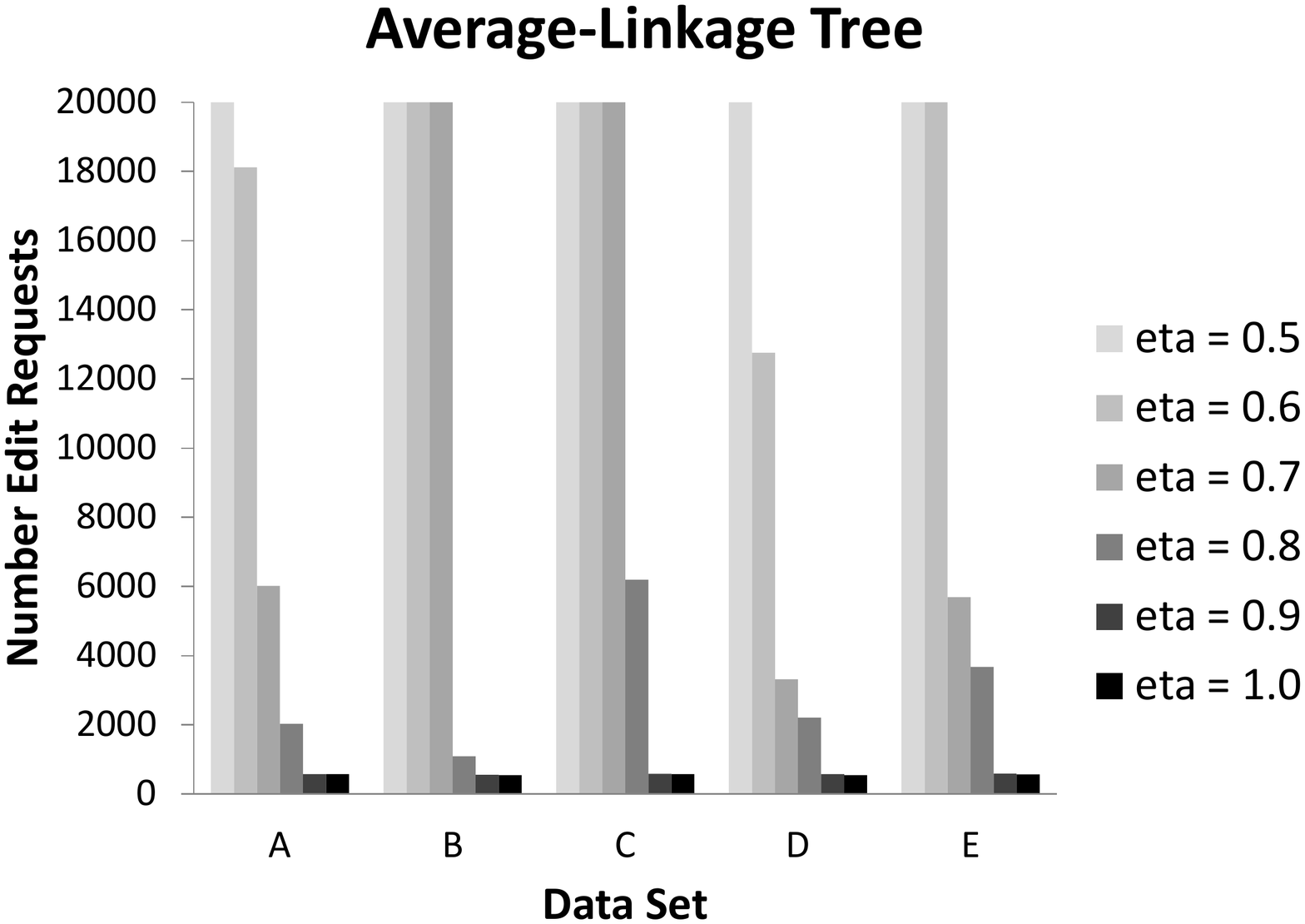}}
\subfigure{\includegraphics[width=60mm,height=50mm]{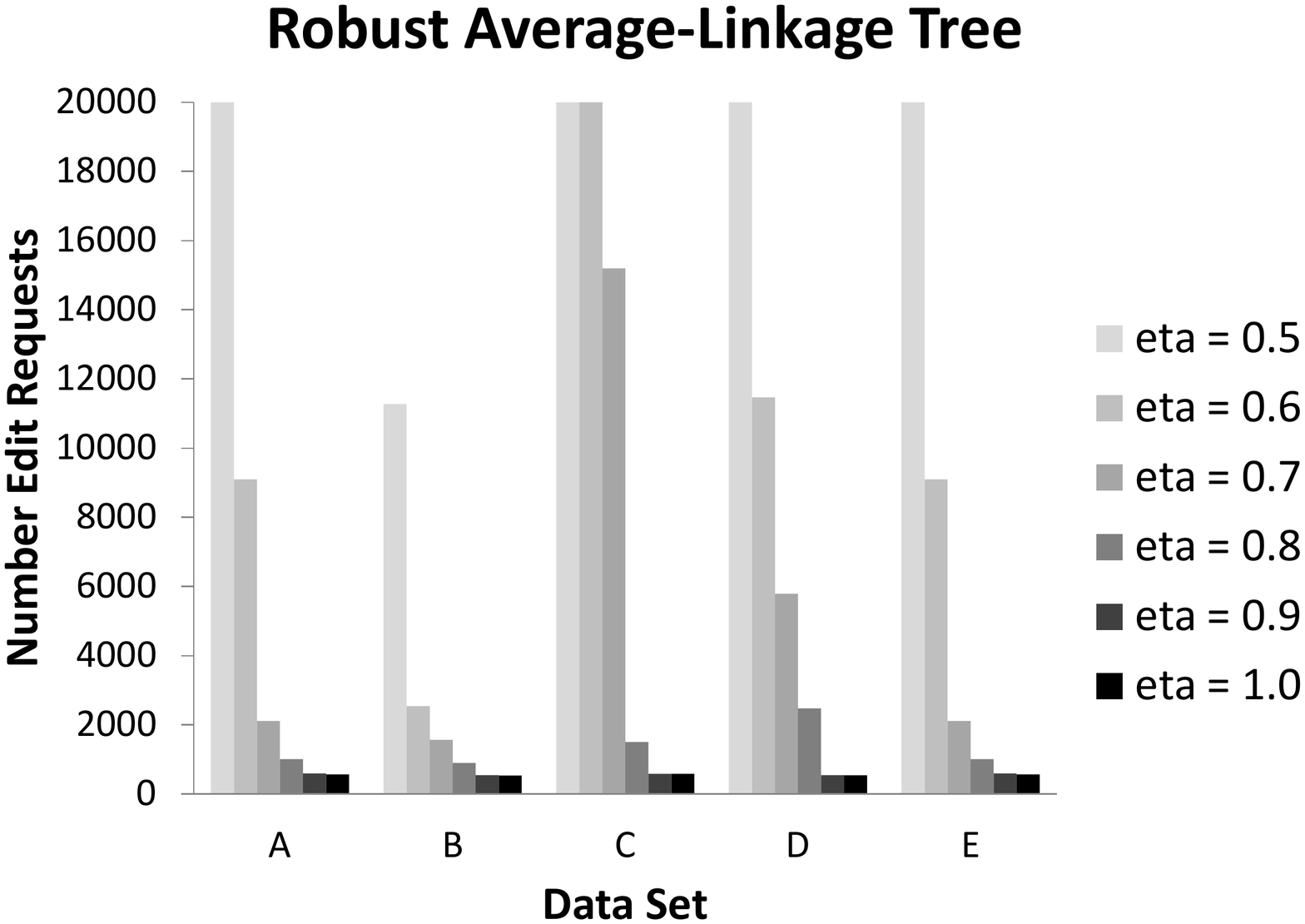}}
  \caption{Results in the unrestricted-merge model using a robust average-linkage tree.  Results presented for unpruned data sets.}
  \label{fig:tree-construction-experiments}
\end{figure}

\section{Discussion}

In this work we motivated and studied a new framework and algorithms for interactive clustering.  Our framework models practical constraints on the algorithms: we start with an initial clustering that we cannot modify arbitrarily, and are only allowed to make local edits consistent with user requests.  In this setting, we develop several simple, yet effective algorithms under different assumptions about the nature of the edit requests and the structure of the data.  We present theoretical analysis that shows that our algorithms converge to the target clustering after a small number of edit requests.  We also present experimental evidence that shows that our algorithms work well in practice.

Several directions come out of this work. It would be interesting to relax the condition on $\eta$ in the $\eta$-merge model, and the assumption about the request sequences in the unrestricted-merge model. It is important to study additional properties of an interactive clustering algorithm. In particular, it is often desirable that the algorithm never increase the error of the current clustering. Our algorithms in Figures~\ref{fig:split-average-linkage}, ~\ref{fig:merge-average-linkage-cc} and ~\ref{fig:merge-average-linkage-unrestricted} have this property, but the algorithm in Figure~\ref{fig:merge-average-linkage-relaxed} does not.  

\eat{
In this work we motivated and studied new models and algorithms for interactive clustering.  Several directions come out of this work.  It would be interesting to relax the condition on $\eta$ in the $\eta$-merge model, and the assumption about the request sequences in the unrestricted-merge model.  It is important to study additional properties of an interactive clustering algorithm.  In particular, it is often desirable that the algorithm never increase the error of the current clustering.  Our algorithms in Figures~\ref{fig:split-average-linkage}, ~\ref{fig:merge-average-linkage-cc} and ~\ref{fig:merge-average-linkage-unrestricted} have this property, but the algorithm in Figure~\ref{fig:merge-average-linkage-relaxed} does not.  
\eat{It is also relevant to study more generalized notions of clustering error.  In particular, we can define a small set of intuitive properties of a clustering error, and then prove that the algorithms in Figure~\ref{fig:split-average-linkage} and Figure~\ref{fig:merge-average-linkage-relaxed} are correct for any clustering error that satisfies these properties; a similar analysis is possible for the unrestricted-merge model as well.}
}
\eat{
In this work we motivated and studied new models for interactive clustering algorithms.  We identified important properties of such algorithms, and provided efficient algorithms for data sets satisfying natural stability conditions. We also demonstrated the effectiveness of our algorithms on real data. Several directions come out of this work. It would be interesting to relax the condition on $\eta$ that is required by our algorithms in the $\eta$-merge model. We would also like to develop algorithms in the unrestricted-merge model for arbitrary request sequences.

It is also important to study additional properties of an interactive clustering algorithm that are desirable in practice.  For instance, one might require that the algorithm never output an intermediate clustering with error larger than the error of the initial clustering. Our algorithms in Figures~\ref{fig:split-average-linkage}, ~\ref{fig:merge-average-linkage-cc} and ~\ref{fig:merge-average-linkage-unrestricted} have this property, but the algorithm in Figure~\ref{fig:merge-average-linkage-relaxed} does not. Finally, a clustering algorithm may process the split/merge requests in batches, collected over a period of time. One can potentially design algorithms in this batch setting with better run-time bounds.
}

\bibliography{report}

\appendix

\section{Complete Experimental Results}

The following figures show the complete experimental results for all the algorithms.  Figure~\ref{fig:experimental-results-1-app} and Figure~\ref{fig:experimental-results-2-app} give the results in the $\eta$-merge model.   Figure~\ref{fig:experimental-results-cc-1-app} and Figure~\ref{fig:experimental-results-cc-2-app} give the results in the $\eta$-merge model for the algorithms in Figure~\ref{fig:split-average-linkage} and Figure~\ref{fig:merge-average-linkage-cc} (for the correlation-clustering objective).  Figure~\ref{fig:experimental-results-unrestricted-1-app} and Figure~\ref{fig:experimental-results-unrestricted-2-app} give the results in the unrestricted-merge model. 

\begin{figure}[htbp]
  \centering
  \subfigure{\includegraphics[width=50mm,height=40mm]{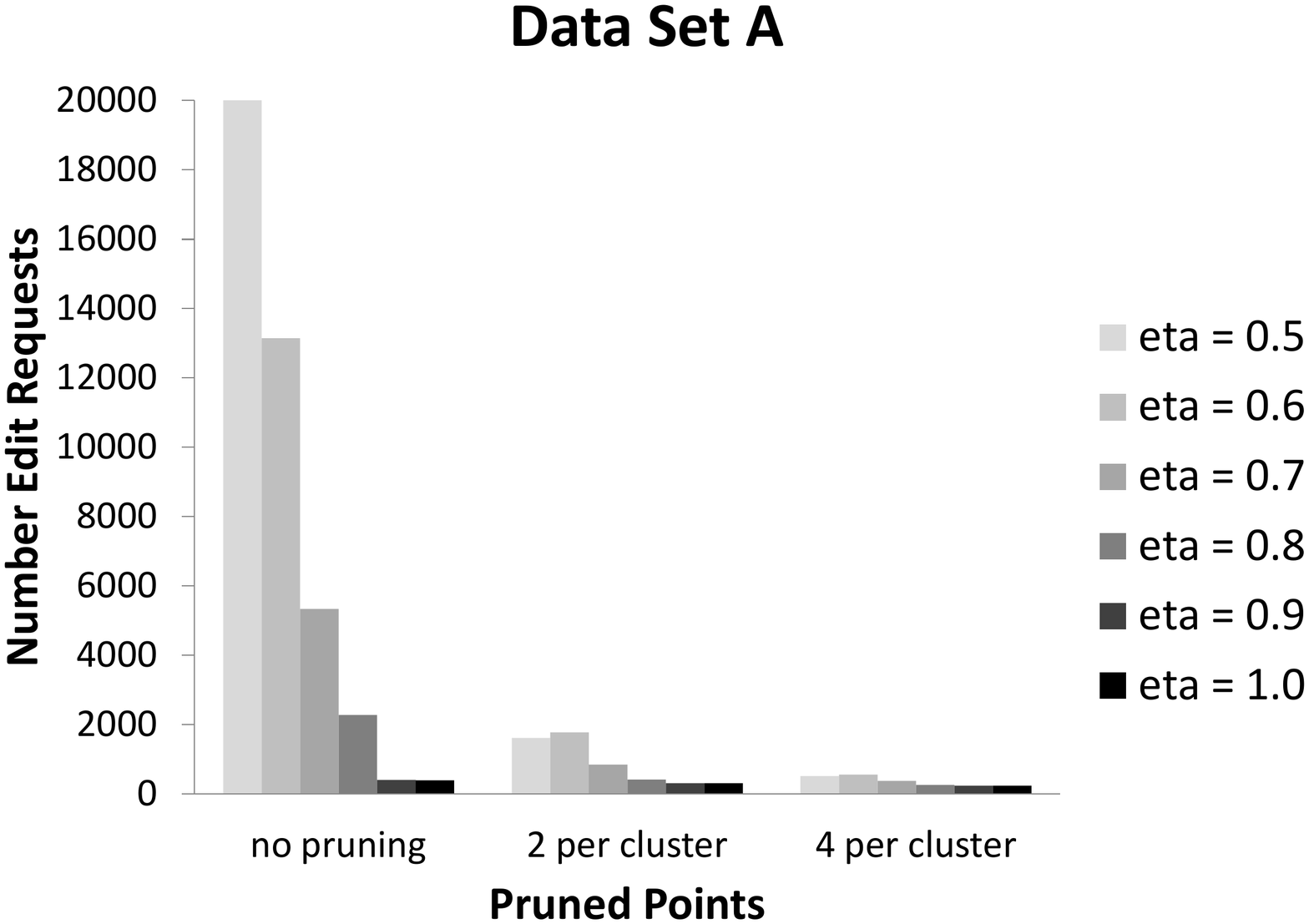}}
  \subfigure{\includegraphics[width=50mm,height=40mm]{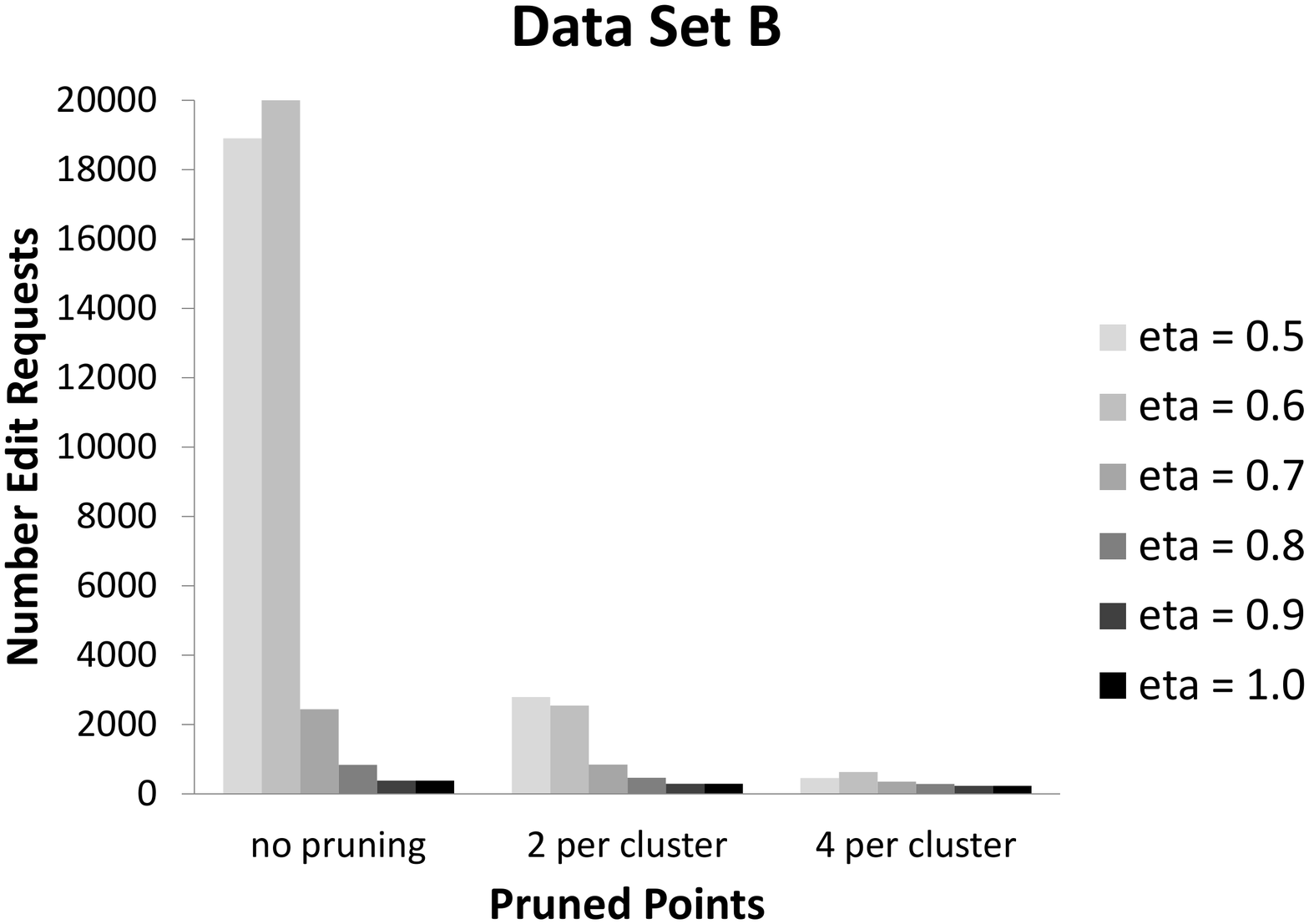}}
\subfigure{\includegraphics[width=50mm,height=40mm]{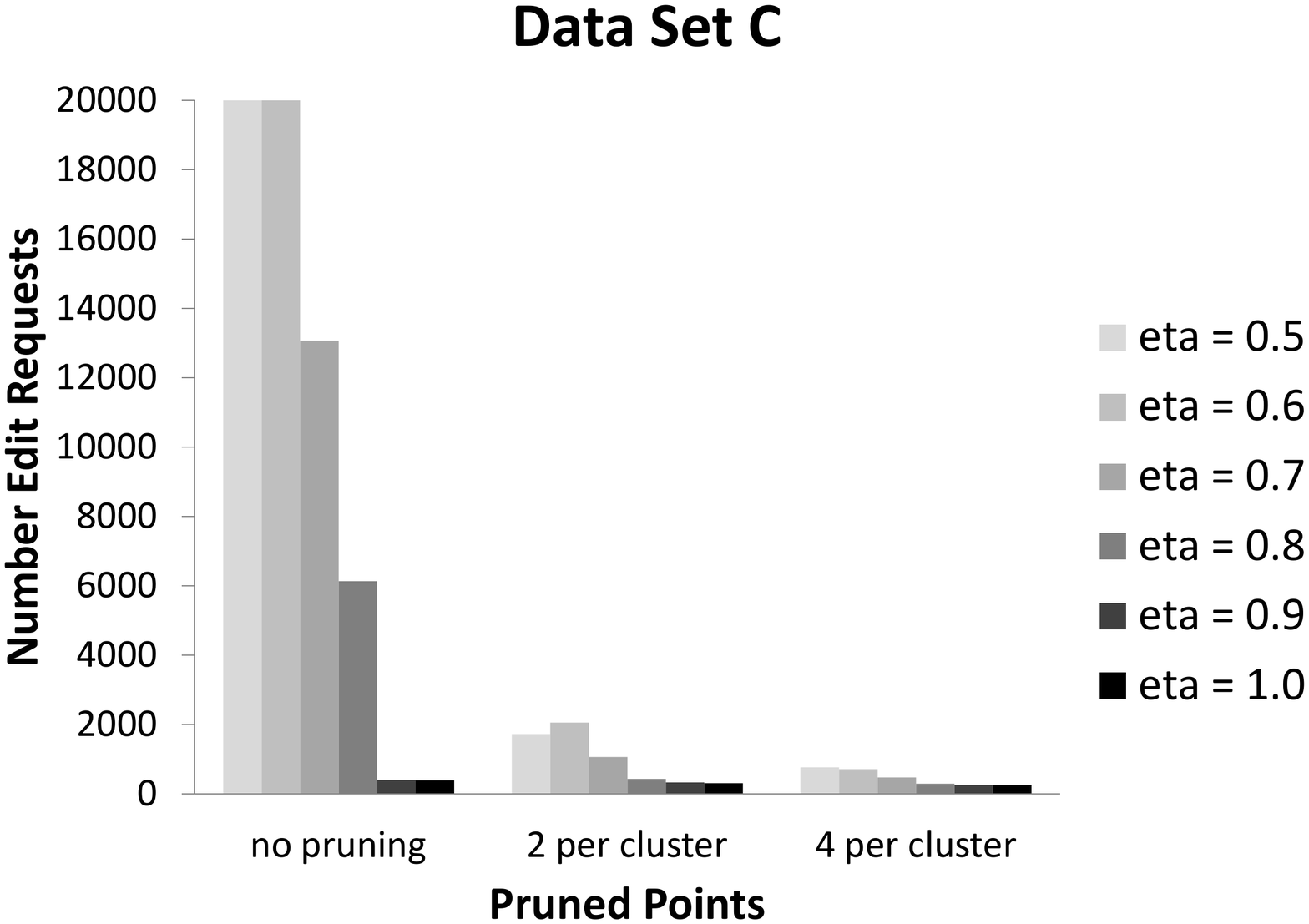}}
\caption{Results in the $\eta$-merge model on datasets A, B and C.}
\label{fig:experimental-results-1-app}
\end{figure}
\begin{figure}[htbp]
\centering
\subfigure{\includegraphics[width=50mm,height=40mm]{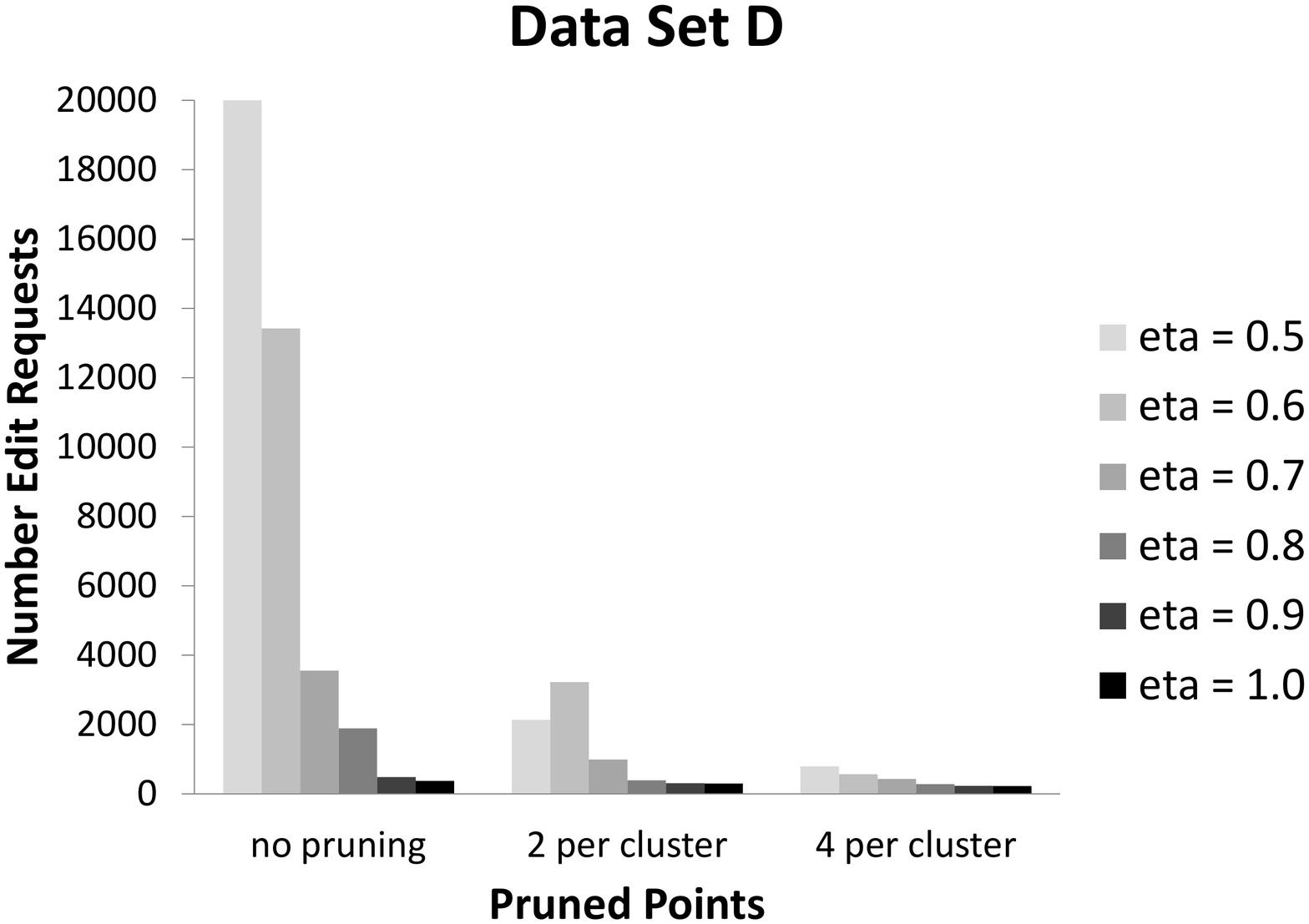}}
\subfigure{\includegraphics[width=50mm,height=40mm]{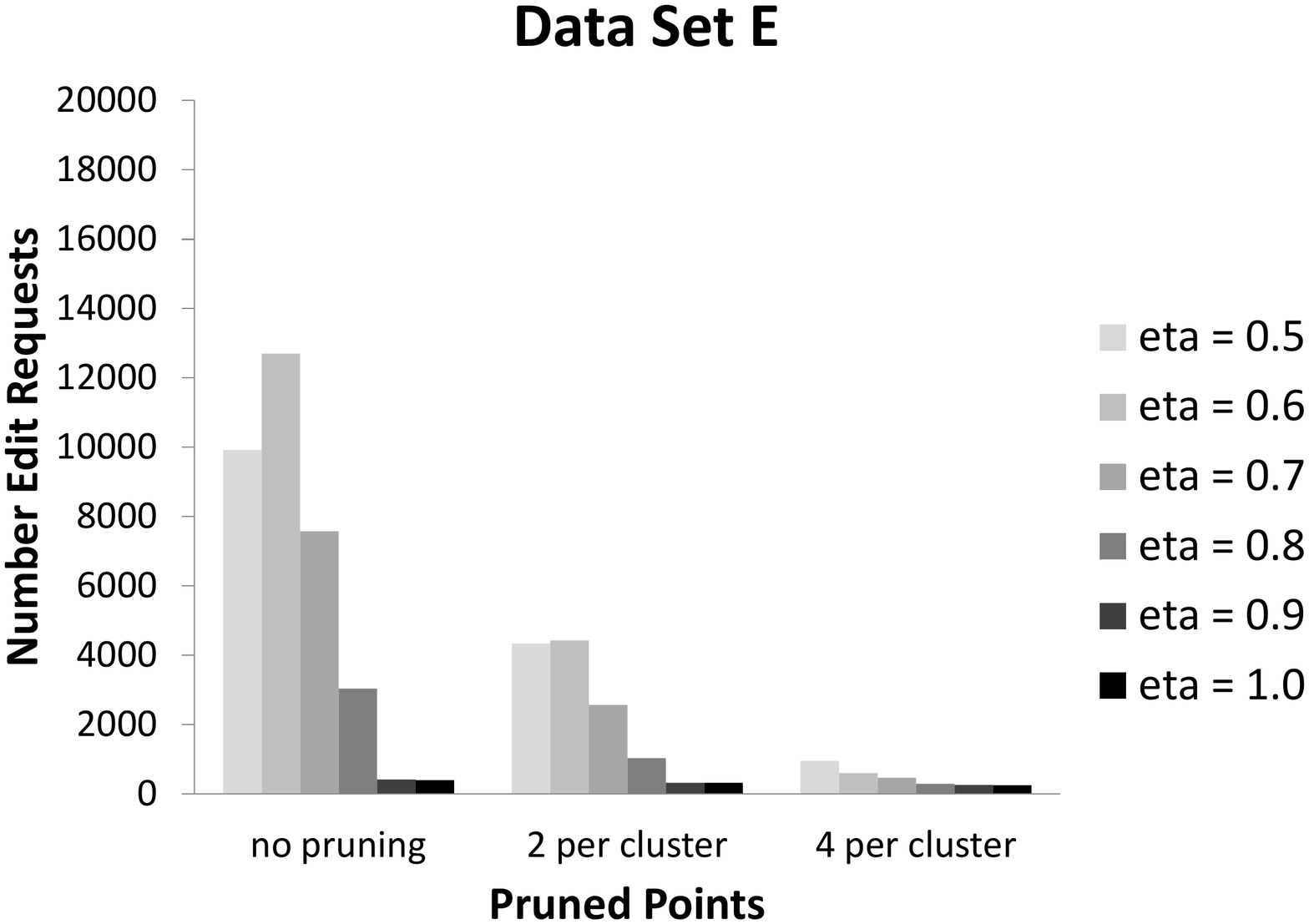}}
  \caption{Results in the $\eta$-merge model on datasets D and E.}
  \label{fig:experimental-results-2-app}
\end{figure}
\begin{figure}
\centering
 \subfigure{\includegraphics[width=50mm,height=40mm]{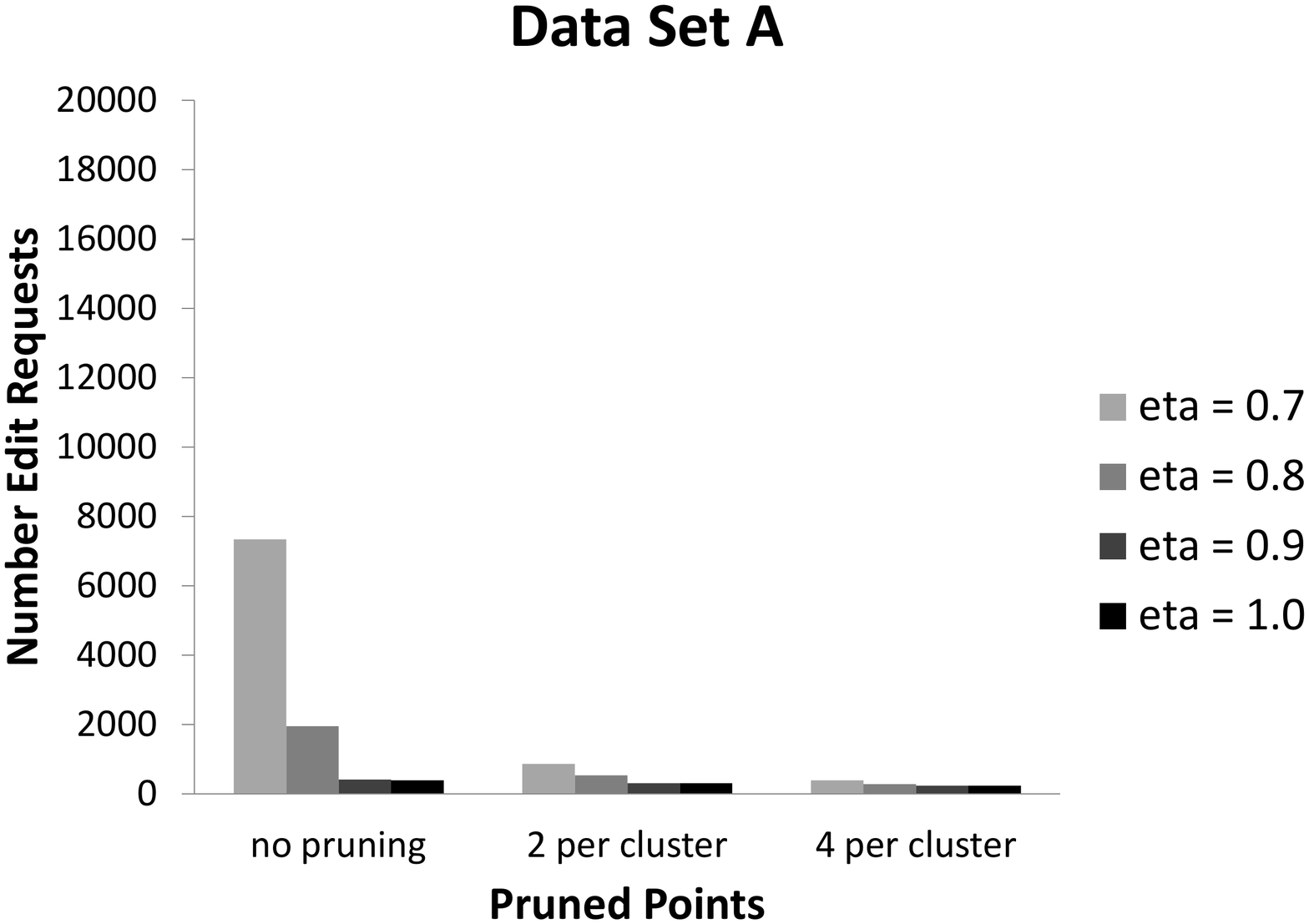}}
  \subfigure{\includegraphics[width=50mm,height=40mm]{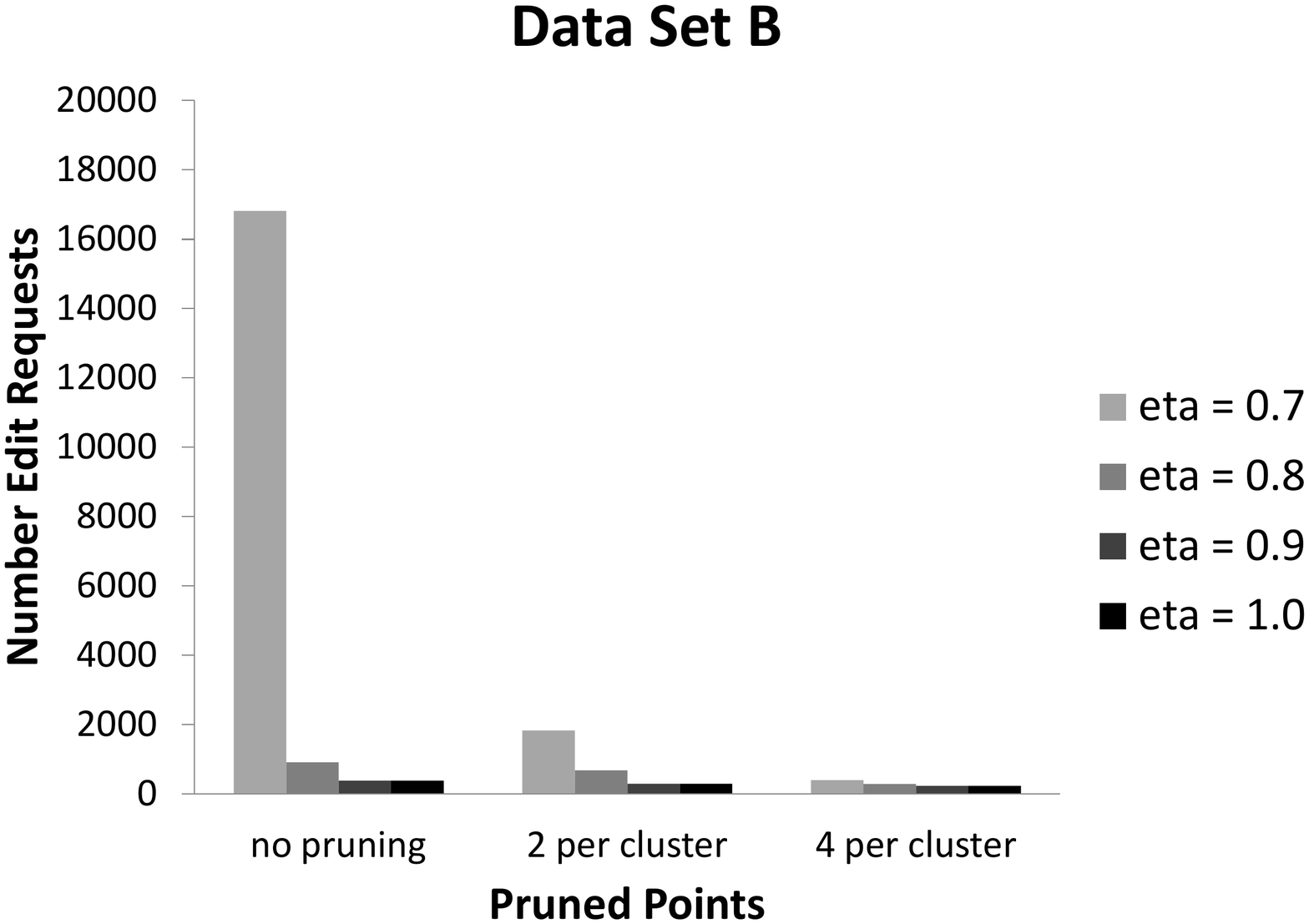}}
\subfigure{\includegraphics[width=50mm,height=40mm]{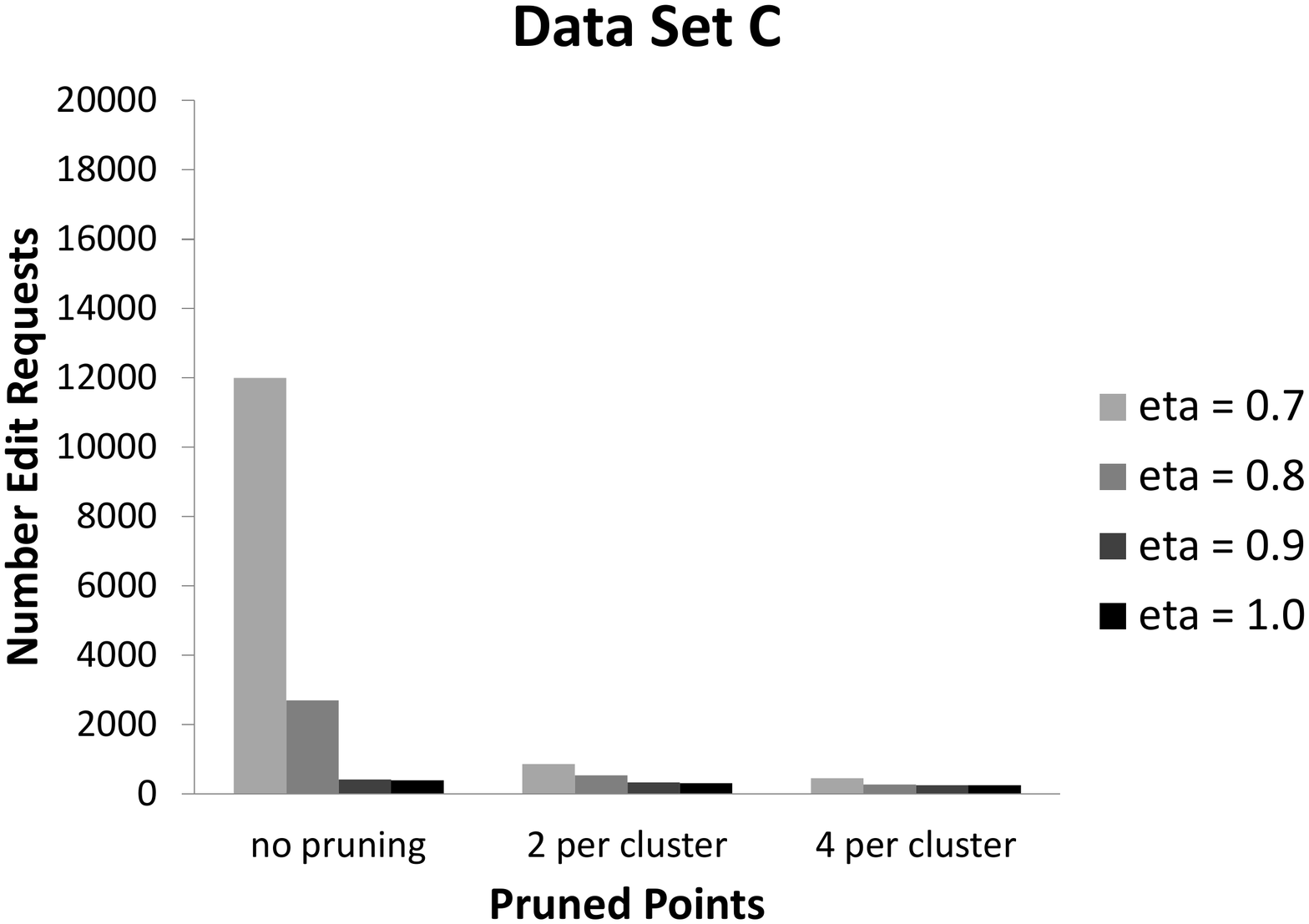}}
  \caption{Results in the $\eta$-merge model for algorithms for the correlation-clustering objective on datasets A, B and C.}
  \label{fig:experimental-results-cc-1-app}
\end{figure}
\begin{figure}
\centering
 \subfigure{\includegraphics[width=50mm,height=40mm]{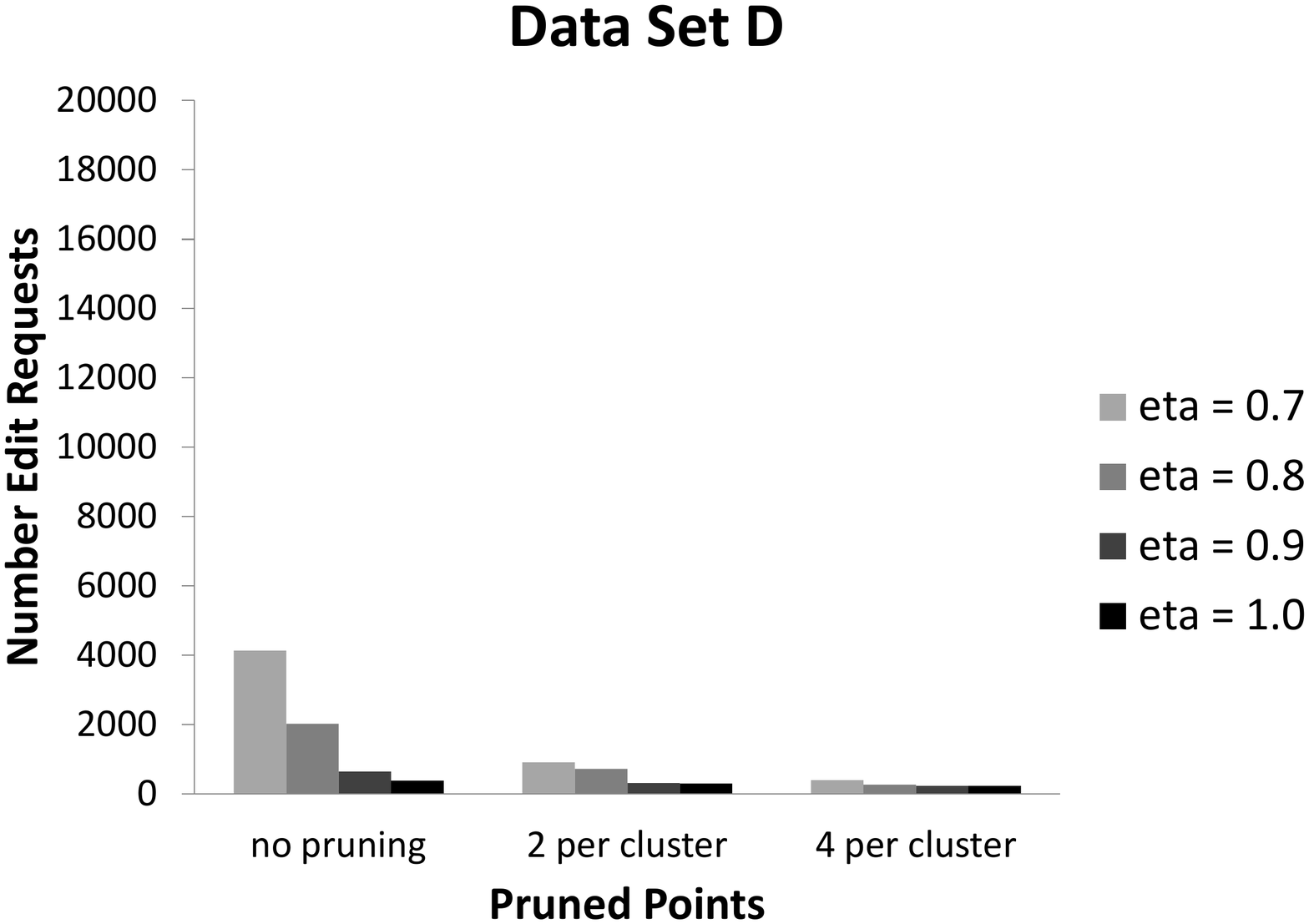}}
  \subfigure{\includegraphics[width=50mm,height=40mm]{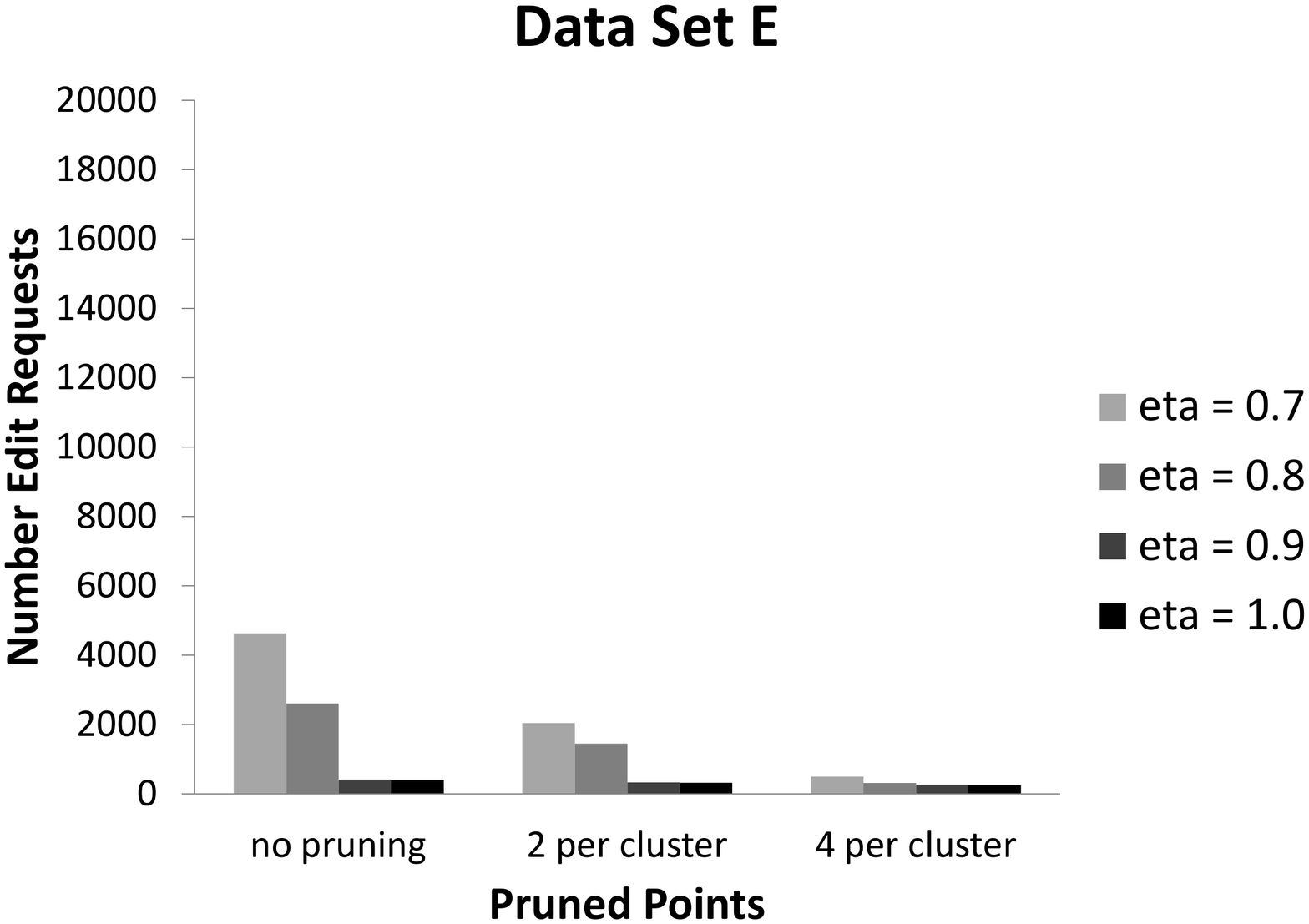}}
  \caption{Results in the $\eta$-merge model for algorithms for the correlation-clustering objective on datasets D and E.}
  \label{fig:experimental-results-cc-2-app}
\end{figure}
\begin{figure}[htbp]
  \centering
\subfigure{\includegraphics[width=50mm,height=40mm]{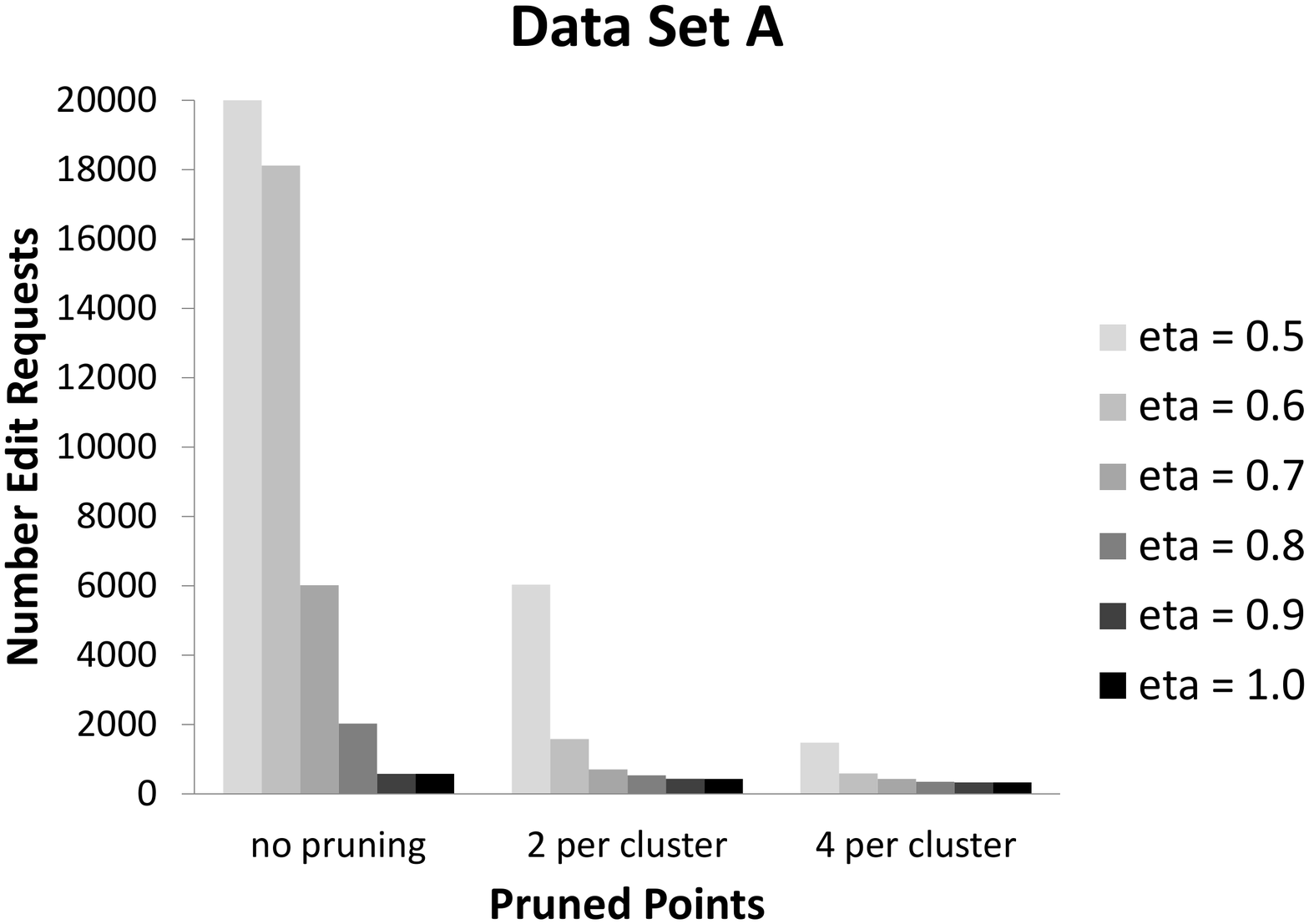}}
\subfigure{\includegraphics[width=50mm,height=40mm]{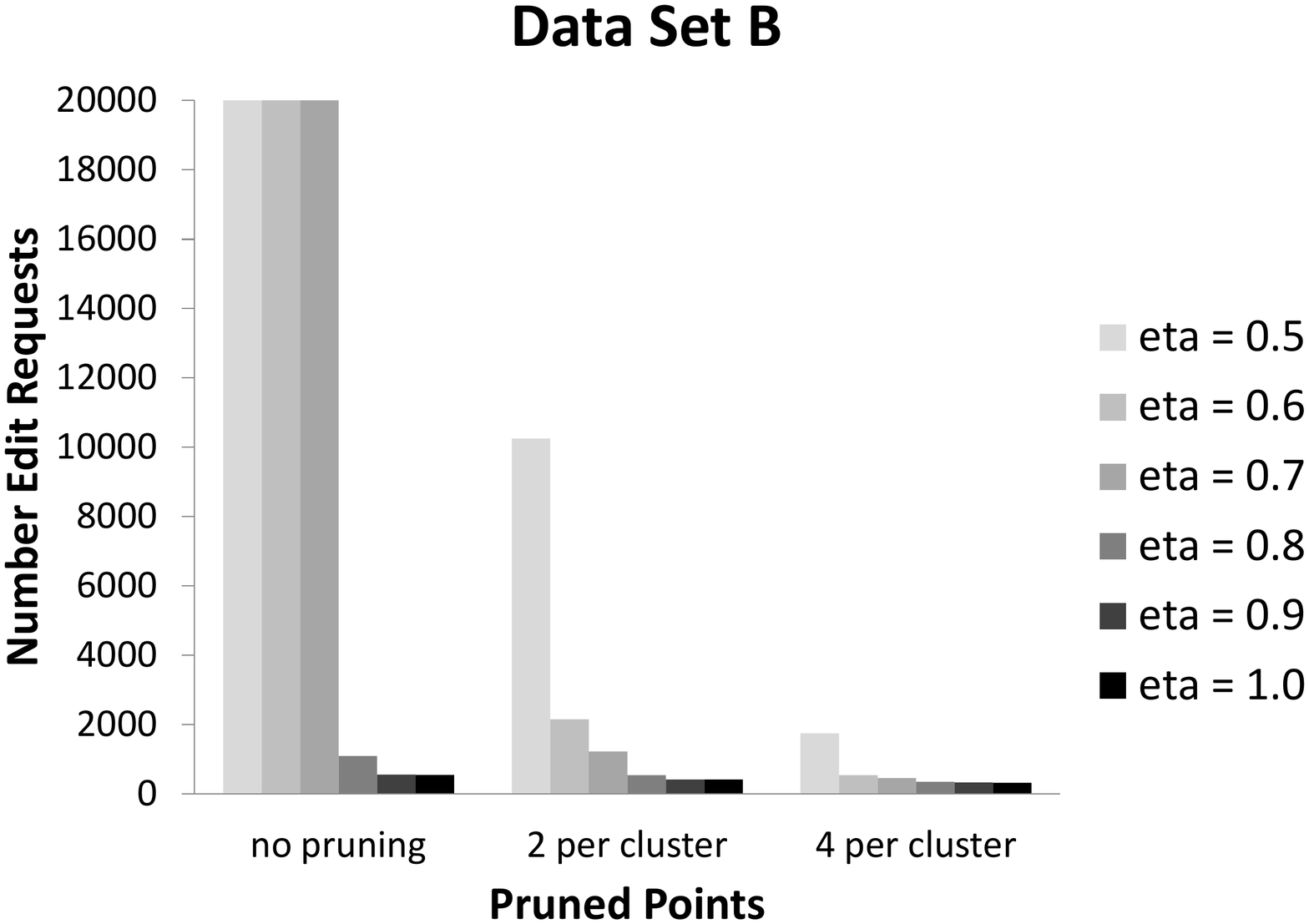}}
\subfigure{\includegraphics[width=50mm,height=40mm]{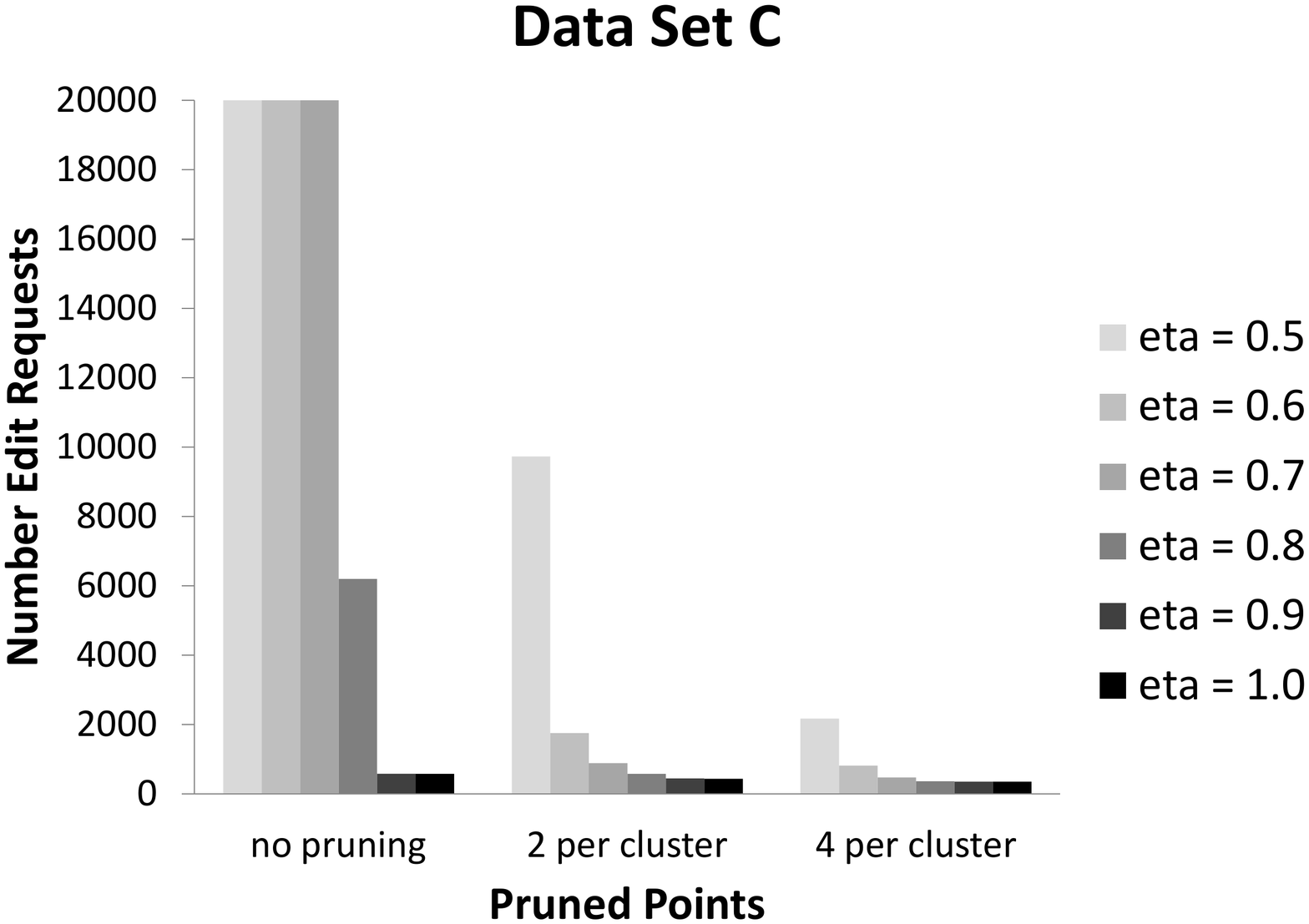}}
\caption{Results in the unrestricted-merge model on datasets A, B and C.}
  \label{fig:experimental-results-unrestricted-1-app}
\end{figure}
\begin{figure}
\centering
\subfigure{\includegraphics[width=50mm,height=40mm]{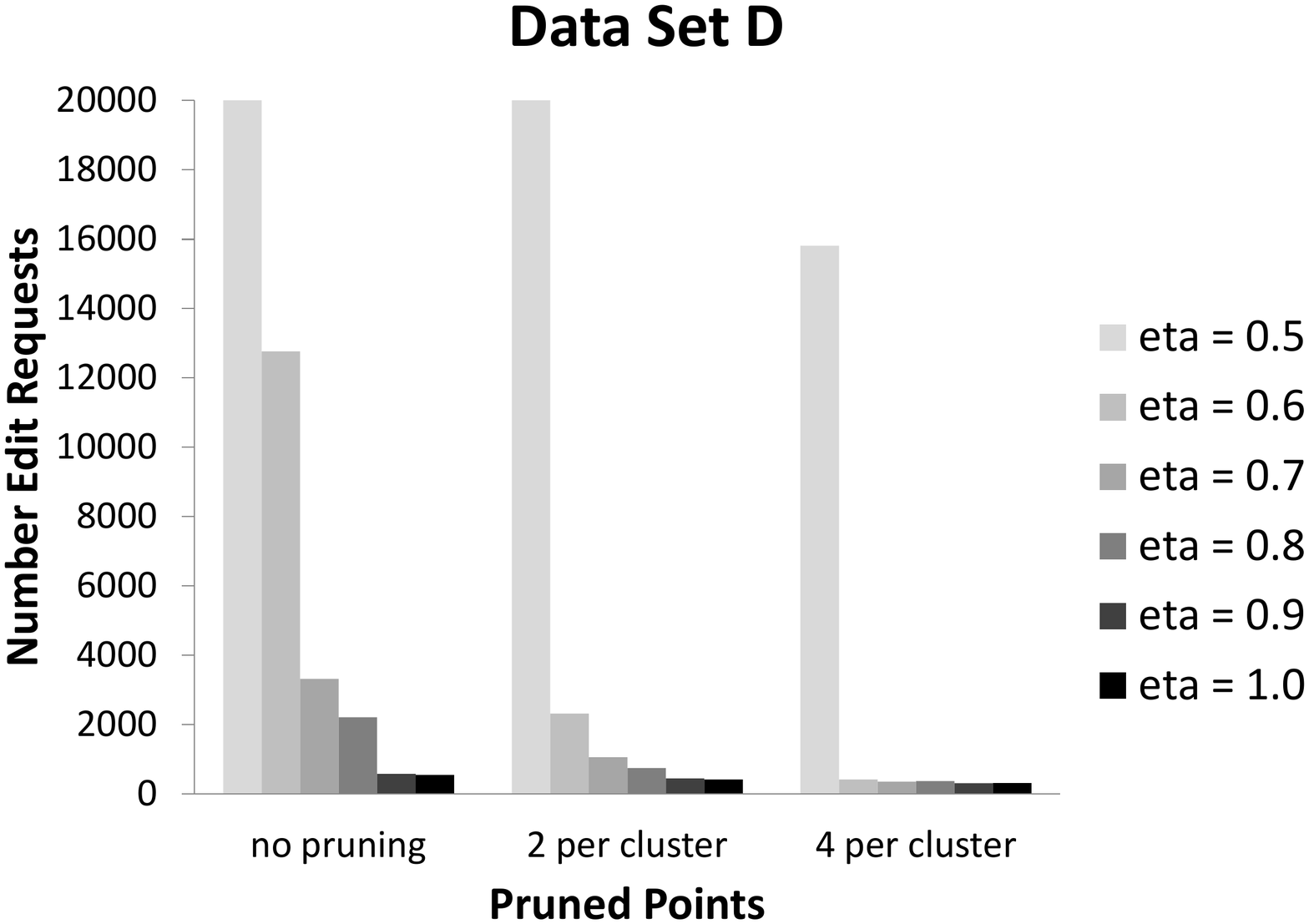}}
\subfigure{\includegraphics[width=50mm,height=40mm]{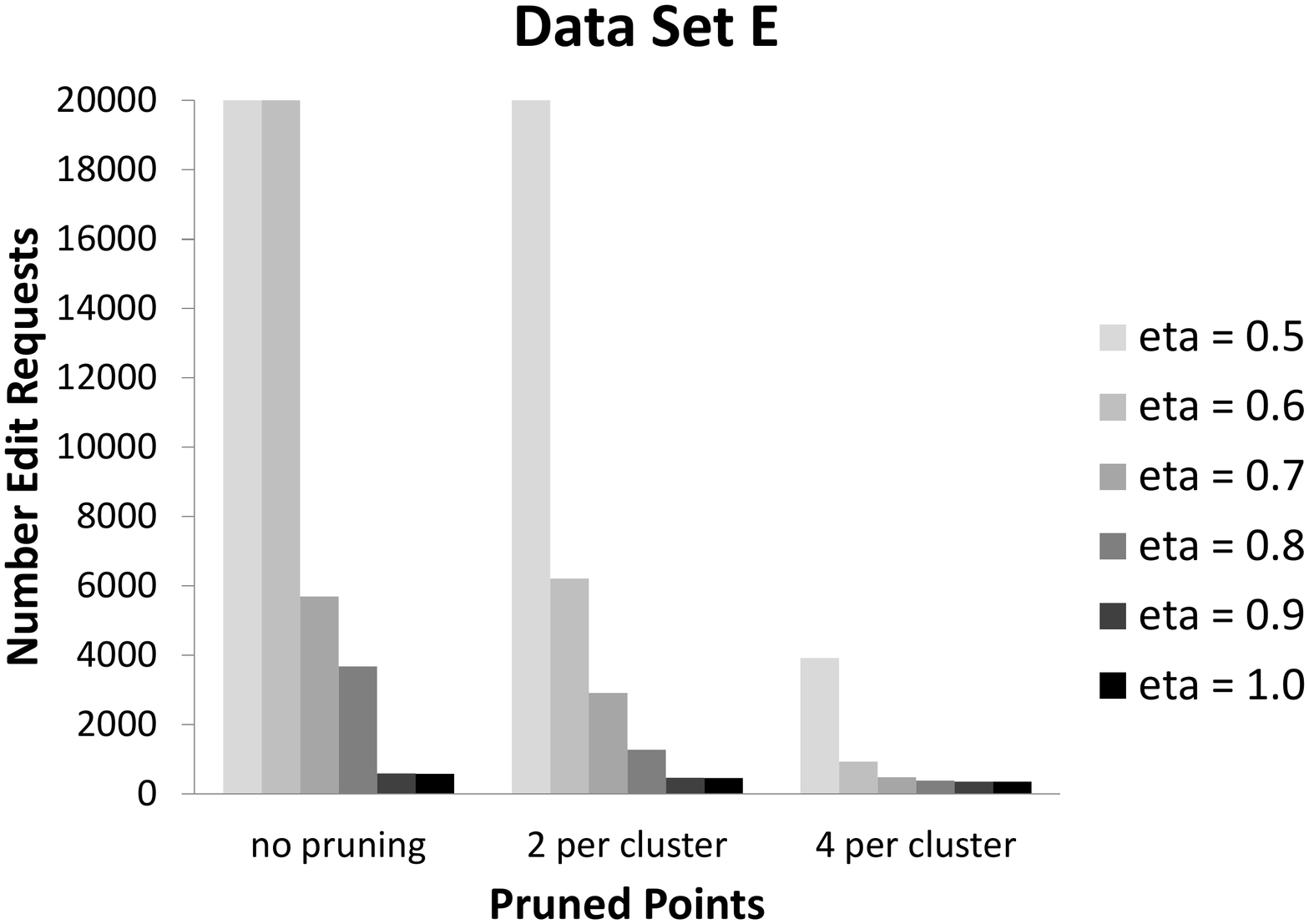}}
  \caption{Results in the unrestricted-merge model on datasets D and E.}
  \label{fig:experimental-results-unrestricted-2-app}
\end{figure}

\end{document}